\newtheorem{theorem}{Theorem}[section]
\newtheorem{corollary}{Corollary}[section]
\newtheorem{lemma}[theorem]{Lemma}
\newtheorem{proposition}[theorem]{Proposition}
\theoremstyle{definition}
\newtheorem{definition}{Definition}[section]
\newtheorem{example}{Example}[section]
\newenvironment{itemize*}
{\begin{itemize}
    \setlength{\itemsep}{1pt}
    \setlength{\parskip}{1pt}}
{\end{itemize}}
\newenvironment{enumerate*}
{\begin{enumerate}
    \setlength{\itemsep}{1pt}
    \setlength{\parskip}{1pt}}
{\end{enumerate}}
\DeclareMathOperator*{\E}{\mathbf{E}}
\let\Pr\relax
\DeclareMathOperator*{\Pr}{\mathbf{Pr}}
\newcommand{\mathsc}[1]{{\normalfont\textsc{#1}}}
\newcommand{\opt}{\mathsc{Opt}}
\newcommand{\cs}{\mathsc{CS}} 
\newcommand{\BBM}{\mathsc{BBM}}
\newcommand{\rank}{\mathsc{Rank}}
\newcommand{\TB}{\mathsc{Rank}}
\newcommand{\cut}{\mathsc{Trunc}}
\newcommand{\ind}{\mathbb{1}}
\newcommand{\alg}{\mathsc{Alg}}
\newcommand{\core}{\mathsc{Core}}
\newcommand{\tail}{\mathsc{Tail}}
\newcommand{\M}{\mathcal{M}} 
\newcommand{\MS}{\mathcal{M}_{\mathcal{S}}} 
\newcommand{\Rm}{\R^{M}}
\newcommand{\vc}{v^{\circ}}
\renewcommand{\S}{\mathcal{S}} 
\newcommand{\D}{\mathcal{D}}
\newcommand{\W}{\mathcal{W}}
\newcommand{\R}{\mathcal{R}}
\newcommand{\K}{\mathcal{K}}
\newcommand{\Z}{\mathcal{Z}}
\newcommand{\latin}[1]{#1}
\begin{document}

\title{The Limits of an Information Intermediary in Auction Design}

\makeatletter
\def\@fnsymbol#1{\ensuremath{\ifcase#1\or \dagger\or \ddagger\or
   \mathsection\or \mathparagraph\or \|\or **\or \dagger\dagger
   \or \ddagger\ddagger \else\@ctrerr\fi}}
\makeatother
\newcommand*\samethanks[1][\value{footnote}]{\footnotemark[#1]}

\author{Reza Alijani\thanks{Department of Computer Science, Duke University. Email: \texttt{\{alijani,kamesh,knwang\}@cs.duke.edu}} \and Siddhartha Banerjee\thanks{School of Operations Research and Information Engineering, Cornell University. Email: \texttt{sbanerjee@cornell.edu}.} \and Kamesh Munagala\samethanks[1] \and Kangning Wang\samethanks[1]}

\date{}

\maketitle

\begin{abstract}
We study the limits of an information intermediary in the classical Bayesian auction, where a revenue-maximizing seller sells one item to $n$ buyers with independent private values. In addition, we have an intermediary who knows the buyers' private values, and can map these to a public signal so as to increase consumer surplus. This model generalizes the single-buyer setting proposed by Bergemann, Brooks, and Morris, who present a signaling scheme that raises the optimal consumer surplus, by guaranteeing that the item is always sold and the seller gets the same revenue as without signaling. Our work aims to understand how this result ports to the setting with multiple buyers.

We likewise define the benchmark for the optimal consumer surplus: one where the auction is efficient (i.e., the item is always sold to the highest-valued buyer) and the revenue of the seller is unchanged. We show that no signaling scheme can guarantee this benchmark even for $n=2$ buyers with $2$-point valuation distributions. Indeed, no signaling scheme can be efficient while preserving any non-trivial fraction of the original consumer surplus, and no signaling scheme can guarantee consumer surplus better than a factor of $\frac{1}{2}$ compared to the benchmark. These impossibility results are existential (beyond computational), and provide a sharp separation between the single and multi-buyer settings.

In light of this impossibility, we develop signaling schemes with good approximation guarantees to the benchmark. Our main technical result is an $O(1)$-approximation for i.i.d. regular buyers, via signaling schemes that are conceptually simple and computable in polynomial time. We also present an extension to the case of general independent distributions.
\end{abstract}

\section{Introduction}
Consider a seller selling an item to a buyer, whose private value $V$ is drawn from some known distribution~$\D$. 
The overall \emph{social welfare} is maximized when the seller sells the item for $\$ 0$, assuming the seller has no cost for the item. In contrast, to maximize the (average) revenue, the seller's optimal strategy is to sell at a revenue-maximizing price, which may lead to welfare loss due to the item going unsold.

More generally, in a single-item Bayesian auction with $n$ buyers with independent private valuations, a welfare-optimal mechanism is the second-price (or VCG) auction, which always gives the item to the highest-valued buyer. In contrast, even when the buyers have i.i.d. regular valuations, the revenue-optimal mechanism was shown by \citet{myerson} to be a second-price auction with a \emph{reserve price}; this may lead to the item going unsold. The situation is more complex for non-regular distributions, and/or non-i.i.d.~buyers, where the revenue-optimal mechanism may in addition sell the item to a buyer with lower value than the highest, leading to additional welfare loss. 
We visualize this via a \emph{revenue-CS trade-off} diagram (\cref{fig1b}), where, for different mechanisms and value distributions, we plot expected consumer-surplus (\latin{i.e.}, value minus payment), denoted $\cs$, versus expected seller-revenue, denoted by $\R$. Any welfare-maximizing mechanism including VCG (point $V$) lies on the $\R + \cs = \W^*$ line. In contrast, Myerson's mechanism (point $M$) has revenue $\R^M$ greater than that under VCG, but can also lie below the maximum-welfare line.

\paragraph{Information Intermediary.} Now consider the same setting, but with an additional \emph{information intermediary}: a third-party who knows the {\em true} buyer values $\vec{V}=(V_1,V_2,\ldots,V_n)$ and can provide a ``signal''  or side-information to the seller and the buyers. 
Both the signal and signaling scheme are common knowledge to all agents (buyers and seller), who can thus use Bayes' rule to update the prior over valuations given the signal.  The signal ``re-shapes'' the joint prior over the buyer valuations in a \emph{Bayes-plausible} manner (\latin{i.e.}, such that the posterior averaged over signals equals the prior). Though the intermediary can modulate information, it does not control the mechanism, which still resides with the seller. Such a setting is motivated by ad exchanges, where the platform (or intermediary) acts only as a clearinghouse, and does not itself run a mechanism. Therefore, given the signal, the seller then proposes the revenue-maximizing mechanism, and buyers bid optimally, under the posterior distribution. We illustrate this in~\cref{fig1a}.

Formally, consider a setting where $n$ buyers have independent private valuations $\vec{V}$ drawn from a distribution $\D=\D_1\times\D_2\times\cdots\times\D_n$. The valuations $\vec{V}$ are known to the intermediary, who maps them to a signal $\sigma$ via a public signaling scheme $\Z$. Given $\sigma$, all agents compute the posterior $\S$ over buyer values; note these can now be correlated.  The seller then proposes a mechanism $\MS$ (comprising allocation and payment rules) which maximizes its expected revenue assuming buyers act in a manner which is \latin{ex-post} incentive-compatible (IC) and interim individually-rational (IR) given $\S$. If $\sigma$ is such that $\S=\D$, then $\MS$ is Myerson's auction (point $M$ in~\cref{fig1b}); on the other hand, if the signal fully reveals $\vec{V}$, then the seller can extract full surplus (\latin{i.e.}, get revenue $\W^*$, point $A$ in~\cref{fig1b}). Moreover, the seller gets revenue at least $\Rm$ under any signaling scheme, as she can always ignore the signal (see~\cref{sec:prelim}). Thus any signaling scheme $\Z$ must give a point in the shaded triangle with consumer surplus $\cs(\Z)$ and revenue $\R(\Z)$, and the maximum possible surplus $\opt$ is achieved at point $O$ in~\cref{fig1b}.
Now we can ask:
\begin{quote}
\it What revenue-CS trade-offs can an information intermediary achieve via signaling? More specifically, what is the  maximum possible consumer surplus that is achievable? 
\end{quote}

In the single-buyer case, the seminal work of \citet*{bergemann2015limits} completely answer these questions by showing that the entire shaded region is always achievable. In particular, the point $O$ is met by a simple signaling scheme where the revenue is {\em exactly} $\Rm$, and the item is \emph{always} sold thus the mechanism is efficient.

In this work we study the effectiveness of an information intermediary in a multi-buyer (\latin{i.e.}, $n \geq 2$ buyers) Bayesian auction. In brief, we expose a sharp separation between the single and multi-buyer settings, as in the latter, no signaling scheme can guarantee more than a constant fraction of the optimal consumer surplus ($\opt$ in~\cref{fig1b}). 
On the positive side, 
we obtain a novel yet simple signaling scheme with strong approximation guarantees for a wide range of settings.

While our main focus is on theoretical results, our work has broader practical relevance.
Consider an agency like the FCC with privileged information about bidders in a spectrum auction, or a bid optimizer working for multiple competing clients in an ad-exchange.
These intermediaries have private information about the buyers, and can selectively release it to influence the auction. For instance, in an ad exchange, the platform running the exchange (intermediary) typically uses machine learning and advertiser features to infer true valuations. However, the pricing rules are decided by the publishers (seller) and not the exchange. For its own long term viability, the platform clearly has incentives to make both parties -- publisher and advertisers -- as happy as possible, and would therefore like to release information selectively to the publisher in order to maximize advertiser happiness (consumer surplus) while keeping publisher happiness (revenue) at least what it is without its presence. In effect, we use the alternate view of the market segmentation problem in~\citep{bergemann2015limits} as a special case of a signaling problem where a more informed intermediary works for the benefit of the buyers.

Our work also fits in a broader space of multi-criteria optimization  where a third-party platform or government agency can release information about agents to a principal in charge of an activity such as admissions or hiring, so as to trade-off the principal's objective such as maximizing quality of hire, with a societal objective such as fairness or diversity. 

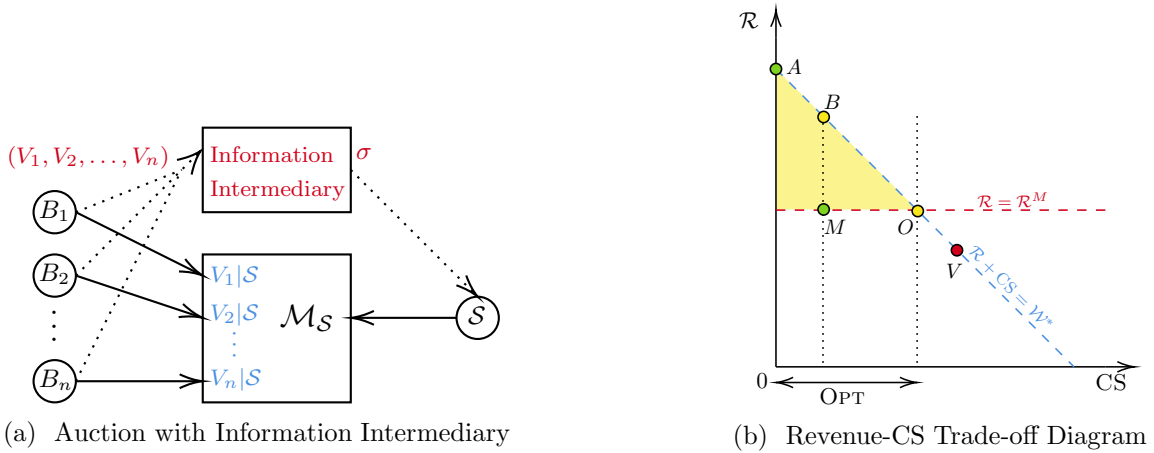
\begin{figure*}[!t]
\begin{minipage}{0.5\columnwidth}
  \centering
  \vspace{1.5cm}
 \resizebox{0.85\columnwidth}{!}{\tikzset{every picture/.style={line width=0.75pt}} 

\begin{tikzpicture}[x=0.75pt,y=0.75pt,yscale=-1,xscale=1]

\draw   (30,50) .. controls (30,44.48) and (34.48,40) .. (40,40) .. controls (45.52,40) and (50,44.48) .. (50,50) .. controls (50,55.52) and (45.52,60) .. (40,60) .. controls (34.48,60) and (30,55.52) .. (30,50) -- cycle ;
\draw   (30,80) .. controls (30,74.48) and (34.48,70) .. (40,70) .. controls (45.52,70) and (50,74.48) .. (50,80) .. controls (50,85.52) and (45.52,90) .. (40,90) .. controls (34.48,90) and (30,85.52) .. (30,80) -- cycle ;
\draw   (30,130) .. controls (30,124.48) and (34.48,120) .. (40,120) .. controls (45.52,120) and (50,124.48) .. (50,130) .. controls (50,135.52) and (45.52,140) .. (40,140) .. controls (34.48,140) and (30,135.52) .. (30,130) -- cycle ;
\draw   (230,100) .. controls (230,94.48) and (234.48,90) .. (240,90) .. controls (245.52,90) and (250,94.48) .. (250,100) .. controls (250,105.52) and (245.52,110) .. (240,110) .. controls (234.48,110) and (230,105.52) .. (230,100) -- cycle ;
\draw  [color={rgb, 255:red, 0; green, 0; blue, 0 }  ,draw opacity=1 ] (110,10) -- (180,10) -- (180,50) -- (110,50) -- cycle ;
\draw  [dash pattern={on 0.84pt off 2.51pt}]  (50,50) -- (100,30) ;
\draw  [dash pattern={on 0.84pt off 2.51pt}]  (50,80) -- (108.59,21.41) ;
\draw [shift={(109,21)}, rotate = 495] [color={rgb, 255:red, 0; green, 0; blue, 0 }  ][line width=0.75]    (10.93,-4.9) .. controls (6.95,-2.3) and (3.31,-0.67) .. (0,0) .. controls (3.31,0.67) and (6.95,2.3) .. (10.93,4.9)   ;
\draw  [dash pattern={on 0.84pt off 2.51pt}]  (50,130) -- (100,30) ;
\draw  [dash pattern={on 0.84pt off 2.51pt}]  (180,30) -- (238.59,88.59) ;
\draw [shift={(240,90)}, rotate = 225] [color={rgb, 255:red, 0; green, 0; blue, 0 }  ][line width=0.75]    (10.93,-3.29) .. controls (6.95,-1.4) and (3.31,-0.3) .. (0,0) .. controls (3.31,0.3) and (6.95,1.4) .. (10.93,3.29)   ;
\draw   (110,70) -- (180,70) -- (180,140) -- (110,140) -- cycle ;
\draw    (50,50) -- (108.21,79.11) ;
\draw [shift={(110,80)}, rotate = 206.57] [color={rgb, 255:red, 0; green, 0; blue, 0 }  ][line width=0.75]    (10.93,-3.29) .. controls (6.95,-1.4) and (3.31,-0.3) .. (0,0) .. controls (3.31,0.3) and (6.95,1.4) .. (10.93,3.29)   ;
\draw    (50,80) -- (108.1,99.37) ;
\draw [shift={(110,100)}, rotate = 198.43] [color={rgb, 255:red, 0; green, 0; blue, 0 }  ][line width=0.75]    (10.93,-3.29) .. controls (6.95,-1.4) and (3.31,-0.3) .. (0,0) .. controls (3.31,0.3) and (6.95,1.4) .. (10.93,3.29)   ;
\draw    (50,130) -- (108,130) ;
\draw [shift={(110,130)}, rotate = 180] [color={rgb, 255:red, 0; green, 0; blue, 0 }  ][line width=0.75]    (10.93,-3.29) .. controls (6.95,-1.4) and (3.31,-0.3) .. (0,0) .. controls (3.31,0.3) and (6.95,1.4) .. (10.93,3.29)   ;
\draw    (230,100) -- (182,100) ;
\draw [shift={(180,100)}, rotate = 360] [color={rgb, 255:red, 0; green, 0; blue, 0 }  ][line width=0.75]    (10.93,-3.29) .. controls (6.95,-1.4) and (3.31,-0.3) .. (0,0) .. controls (3.31,0.3) and (6.95,1.4) .. (10.93,3.29)   ;

\draw (31,43) node [anchor=north west][inner sep=0.75pt]  [font=\footnotesize]  {$B_{1}$};
\draw (31,73) node [anchor=north west][inner sep=0.75pt]  [font=\footnotesize]  {$B_{2}$};
\draw (31,123) node [anchor=north west][inner sep=0.75pt]  [font=\footnotesize]  {$B_{n}$};
\draw (42,94) node [anchor=north west][inner sep=0.75pt]  [rotate=-90] [align=left] {$\displaystyle \dotsc $};
\draw (234,93) node [anchor=north west][inner sep=0.75pt]  [font=\footnotesize]  {$\S$};
\draw (16,17) node [anchor=north west][inner sep=0.75pt]  [font=\scriptsize,color={rgb, 255:red, 208; green, 2; blue, 27 }  ,opacity=1 ]  {$\textcolor[rgb]{0.82,0.01,0.11}{(V_{1} ,V_{2},\ldots,V_{n})}$};
\draw (181,18) node [anchor=north west][inner sep=0.75pt]  [font=\footnotesize,color={rgb, 255:red, 74; green, 144; blue, 226 }  ,opacity=1 ]  {${\textcolor[rgb]{0.82,0.01,0.11}\sigma}$};
\draw (145,92.4) node [anchor=north west][inner sep=0.75pt]  [font=\normalsize]  {$\mathcal{M}_{\S}$};
\draw (112,73.4) node [anchor=north west][inner sep=0.75pt]  [font=\scriptsize,color={rgb, 255:red, 74; green, 144; blue, 226 }  ,opacity=1 ]  {$V_{1} |\S$};
\draw (112,91.4) node [anchor=north west][inner sep=0.75pt]  [font=\scriptsize,color={rgb, 255:red, 74; green, 144; blue, 226 }  ,opacity=1 ]  {$V_{2} |\S$};
\draw (112,122.4) node [anchor=north west][inner sep=0.75pt]  [font=\scriptsize,color={rgb, 255:red, 74; green, 144; blue, 226 }  ,opacity=1 ]  {$V_{n} |\S$};
\draw (127,104.4) node [anchor=north west][inner sep=0.75pt]  [font=\scriptsize,color={rgb, 255:red, 74; green, 144; blue, 226 }  ,opacity=1 ] [rotate=-90] [align=left]  {$\displaystyle \dotsc $};
\draw (112,09) node [anchor=north west][inner sep=0.75pt]  [font=\small] [align=left] {\begin{minipage}[lt]{43pt}\setlength\topsep{0pt}
\begin{center}
\textcolor[rgb]{0.82,0.01,0.11}{{\scriptsize Information}}\\\textcolor[rgb]{0.82,0.01,0.11}{{\scriptsize Intermediary}}
\end{center}
\end{minipage}};

\end{tikzpicture}
  }
 \subcaption{\label{fig1a} Auction with Information Intermediary}
\end{minipage}
\hfill
\begin{minipage}{0.4\columnwidth}
  \centering
  \resizebox{0.85\columnwidth}{!}{\tikzset{every picture/.style={line width=0.75pt}} 

\begin{tikzpicture}[x=0.75pt,y=0.75pt,yscale=-1,xscale=1]

\draw  [dash pattern={on 0.84pt off 2.51pt}]  (137,80) -- (137,240) ;
\draw  [draw opacity=0][fill={rgb, 255:red, 248; green, 231; blue, 28 }  ,fill opacity=0.5 ] (47.5,50.43) -- (137,139.6) -- (47.5,139.6) -- cycle ;
\draw  [dash pattern={on 0.84pt off 2.51pt}]  (77,80) -- (77,240) ;
\draw    (47,240) -- (257,240) ;
\draw    (257,240) -- (273.59,240) ;
\draw [shift={(275.59,240)}, rotate = 540] [color={rgb, 255:red, 0; green, 0; blue, 0 }  ][line width=0.75]    (10.93,-3.29) .. controls (6.95,-1.4) and (3.31,-0.3) .. (0,0) .. controls (3.31,0.3) and (6.95,1.4) .. (10.93,3.29)   ;
\draw    (47,30) -- (47,14.07) ;
\draw [shift={(47,12.07)}, rotate = 450] [color={rgb, 255:red, 0; green, 0; blue, 0 }  ][line width=0.75]    (10.93,-3.29) .. controls (6.95,-1.4) and (3.31,-0.3) .. (0,0) .. controls (3.31,0.3) and (6.95,1.4) .. (10.93,3.29)   ;
\draw    (47,30) -- (47,240) ;
\draw [color={rgb, 255:red, 74; green, 144; blue, 226 }  ,draw opacity=1 ] [dash pattern={on 4.5pt off 4.5pt}]  (47,50) -- (237,240) ;
\draw [color={rgb, 255:red, 208; green, 2; blue, 27 }  ,draw opacity=1 ] [dash pattern={on 4.5pt off 4.5pt}]  (47,140) -- (257,140) ;
\draw  [color={rgb, 255:red, 0; green, 0; blue, 0 }  ,draw opacity=1 ][fill={rgb, 255:red, 126; green, 211; blue, 33 }  ,fill opacity=1 ] (73.8,139.6) .. controls (73.8,137.61) and (75.41,136) .. (77.4,136) .. controls (79.39,136) and (81,137.61) .. (81,139.6) .. controls (81,141.59) and (79.39,143.2) .. (77.4,143.2) .. controls (75.41,143.2) and (73.8,141.59) .. (73.8,139.6) -- cycle ;
\draw  [color={rgb, 255:red, 0; green, 0; blue, 0 }  ,draw opacity=1 ][fill={rgb, 255:red, 208; green, 2; blue, 27 }  ,fill opacity=1 ] (158.8,165.6) .. controls (158.8,163.61) and (160.41,162) .. (162.4,162) .. controls (164.39,162) and (166,163.61) .. (166,165.6) .. controls (166,167.59) and (164.39,169.2) .. (162.4,169.2) .. controls (160.41,169.2) and (158.8,167.59) .. (158.8,165.6) -- cycle ;
\draw  [color={rgb, 255:red, 0; green, 0; blue, 0 }  ,draw opacity=1 ][fill={rgb, 255:red, 126; green, 211; blue, 33 }  ,fill opacity=1 ] (43.4,50) .. controls (43.4,48.01) and (45.01,46.4) .. (47,46.4) .. controls (48.99,46.4) and (50.6,48.01) .. (50.6,50) .. controls (50.6,51.99) and (48.99,53.6) .. (47,53.6) .. controls (45.01,53.6) and (43.4,51.99) .. (43.4,50) -- cycle ;
\draw  [color={rgb, 255:red, 0; green, 0; blue, 0 }  ,draw opacity=1 ][fill={rgb, 255:red, 248; green, 231; blue, 28 }  ,fill opacity=1 ] (133.8,140.6) .. controls (133.8,138.61) and (135.41,137) .. (137.4,137) .. controls (139.39,137) and (141,138.61) .. (141,140.6) .. controls (141,142.59) and (139.39,144.2) .. (137.4,144.2) .. controls (135.41,144.2) and (133.8,142.59) .. (133.8,140.6) -- cycle ;
\draw  [color={rgb, 255:red, 0; green, 0; blue, 0 }  ,draw opacity=1 ][fill={rgb, 255:red, 248; green, 231; blue, 28 }  ,fill opacity=1 ] (73.8,80.6) .. controls (73.8,78.61) and (75.41,77) .. (77.4,77) .. controls (79.39,77) and (81,78.61) .. (81,80.6) .. controls (81,82.59) and (79.39,84.2) .. (77.4,84.2) .. controls (75.41,84.2) and (73.8,82.59) .. (73.8,80.6) -- cycle ;
\draw    (135,250) -- (49,250) ;
\draw [shift={(47,250)}, rotate = 360] [color={rgb, 255:red, 0; green, 0; blue, 0 }  ][line width=0.75]    (10.93,-3.29) .. controls (6.95,-1.4) and (3.31,-0.3) .. (0,0) .. controls (3.31,0.3) and (6.95,1.4) .. (10.93,3.29)   ;
\draw [shift={(137,250)}, rotate = 180] [color={rgb, 255:red, 0; green, 0; blue, 0 }  ][line width=0.75]    (10.93,-3.29) .. controls (6.95,-1.4) and (3.31,-0.3) .. (0,0) .. controls (3.31,0.3) and (6.95,1.4) .. (10.93,3.29)   ;

\draw (22,12.4) node [anchor=north west][inner sep=0.75pt]  [font=\normalsize]  {$\textcolor[rgb]{0,0,0}{\R}$};
\draw (33,242.4) node [anchor=north west][inner sep=0.75pt]    {$0$};
\draw (174.62,159.4) node [anchor=north west][inner sep=0.75pt]  [font=\scriptsize,rotate=-45]  {$\textcolor[rgb]{0.29,0.56,0.89}{\R+\cs=\W^{*}}$};
\draw (249,243.4) node [anchor=north west][inner sep=0.75pt]  [font=\normalsize]  {$\textcolor[rgb]{0,0,0}{\cs}$};
\draw (174,127.4) node [anchor=north west][inner sep=0.75pt]  [font=\scriptsize]  {$\textcolor[rgb]{0.82,0.01,0.11}{\R=\Rm}$};
\draw (76,144.4) node [anchor=north west][inner sep=0.75pt]  [font=\small]  {$M$};
\draw (123,144.4) node [anchor=north west][inner sep=0.75pt]  [font=\small]  {$O$};
\draw (154,172.4) node [anchor=north west][inner sep=0.75pt]  [font=\small]  {$V$};
\draw (52,42.4) node [anchor=north west][inner sep=0.75pt]  [font=\small]  {$A$};
\draw (76,65) node [anchor=north west][inner sep=0.75pt]  [font=\small]  {$B$};
\draw (73,252.4) node [anchor=north west][inner sep=0.75pt]    {$\opt$};

\end{tikzpicture}
  }
  \subcaption{\label{fig1b} Revenue-CS Trade-off Diagram}
\end{minipage}
\caption{\it\small
(a) The auction with information intermediary setting, where the intermediary has full knowledge of valuations $\vec{V}$, and can use this to provide a signal $\sigma$ to the seller and the buyers. The seller then uses the revenue-optimal mechanism $\M_{\S}$ for the posterior distribution over valuations $\S$ given $\sigma$.\\
(b) Two-dimensional space of seller revenue, $\R$, and consumer surplus, $\cs$, of different signaling schemes. The points $M$ and $V$ correspond to Myerson's and VCG mechanisms, and the point $A$ corresponds to selling to the highest-valued buyer at her value when the seller has full information.
}
\label{fig1}
\end{figure*}

\subsection{Our Results}
We consider a single-item auction with $n$ buyers with discrete valuations. We assume the buyer valuations are independent, so $\D = \D_1 \times \D_2 \times \cdots \times \D_n$, where $\D_i$ has support size $\K_i$, and the size of the union of the supports is $\K$.  
We parametrize our results in terms of $n, \K_i$, and $\K$. 

Our first set of results (\cref{sec:lower}) shows a sharp demarcation between the cases of $n=1$ and $n \geq 2$ buyers. In contrast with the former case (where signaling achieves the entire shaded region in~\cref{fig1b}), we show in the latter case, the entire segment $BO$ is not achievable; indeed, the only achievable points on segment $AO$  are arbitrarily close to $A$. Therefore,  achieving full welfare requires sacrificing an arbitrarily large fraction of consumer surplus compared to the no-signaling baseline. 

\begin{theorem} [Proved in \cref{sec:lower}]
\label{thm:lb1}
For any given constant $\varepsilon > 0$, there are instances with $n = 2$ buyers each with $\K_i = 2$, where any signaling scheme $\Z$ under which the revenue-optimal auction obtains full welfare (\latin{i.e.}, allocates to highest-value buyer), has $\cs(\Z) \le \varepsilon \cdot \cs(\D)$, where $\cs(\D)$ is the consumer surplus of Myerson's auction without signaling. 
\end{theorem}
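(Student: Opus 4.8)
The plan is to exhibit, for each $\varepsilon$, a two‑buyer instance in which Myerson's auction is \emph{already} inefficient — it misallocates on a constant‑probability profile — yet still leaves a constant amount of consumer surplus, and then to argue that any signaling scheme whose induced revenue‑optimal auctions are all efficient must, averaged over signals, extract all but an $\varepsilon$‑fraction of that surplus. Concretely I would take buyer~$1$ with values $\{\ell_1,h_1\}$ and buyer~$2$ with values $\{\ell_2,h_2\}$, with $\ell_1<\ell_2<h_1<h_2$, chosen so that buyer~$1$ is ``virtually strong'': the (discrete) virtual value of $\ell_1$ is positive and strictly larger than that of $\ell_2$, while $\ell_2$'s virtual value is very negative because $h_2\gg h_1$ occurs with non‑negligible probability. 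The tunable knobs are $h_2$ (large) and the gap $\ell_2-\ell_1$ (small), both chosen as a function of $\varepsilon$; all other parameters ($\ell_1,h_1,\Pr[V_1=h_1],\Pr[V_2=h_2]$) are fixed constants.

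Step one is a direct computation of Myerson's outcome on $\D$. With the stated parameters Myerson's auction allocates to buyer~$1$ on the profile $(\ell_1,\ell_2)$ (a misallocation, since $\ell_2>\ell_1$) and allocates efficiently on the other three profiles; moreover, because buyer~$1$'s low type still beats buyer~$2$'s low type in virtual value, on the profile $(h_1,\ell_2)$ buyer~$1$'s high type pays only its threshold $\ell_1$. Hence $\cs(\D)$ equals, up to lower‑order terms, $\Pr_{\D}[(h_1,\ell_2)]\cdot(h_1-\ell_1)$, a positive constant independent of $h_2$ and of $\ell_2-\ell_1$.

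Step two treats an arbitrary scheme $\Z$ for which $\MS$ is efficient on every posterior $\S$. Since $\MS$ is efficient, $\cs(\Z)=\E_\sigma[\cs(\MS)]$ with $\cs(\MS)=\W(\S)-\R(\MS)$, and $\R(\MS)$ is the \emph{maximum} revenue of an efficient ex‑post‑IC, interim‑IR mechanism on $\S$ (maximal because $\MS$ is revenue‑optimal among \emph{all} mechanisms and is assumed efficient). For each $\S$ I would (a) show the only source of surplus in the optimal efficient mechanism is buyer~$2$'s high type: whenever buyer~$2$ wins (with both of its types) against some fixed report of buyer~$1$ in the efficient allocation — which happens precisely when $(\ell_1,\ell_2)\in\mathrm{supp}(\S)$ — ex‑post IC forces a report‑dependent constant payment for buyer~$2$, so buyer~$2$'s high type can be overcharged only on reports of buyer~$1$ that are correlated with ``buyer~$2$ is high''; and (b) exploit the hypothesis that $\MS$ is efficient, hence outperforms a specific inefficient ex‑post‑IC mechanism (one that conditions buyer~$2$'s posted price on buyer~$1$'s report and falls back to pricing buyer~$1$), to bound $\Pr_{\S}[(\ell_1,h_2)]$, and therefore $\cs(\MS)$, by $(\text{premium})\cdot\Pr_{\S}[(\ell_1,\ell_2)]$ with a ``premium'' of order $\ell_2-\ell_1$. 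Summing over $\sigma$ and using Bayes‑plausibility, $\E_\sigma\Pr_{\S}[(\ell_1,\ell_2)]=\Pr_{\D}[(\ell_1,\ell_2)]$, gives $\cs(\Z)\le O(\ell_2-\ell_1)$, which is at most $\varepsilon\cdot\cs(\D)$ once $\ell_2-\ell_1$ is taken small enough relative to the fixed value of $\cs(\D)$.

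The hard part is step two(b). The naive comparison mechanisms (e.g.\ ``post $h_2$ to buyer~$2$, else post $h_1$ to buyer~$1$'') are too weak: they only yield $\cs(\Z)=O(\W^*-\Rm)$, which is a \emph{constant factor} above $\cs(\D)$, not an $\varepsilon$‑fraction of it, so the triangle bound alone cannot suffice. The inefficient comparison mechanism must be chosen so that the ``efficiency premium'' it forgoes scales with the tiny gap $\ell_2-\ell_1$ rather than with $\ell_2$ or $h_2$ — which seems to require letting it condition on buyer~$1$'s report — and one must then carefully verify it is genuinely ex‑post IC and interim IR, since on correlated posteriors the revenue‑optimal mechanism is not Myerson's and one cannot lean on virtual‑value formulas. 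Getting this premium right, and checking that the resulting instance indeed makes $\cs(\D)$ a constant while driving every efficient scheme's surplus to zero, is where the real work lies; everything else is bookkeeping plus Bayes‑plausibility.
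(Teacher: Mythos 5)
There is a genuine gap: the quantitative core of your argument --- step two(b), the construction of an inefficient comparison mechanism whose forgone ``efficiency premium'' scales with $\ell_2-\ell_1$, and the resulting bound $\cs(\MS)\le O(\ell_2-\ell_1)\cdot\Pr_{\S}[(\ell_1,\ell_2)]$ for \emph{every} posterior $\S$ --- is not carried out, and you yourself flag it as ``where the real work lies.'' As written, the proposal is a plan whose decisive inequality is conjectured rather than proved. Step two(a) is also asserted without proof and appears to aim at the wrong target: in your instance, $\cs(\D)$ comes entirely from buyer~$1$'s high type paying its threshold $\ell_1$ on the profile $(h_1,\ell_2)$, so the heart of the impossibility is showing that efficiency forces buyer~$1$'s high type's price up to (nearly) $h_1$ on almost all signals; controlling buyer~$2$'s high type does not by itself yield $\cs(\Z)\le\varepsilon\cdot\cs(\D)$. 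Finally, a single averaged bound $\E_\sigma\Pr_\S[(\ell_1,\ell_2)]=\Pr_\D[(\ell_1,\ell_2)]$ cannot suffice on its own, because the per-signal bound you need must hold uniformly over an infinite family of posteriors, including correlated ones where, as you note, virtual-value reasoning is unavailable.

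The paper's route resolves exactly these difficulties, and comparing it to yours shows what is missing. First, it invokes the Cr\'emer--McLean theorem for $2\times 2$ instances (\cref{CM2}): any correlated posterior lets the seller extract full surplus, so such signals contribute zero consumer surplus and can be discarded outright --- this is the reduction that lets one restrict to product-form posteriors, where Myerson's discrete characterization applies signal by signal. You would need this (or a substitute) to make step two tractable; without it, your comparison-mechanism argument must handle arbitrary correlated $2\times2$ posteriors under interim IR, which is precisely the regime where full extraction kills the approach you sketch. Second, the paper chooses an instance with $a>b>c>d$ (buyer~$1$'s support entirely above buyer~$2$'s), proves a structural theorem pinning down the only two signal types with positive surplus, observes that under the efficiency requirement the surviving signal type must have $\beta'=0$, and then uses the Bayes-plausibility constraint $\gamma(1-\alpha')\le(1-\alpha)(1-\beta)$ to show the total mass of such signals --- hence the surplus $\gamma\alpha'(a-b)$ --- is at most $\frac{b-d}{b}\cdot\cs(\D)$, which is driven to $\varepsilon\cdot\cs(\D)$ by taking $d\to b$. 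Your interleaved-support instance may well also witness the theorem, but establishing that would require redoing this entire per-signal classification, which the proposal does not attempt.
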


We next ask if we can sacrifice on welfare, but raise a consumer surplus arbitrarily close to $\opt$? We again answer in the negative, and show a lower bound of $2$ on the approximation ratio.
\begin{theorem}[Proved in \cref{sec:lower}]
\label{thm:lb2}
For any constant $\varepsilon > 0$, there are problem instances with $n = 2$ buyers each with $\K_i= 2$,  where any signaling scheme $\Z$ has  $\cs(\Z) \le \left(\frac{1}{2} + \varepsilon\right) \cdot \opt$. 
\end{theorem}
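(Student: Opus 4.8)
The plan is to exhibit a two-buyer instance where the "good" region of the revenue–CS diagram is pinched: roughly, either a signal produces a posterior whose Myerson auction is efficient but extracts almost all the surplus as revenue (contributing almost nothing to $\cs$), or it produces a posterior whose Myerson auction leaves surplus with the buyers but is badly inefficient (so the item is often unsold or sold to the low-value buyer, again losing CS). Since any Bayes-plausible signaling scheme is a convex combination of such posteriors subject to the revenue floor $\R(\Z)\ge\R^M$, the achievable $\cs$ is maximized at a vertex of this tradeoff, and I will argue that vertex sits at $\cs = (\tfrac12+o(1))\cdot\opt$. Concretely I would use the same style of instance as in Theorem 1.2: $n=2$ i.i.d.\ (or nearly i.i.d.) buyers, each with a two-point distribution, one "low" value $\ell$ taken with probability close to $1$ and one "high" value $h \gg \ell$ taken with small probability $p$, with parameters tuned so that $\W^*$, $\R^M$, and $\opt$ are all of comparable magnitude and the geometry forces the $\tfrac12$ gap.

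The key steps, in order. First I would pin down the four reference quantities for the chosen instance: the full-welfare value $\W^*=\E[\max(V_1,V_2)]$, Myerson's revenue $\R^M$ (a monopoly-reserve second-price auction on the two-point distribution), the no-signaling consumer surplus $\cs(\D)$, and the benchmark $\opt = \W^* - \R^M$ (the $x$-coordinate of point $O$ in Figure~\ref{fig1b}). Second, I would classify the possible posteriors $\S$ that a signal can induce, and for each give an upper bound on its contribution to consumer surplus \emph{as a function of} the revenue the seller extracts from it; the crucial structural fact is that on this instance, whenever a posterior's Myerson auction is efficient it must charge a price that eats at least (half plus lower-order terms) of the welfare it generates, so $\cs(\S)\le \W(\S)-\R(\S)$ with $\R(\S)$ forced to be large. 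Third, I would set up the optimization over signaling schemes as a fractional/LP relaxation: choose nonnegative weights on posteriors, Bayes-plausibility constrains the weighted average back to $\D$, the revenue constraint $\sum_\sigma \Pr[\sigma]\,\R(\S_\sigma)\ge\R^M$ must hold, and the objective is $\sum_\sigma \Pr[\sigma]\,\cs(\S_\sigma)$. Finally I would solve (or bound the value of) this relaxation and show the optimum is $(\tfrac12+\varepsilon)\cdot\opt$, then let the instance parameters $\to$ the appropriate limit to make $\varepsilon$ arbitrarily small.

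I expect the main obstacle to be step two combined with step three: namely, proving a \emph{clean} upper bound on $\cs(\S)$ versus $\R(\S)$ that holds uniformly over all posteriors reachable from the two-point product prior — the space of posteriors is large (they can be arbitrarily correlated), so I need a characterization argument showing that, for this particular instance, only a small number of "types" of posterior are relevant and each obeys the desired tradeoff. A natural tool is to work directly with virtual values / the revenue curve of the two-point distribution and argue that splitting probability mass across the two values cannot simultaneously (a) make the allocation efficient and (b) keep the optimal reserve low; any posterior that keeps the reserve low is one where the seller's revenue-optimal auction frequently withholds the item from the high-high or low-high events. Once that lemma is in hand, the LP in step three has essentially a one- or two-dimensional structure and the $\tfrac12$ bound should fall out of a short convexity/vertex argument. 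I would also double-check the edge case where the seller ignores the signal entirely, which is exactly point $M$ and already achieves $\cs(\D)$; the instance must be chosen so that $\cs(\D)$ itself is at most $(\tfrac12+\varepsilon)\opt$, otherwise the trivial scheme would beat the claimed bound.
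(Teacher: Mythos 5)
Your proposed instance cannot work. You suggest taking $n=2$ i.i.d.\ (or nearly i.i.d.) buyers with a common two-point support $\{\ell,h\}$. But the paper itself observes (in the discussion of vanilla schemes in \cref{sec:intuition}) that for i.i.d.\ two-point distributions, the simple ``pick a random buyer and run $\BBM$ on it'' scheme already achieves $\cs=\opt$ \emph{exactly}: if the reserve is the low value, Myerson's auction is already efficient and $\cs(\D)=\opt$; otherwise $\cs(\D)=0$ but $\cs_0=q^n\ell=\opt$ and the vanilla scheme recovers all of it. So on any such instance the achievable surplus is $\opt$, not $(\tfrac12+\varepsilon)\opt$, and no amount of parameter tuning ($h\gg\ell$, $p$ small) rescues the construction. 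The hard instances are necessarily \emph{non-identical}: the paper uses buyer $1$ with support $\{a,b\}$ and buyer $2$ with support $\{c,d\}$ where $a>b>c>d$ (buyer $1$'s support lies entirely above buyer $2$'s), with $\alpha a=b$, $\beta c=d$, and the specific limit $\alpha=\beta=1-\delta$, $a=\tfrac{1}{(1-\delta)^2}$, $b=\tfrac{1}{1-\delta}$, $c=1$, $d=1-\delta$, $\delta\to0^+$. In that instance $\opt=2\delta^2(1+o(1))$ while no scheme exceeds $\delta^2(1+o(1))$; notably the no-signaling baseline itself attains $\cs(\D)=\delta^2(1+o(1))=\tfrac12\opt$, so the theorem is exactly the statement that signaling cannot beat doing nothing. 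You correctly flagged that this edge case must be checked, but your instance fails it in the opposite direction.

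A second, smaller gap: you acknowledge that posteriors can be arbitrarily correlated and hope for ``a characterization argument,'' but you do not supply the tool that makes this tractable. The paper invokes the Cr\'emer--McLean theorem (\cref{CM2}): for $n=2$ buyers each with two possible values, \emph{any} correlated posterior lets the seller extract the full social surplus, so every correlated signal contributes zero consumer surplus and can be discarded. This collapses the signal space to product distributions $s=(\alpha',\beta')$, after which a structural theorem (obtained by local perturbations of the LP variables that preserve the objective and constraints) reduces the analysis to two signal types whose contributions can be bounded against the Bayes-plausibility constraints. Your LP-plus-vertex outline is in the right spirit and resembles the paper's \cref{eq:lp}, but without Cr\'emer--McLean and without the correct (non-i.i.d.) instance, neither step two nor step four of your plan goes through.
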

We note that the above results are \emph{existential impossibility results}, and do not depend on the complexity of the signaling scheme.\footnote{Our proofs also imply the same lower bounds when the seller is constrained to run an {\em ex-post} IR mechanism, provided the intermediary's signals induce a product-form posterior distribution.} Overall, the negative results in \cref{thm:lb1,thm:lb2} strongly suggest that in this setting, the focus should be on approximating the consumer surplus. 

The situation improves in \cref{sec:welfare} when we restrict to $\D_i$ that are identical and (discrete-)regular. Here, we first circumvent \cref{thm:lb1} by showing a simple signaling scheme that achieves the point $B$ (i.e., optimal welfare, and same consumer surplus as under Myerson's auction).
One problem that remains, however, is that Myerson's auction may have arbitrarily poor $\cs$: for example, if the $\D_i$ are regular, and chosen such that the reserve price is the highest value in the support, then $\cs=0$, while $\opt > 0$ (and so the approximation factor of Myerson's auction is unbounded). Indeed, even restricting to MHR priors, one can construct instances where the reserve price is close to the maximum value in the support, leading Myerson's auction to have vanishing $\cs$ relative to $\opt$.  This is one reason why getting any non-trivial approximation to $\opt$ is challenging, and we present more discussion in~\cref{sec:intuition}.



In \cref{sec:iid}, we present our main technical result, where we show that when buyers' valuations are drawn from i.i.d. regular distributions, then a simple signaling scheme achieves a constant-approximation to $\opt$.  In more detail, our $\TB_t$ signaling scheme is based on two simple but critical steps: First, the intermediary can use its knowledge of agent valuations to perform a \emph{pre-screening} step that eliminates all but the top-$t$ buyers (for a carefully chosen $t$). Second, given the top $t$ buyers, it can then choose a uniform buyer among this set to serve as a \emph{hold-out buyer}, who the seller can sell to in case she is unable to raise sufficient revenue from the remaining $t-1$ buyers via an auction; this can be achieved by using the single-buyer signaling scheme of~\citet*{bbm2017first} on the chosen buyer. Using a combination of these two ideas, we get the following:

\begin{theorem} [Proved in \cref{sec:iid}]
\label{thm:main0}
When the $\D_i$'s are identical and regular, there is a signaling scheme achieves an $O(1)$-approximation to the optimal consumer surplus $\opt$, and has computation time polynomial in $n$ and $\K$.
\end{theorem}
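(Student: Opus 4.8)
The plan is to sandwich $\cs(\TB_t)$ between a constant factor and a clean upper bound on $\opt$. Write $v_{(1)}\ge v_{(2)}\ge\cdots$ for the order statistics of $\vec V$, let $r$ be the (discrete-)monopoly reserve of $\D$ and $q=\Pr_{V\sim\D}[V\ge r]$, and let $R^{\ast}(\cdot)$ denote single-buyer monopoly revenue. Since the seller may always ignore the signal, every scheme has revenue at least $\Rm$; since the item goes to at most one buyer, every scheme has welfare at most $\E[v_{(1)}]$; and consumer surplus is welfare minus revenue, so $\opt\le \E[v_{(1)}]-\Rm$. Using $\Rm=\E[\max(v_{(2)},r)]-r\Pr[v_{(1)}<r]$ and $\cs(\D)=\E[(v_{(1)}-\max(v_{(2)},r))^{+}]$, this rewrites exactly as $\opt\le \cs(\D)+\E[v_{(1)}\,\ind(v_{(1)}<r)]$: the benchmark exceeds Myerson's own consumer surplus by at most the expected top value on the event that nobody clears the reserve (the welfare Myerson discards). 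Hence it is enough to exhibit one feasible scheme whose consumer surplus is simultaneously $\Omega(\cs(\D))$ and $\Omega\big(\E[v_{(1)}\,\ind(v_{(1)}<r)]\big)$.

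For the scheme, I would pick $t$ (at most $n$, hence poly-time computable) so that more than $t$ buyers clear the reserve with only constant probability, and define $\TB_t$ as follows: the intermediary reveals the set $S$ of the $t$ highest-valued buyers together with $v_{(t+1)}$ (and, harmlessly, the excluded buyers' values), draws a uniform hold-out $j\in S$, reveals $j$'s identity, and applies a single-buyer $\BBM$-type scheme to $j$ — of the kind used in the single-buyer case, but tailored to the realized ``outside option'' provided by the other buyers — on the posterior $\mathcal{H}:=(\D\mid V\ge v_{(t+1)})$, which is regular because $\D$ is. Under the induced posterior the $t$ buyers of $S$ are i.i.d.\ from $\mathcal{H}$, with $j$'s value further split into $\BBM$ ``equal-revenue'' segments; the only delicate point in the construction is ties at the level $v_{(t+1)}$, handled by a fixed tie-break that leaks negligible information. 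Feasibility ($\R(\TB_t)\ge\Rm$) is immediate.

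The consumer surplus is then lower-bounded by two disjoint contributions. (i) With constant probability the overall top buyer lies in $S\setminus\{j\}$, and, since the top two order statistics of the $\mathcal{H}$-buyers coincide with $v_{(1)},v_{(2)}$, the seller's revenue-optimal auction on $S\setminus\{j\}$ charges the winner $\max(v_{(2)},r)$, reproducing Myerson's per-realization surplus and yielding $\Omega(\cs(\D))$. (ii) When no buyer clears the reserve every competitor has negative virtual value, so — provided $j$'s scheme is set up so that the seller cannot profitably price $j$ above the bottom of its segment by leaning on the $t-1$ others as a fallback — the seller sells to $j$ at the segment bottom, and the $\BBM$ guarantee (expected segment-bottom payment $=R^{\ast}(\mathcal{H})$) produces surplus $\E_{\mathcal{H}}[V]-R^{\ast}(\mathcal{H})$, which one relates to $\E[v_{(1)}\,\ind(v_{(1)}<r)]$; meanwhile the per-buyer and per-segment payment bounds (each buyer pays at most its own monopoly revenue, and $j$'s payments summed over segments are at most $R^{\ast}(\mathcal{H})$), together with the calibration of $t$, keep the revenue overhead above $\Rm$ bounded, so $\cs(\TB_t)=\W(\TB_t)-\R(\TB_t)$ stays $\Omega(\opt)$.

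The main obstacle is precisely the seller's-best-response bookkeeping behind the parentheticals above: with vanilla $\BBM$ the seller would post $j$ the top price of its segment and fall back on the competitors, extracting essentially all of $j$'s potential surplus and even failing to always sell, so the whole argument hinges on designing $j$'s single-buyer signal (as a function of the realized competition) and choosing $t$ so that ``post the segment bottom to $j$'' remains revenue-optimal. Making the constant-probability events in (i) and (ii) quantitative, bounding the residual welfare loss when the allocation does slip away from the top buyer, and proving the seller's total revenue is $O(\Rm)$ for the calibrated $t$ is where regularity of $\D$ — hence of $\mathcal{H}$ and of the segment distributions — does the real work.
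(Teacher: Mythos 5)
Your high-level architecture matches the paper's: the same decomposition $\opt \le \cs(\D) + \cs_0$ with $\cs_0 = \E[v_{(1)}\,\ind(v_{(1)}<r)]$ (although the paper simply takes the better of no signaling, which recovers $\cs(\D)$ exactly, and a scheme targeting $\cs_0$, rather than asking one scheme to do both), and essentially the same $\TB_t$ scheme: reveal the top-$t$ set and $v_{(t+1)}$, hold out a uniform $j$ from that set, and run $\BBM$ on $j$'s excess distribution. The seller's-best-response issue you flag at the end is resolved cleanly in the paper: one only credits surplus from $j$ on the event that every other revealed buyer's excess value falls below the reserve of its truncated distribution (so all competitors have negative virtual value while $j$'s $\BBM$-induced virtual value is always nonnegative), and on that event the optimal auction restricted to $j$ behaves exactly like the single-buyer $\BBM$ mechanism. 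Coupling this event carefully with the order statistics yields the key bound $\cs(\TB_t) \ge P\bigl(\frac{1}{t}\sum_{\ell=1}^{t}\E[Z_\ell] - \E[Z_{t+1}]\bigr)$, where $Z_\ell$ is the $\ell$-th largest of $n$ independent draws from $\D$ conditioned below the reserve and $P$ is the probability that all buyers fall below the reserve; note $\cs_0 = P\,\E[Z_1]$.

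The genuine gap is the step you defer to ``regularity doing the real work'': showing that some $t$ makes $\frac{1}{t}\sum_{\ell=1}^{t}\E[Z_\ell] - \E[Z_{t+1}] = \Omega(\E[Z_1])$. This is false for general below-reserve conditionals --- for (a discrete analogue of) the Exponential distribution every $t$ gives only $\Theta(\E[Z_1]/\log n)$ --- so the claim genuinely needs that the conditional comes from a regular $\D$, used via concavity and monotonicity of the revenue curve $v\,S(v)$ as a function of the quantile below the reserve. Moreover, your proposed calibration of $t$ (``more than $t$ buyers clear the reserve with only constant probability'') is keyed to the wrong distribution: $\cs_0$ lives entirely on the event that \emph{no} buyer clears the reserve, and the correct choice of $t$ is tied to quantiles of the below-reserve conditional, not to $\Pr[V\ge r]$. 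The paper's proof requires a core--tail split of that conditional at its $\Theta(1/n)$-quantile and then a four-case analysis that locates witness values $w$ with prescribed survival probabilities and sets $t \approx n\Pr[Y\ge w]$; no single a priori choice of $t$ suffices. Without this piece your argument establishes only an $O(n)$-approximation (or $O(\log n)$ with rank-averaging) to $\cs_0$, not the claimed $O(1)$.
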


The nice feature of $\TB_t$ is that it generates signals with posteriors that are (non-identical) product distributions, so that the seller's optimal auction is Myerson's auction~\citep{myerson}, which is also ex-post IC and IR. 
This scheme also turns out to achieve $\opt$ for the special case when $\K = 2$ and $n$ is arbitrary.

Next, we extend this scheme to when the buyers are independent, but not necessarily identical or regular. We obtain the following theorem for this case.

\begin{theorem}
\label{thm:main2}
When the $\D_i$'s are arbitrary, the $\TB_t$ scheme achieves an $O\big(\!\min\!\left(n \log n, \K^2\right)\!\big)$-approximation to  the optimal consumer surplus $\opt$, and has computation time polynomial in $n$ and $\K$.
\end{theorem}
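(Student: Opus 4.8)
\noindent\textbf{Proof proposal for \cref{thm:main2}.}
The plan is to analyze the $\TB_t$ scheme and establish two upper bounds on its approximation ratio---$O(n\log n)$ and $O(\K^2)$---by two essentially independent arguments, each guiding a choice of the truncation level $t$; the theorem's $\min$ is then immediate. Both arguments start from $\opt = \W^* - \Rm \le \W^*$, where $\W^* = \E[\max_i V_i]$ is the VCG welfare, together with the structural fact already exploited in the i.i.d.\ regular analysis: after an $\TB_t$ signal the seller faces a product-form posterior and therefore runs Myerson's auction, so $\cs(\TB_t)$ equals the realized welfare minus the seller's Myerson revenue on each posterior, and the task is to keep the latter near $\Rm$ while the former stays a good fraction of $\W^*$. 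Along the way I will also use the elementary bound that the stand-alone Myerson revenue of any single buyer $j$ is at most $\Rm$ (the seller can ignore the others).

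For the $O(\K^2)$ bound I would decompose $\W^*$ by value level. Writing the union of the supports as $v_1 < \cdots < v_\K$, the layered identity $\W^* = \sum_\ell (v_\ell - v_{\ell-1})\Pr[\max_i V_i \ge v_\ell]$ lets me charge $\opt$ to the single level $\ell^\star$ contributing an $\Omega(1/\K)$ share, which costs the first factor of $\K$. On the resulting ``essentially two-level'' instance the behavior of $\TB_t$ reduces to the $\K=2$ scheme, which the paper already shows is exactly optimal; the second factor of $\K$ comes from the scheme having to commit to a truncation level before seeing the realization, so one runs $\TB_t$ for each of the $\le\K$ candidate cutoff levels and averages (equivalently, some single $t$ already captures an $\Omega(1/\K)$ fraction of the level-$\ell^\star$ surplus). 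The delicate point here is that collapsing or truncating the high values to one level can only help the consumer-surplus benchmark and can only decrease the seller's Myerson revenue on every induced posterior, which I would verify via monotonicity of Myerson revenue under pointwise-dominated truncations of the marginals.

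For the $O(n\log n)$ bound I would instead localize the optimum to a single buyer: from $\W^* = \sum_i \E[V_i \cdot \ind(i \text{ wins under VCG})]$, some buyer $j$ accounts for an $\Omega(1/n)$ share. The goal is to choose $t$ so that $j$ lies among the top $t$ with constant probability and the competing $t-1$ buyers do not force the seller to a reserve much above $\Rm$, and then observe that the event ``$\TB_t$ designates $j$ as the hold-out'' (probability $\ge 1/t \ge 1/n$) already yields, through the single-buyer BBM scheme applied to $j$ conditioned on being in the top $t$, a constant fraction of $j$'s contribution to $\W^*$ net of $j$'s stand-alone Myerson revenue, which BBM fixes and which is at most $\Rm$. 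The pre-screening step is what makes ``$j$'s value'' and ``$j$'s value given $j$ plausibly wins'' comparable; the extra $\log n$ factor I expect to arise from sweeping $O(\log n)$ geometrically spaced levels $t\in\{1,2,4,\dots,n\}$, one of which simultaneously retains enough of $j$'s winning mass and enough revenue from the others, or equivalently from a harmonic-type sum over the hold-out's rank within the top $t$.

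The main obstacle in both parts is controlling the seller's unrestricted best response under a non-regular posterior: Myerson's auction may iron, randomize, and pick reserves that vary non-monotonically in $t$, so the clean accounting ``$\cs(\TB_t) \ge (\text{restricted welfare}) - O(\Rm)$'' is not automatic. The crux will be to show that for the chosen $t$ the seller cannot extract much more than $O(\Rm)$ while the item is still sold to a high-value buyer with good probability---for which I would lean on the hold-out buyer's BBM guarantee (it pins that buyer's revenue contribution exactly), on the pre-screening step, and on monotonicity of Myerson revenue under conditioning and truncation. Combining the per-level and per-buyer guarantees with \cref{thm:main0} then gives the stated $O(\min(n\log n,\K^2))$ ratio, and polynomial running time follows since $\TB_t$ is specified by $O(n)$ (or $O(\K)$) thresholds, each obtainable from a Myerson-revenue or single-buyer-BBM computation polynomial in $n$ and $\K$.
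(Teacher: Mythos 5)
Your proposal has a genuine gap at its foundation: both of your arguments decompose the welfare benchmark $\W^*$ (by value level, or by winning buyer under VCG) and then aim to recover an $\Omega(1/\K)$ or $\Omega(1/n)$ share of $\W^*$, but the target is $\opt = \W^* - \R(\D)$, which can be an arbitrarily small fraction of $\W^*$. A level $\ell^\star$ carrying an $\Omega(1/\K)$ share of $\W^*$, or a buyer $j$ carrying an $\Omega(1/n)$ share of $\W^*$, may contribute nothing at all to $\opt$; and your proposed correction---subtracting $j$'s stand-alone Myerson revenue, which is at most $\R(\D)$---does not repair this, since the sum over $j$ of $\bigl(\E[V_j\,\ind(j \text{ wins})] - (\text{stand-alone revenue of } j)\bigr)$ neither upper-bounds $\opt$ nor is even guaranteed to be nonnegative (consider $n$ buyers with deterministic value $1$). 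The same objection applies to your framing ``keep the seller's posterior revenue near $\R(\D)$ while realized welfare stays a good fraction of $\W^*$'': \cref{thm:lb1} shows this is impossible in general, and the paper never attempts it.

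What the paper does instead is decompose $\opt$ itself, exactly: $\opt = \cs(\D) + \cs_0 + \sum_i \cs_i = \cs(\D) + \cs_0 + \sum_k \widehat{\cs}_k$ (\cref{eq:opt}), where $\cs_0$ is the surplus lost when Myerson's auction does not allocate and $\cs_i$ (resp.\ $\widehat{\cs}_k$) is the surplus lost when it mis-allocates to buyer $i$ (resp.\ to a buyer with value $x_k$). This charges the entire gap $\W^* - \R(\D)$ to concrete, recoverable events rather than to shares of $\W^*$. The factor $n$ (resp.\ $\K$) then comes from keeping only the largest term of the mis-allocation sum; the remaining factor $\log n$ (resp.\ $\K$) comes from approximating $\cs_0$-type quantities via \cref{lem:nonIIDBBMk}: a harmonic-weighted mixture over the ranks $t$ telescopes the order statistics $\E[Z_\ell]$ to give $O(\log n)$, while a separate Bernoulli order-statistics lemma (\cref{lem:coin}) gives $O(\K)$. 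The mis-allocation terms are reduced to non-allocation terms by truncating the distributions at the mis-allocated value (\cref{alg:cut}), with \cref{lem:TRres} controlling reserve prices after truncation. Your geometric sweep over $t$ is in the right spirit for the $\log n$ factor, but without an exact decomposition of $\opt$ the rest of the argument cannot be completed.
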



\subsection{Intuition and Techniques}
\label{sec:tech}
For any $n$, the optimal signaling scheme for maximizing surplus can be obtained via an infinite-sized linear program (see~\cref{eq:lp} in~\cref{sec:lower}) with variables for every possible signal, \latin{i.e.}, every possible joint distribution over buyer valuations. Further, for each such signal, the quantity of interest is the consumer surplus of the revenue-optimal auction given the signal. For $n=1$ case, Bergemann \latin{et al.} show this LP has a special structure in that it admits a basis comprising of ``equal-revenue distributions'' containing the revenue-maximizing price (see~\cref{sec:bbm}). Our work shows that this breaks down for optimal auctions with signaling involving $n \geq 2$ buyers.

To understand why things change dramatically from $n=1$ to $n\geq 2$ buyers, in the former case, the optimal mechanism is a simple posted price scheme and its revenue is continuous in the distribution $\D$. However, with multiple buyers, the optimal auction does not have simple structure even for independent buyers (see \cref{alg:my}), and we need to analyze the consumer surplus of this auction, which can be a discontinuous function of the prior. (See \cref{sec:lower} for examples.) Further, for correlated buyers, the revenue of the auction itself may not be continuous in the prior! Indeed, a celebrated result of \citet{CM} shows that slightly perturbing an independent prior to a correlated one can discontinuously increase the revenue to $\W^{*}$, hence decreasing consumer surplus to $0$. (See \cref{CM2} in \cref{sec:prelim}.) This makes it tricky to reason about the optimal signaling scheme, leading to the gap between our upper and lower bounds.

In more detail: Our proofs of \cref{thm:lb1} and \cref{thm:lb2} use a special case of the Cr\'emer-McLean characterization~\citep{CM}: for $n=2$ buyers each with $\K_i = 2$, under any non-independent prior the seller can extract full social surplus as revenue. This lets us focus on signaling schemes where buyers' posterior given each signal are product distributions. Using Myerson's characterization of the optimal auction for discrete valuations~\citep{Elkind}, we show a structural characterization that reduces the space of optimal signals to a sufficiently simple form, yielding the desired counterexamples. Note that we still need to reason about a large space of possible product distributions as signals, which makes our constructions quite non-trivial.

The technically most interesting result in the paper is the $O(1)$-approximation signaling scheme for i.i.d. buyers (\cref{thm:main0} in \cref{sec:iid}). The challenge is the following: Even if we restrict the space of signals so that the posteriors are (non-identical) product distributions, this space is still infinite size, with $\cs$ being a discontinuous function in this space. Our signaling scheme in \cref{sec:iid} balances the trade-off between revealing enough information about valuations so that the item is sold to a high-value buyer, and revealing too much information such that the seller extracts all the surplus.  Balancing these is delicate; nevertheless, our final scheme is simple with polynomial computation time and signal complexity. We present more intuition in \cref{sec:intuition}, where we argue that the guarantee in \cref{thm:main0} cannot be achieved in a straightforward fashion.

\subsection{Related Work}
 The general problem of information structure design considers how sharing additional information can influence the outcome of a mechanism. Different variants of this problem have been formulated and studied; we refer the reader to~\citep{bergemann2019information, dughmi2017algorithmic} for surveys.  Of particular importance to us is the {\em Bayesian persuasion problem} formulated by \citet{kamenica2011bayesian}, where a receiver selects a utility-maximizing action based on incomplete information about the state of nature. A sender who knows the state of nature can signal side-information to the receiver so that the action taken by the receiver is utility-maximizing for the sender. This general problem has been widely studied in different domains such as monopoly pricing and advertising~\citep{bergemann2015limits,chakraborty2014persuasive,xu2015exploring,ParetoIS}. For this problem, there is a distinction between existence and computational results, and the work of~\citet{dughmi2019algorithmic} studies the computational complexity of finding the optimal signaling scheme under different input models.
 
The restriction of our problem to one buyer is the monopoly pricing problem. Here, the intermediary is the sender whose utility is consumer surplus, and the seller is the receiver whose action space is take-it-or-leave-it prices and whose utility is revenue. Beginning with the work of \citet*{bergemann2015limits}, several works~\citep{dughmi2016persuasion,shen2018closed,cummings2020algorithmic,cai2020third,ParetoIS,KoM22,MaoPW22} have considered various extensions and modifications to this basic problem. Unlike monopoly pricing where the buyer is perfectly informed, in our setting, not only the seller, but also all the buyers are receivers, in the sense that they have imperfect knowledge of the true valuations of other buyers, and modify their respective bidding strategies in response to the intermediary's signal to maximize their own utilities. Our setting is therefore a Bayesian persuasion problem with multiple receivers, and this aspect makes it significantly more complex.

There has been work on signaling in auctions that cannot be modeled as Bayesian persuasion, \latin{i.e.}, in which the common signal is not generated by an intermediary who knows all the true values of the buyers. For instance, in the work of \citet{bergemann2007information}, the auctioneer has perfect information about buyer valuations and controls the precision to which buyers can learn it, and in the work of \citet*{hufu2017conceal}, the seller's signal is drawn from a distribution that is correlated with the buyer's value, In both these works, the goal is to maximize seller revenue. Finally, \citet*{shen2019buyer} studies equilibria of optimal auctions when each buyer commits to a signaling scheme with imperfect knowledge of other buyers' valuations, while \citet*{bbm2017first} studies equilibria in first price auctions when buyers are provided correlated signals about other buyers' valuations. In contrast with the former, our work considers a richer space of signals via an information intermediary, while compared to the latter, in our setting the seller's mechanism is not fixed, but is instead also a function of the information structure. 


\section{Preliminaries}
\label{sec:prelim}
We consider Bayesian single-item auctions with $n$ buyers, with independent private valuations $\vec{V}=(V_1,V_2,\ldots,V_n)$ drawn from a known product distribution $\D = \D_1 \times \cdots \times \D_n$. Unless otherwise stated, we present our results for the setting in which each $\D_i$ is discrete.  
We denote by $\K_i$ the size of the support of $\D_i$, and by $\K$ the size of the union of these supports.

For distribution $\D_i$, we use $f_{D_i}$ to denote its probability mass function, and define $S_{D_i}(x)= \Pr_{V_i \sim D_i}[V_i \ge x]$ and $F_{D_i}(x)= \Pr_{V_i \sim D_i}[V_i \le x]$. For a joint distribution $\D$ and vector $\vec{v}$, we use $\Pr[\D = \vec{v}] = f_{\D}(\vec{v})$ as shorthand for denoting the probability of $\vec{v}$ drawn from $\D$. 


\subsection{Revenue-Maximizing Auctions}
\label{sec:optimal}
Given any shared prior $\D'$ on the valuations of the buyers, which in the case of signaling, can be different from $\D$ and arbitrarily correlated, the seller runs an optimal (revenue maximizing) auction that satisfies \latin{ex-post} incentive compatibility and {\em interim} individual rationality. Let $\D'_i$ denote the marginal distribution of buyer $i$. Using the revelation principle~\citep{myerson}, the optimal auction is specified by an allocation rule $x^*(\vec{v}) \ge 0$ and a payment rule $\theta^*(\vec{v})$ (that can be positive or negative) given any realized valuation profile $\vec{v}$. The quantity $x^*_i(\vec{v})$ is the probability buyer $i$ gets the item given the valuation profile $\vec{v}$. The allocation is always non-negative, while the price could be negative. The auction maximizes revenue:
\[
\R(\D') = \max_{\vec{x}, \vec{p}} \sum_{\vec{v}} \Pr[\D' = \vec{v}] \cdot \sum_i \theta_i(\vec{v})
\]
subject to the following constraints. The \emph{ex-post} incentive compatibility (IC) constraint states that the utility of any buyer does not increase by misreporting its valuation. Formally, for every buyer $i$, for every value $q$, and for every valuation vector $\vec{v} = (q,\vec{v}_{-i})$ (where $\vec{v}_{-i}$ denotes the valuations of the other buyers), and every other possible report $r$,
\[
q \cdot x_i(q,\vec{v}_{-i}) - \theta_i(q,\vec{v}_{-i}) \geq q \cdot x_i(r,\vec{v}_{-i}) - \theta_i(r,\vec{v}_{-i}).
\]
The \emph{interim} individual rationality (IR) constraint says that for any buyer $i$ and any value $q$, the expected utility under the mechanism is non-negative:
\[
\sum_{\vec{v}_{-i}} \frac{\Pr[\D' = (q,\vec{v}_{-i})]}{\Pr[\D'_i = q]} \cdot \left(q \cdot x_i(q,\vec{v}_{-i}) - \theta_i(q,\vec{v}_{-i}) \right) \ge 0.
\]

Finally, we have the constraint that the item is allocated probabilistically to at most one buyer:
\[
\sum_i x_i(\vec{v}) \le 1, \qquad \forall \vec{v}.
\]

For any prior $\D'$, let $(\R(\D'),\W(\D'),\cs(\D'))$ denote the expected revenue, welfare (or \emph{total surplus}) and consumer surplus under the revenue-maximizing auction. Then we have $\cs(\D') = \W(\D') - \R(\D')$, and:
\[\R(\D') = \sum_{\vec{v}} \Pr[\D' = \vec{v}] \cdot \sum_i \theta^*_i(\vec{v}) \qquad \mbox{and} \qquad \W(\D') =  \sum_{\vec{v}} \Pr[\D' = \vec{v}] \cdot \sum_i v_i x^*_i(\vec{v}),
\]
where $x^*(\vec{v}) \ge 0$ and $\theta^*(\vec{v})$ are the allocation rule and the payment rule of the optimal auction given any realized valuation profile $\vec{v}$.

Our work builds on two special cases -- independent valuations, and full surplus extraction.

\paragraph{Optimal auction for independent valuations.} When $\D =  \D_1 \times \cdots \times \D_n$ is a product distribution, the optimal auction has a simple form given by \citet{myerson}. 
For distribution $\D_i$ with support $z_1 < z_2 < \cdots < z_k$, its \emph{virtual value} function $\varphi_{\D_i}$ is defined as: 
\begin{equation} 
\label{eq:vv} 
\varphi_{\D_i}(z_k) = z_k \qquad \mbox{and} \qquad \varphi_{\D_i}(z_{\ell}) = z_{\ell} - (z_{\ell+1} - z_{\ell})  \frac{S_{\D_i}(z_{\ell+1})}{f_{\D_i}(z_{\ell})}, \qquad \forall \ell < k.
\end{equation}
If buyer $i$ is the only buyer in the system, the optimal auction sets a fixed price, and the buyer buys the item when her valuation is at least this price. The {\em reserve price} of $\D_i$, denoted $r_{\D_i}$ is the smallest value $r$ in the support of $\D_i$ that maximizes the corresponding revenue $r S_{\D_i}(r)$. It is easy to check that $\varphi_{\D_i}(r_{\D_i}) \ge 0$.

Throughout this paper, we assume the distributions $\D_i$ are {\em regular}, so that $\varphi_{\D_i}(z)$ is a non-decreasing function of $z$. Therefore,  for all $v < r_{\D_i}$, we have $\varphi_{\D_i}(v) < 0$. Our results for the non-i.i.d. case also hold when the distributions are non-regular, by using the non-decreasing ironed virtual value function~\citep{myerson,Elkind} instead. 

For discrete regular distributions, Myerson's auction takes the form~\citep{Elkind} in Algorithm~\ref{alg:my}. 
Note that this auction is also \emph{ex-post} IC and IR.

\begin{algorithm}[htbp]
	\DontPrintSemicolon
Sort the buyers in decreasing order of $q_i = \varphi_{\D_i}(v_i)$. Assume no two values are identical (can be ensured by using a fixed tie-breaking rule). \;
Allocate to the bidder $j$ with highest virtual value $q_j$, provided $q_j\geq 0$. \;
Let $m$ be the bidder with second highest virtual value, and let $w = \max(0, q_{m})$. \;
Charge $j$ the smallest value $z$ in the support of $\D_{j}$ such that $\varphi_{\D_{j}}(z) > w$. \;
\caption{Myerson's Auction with prior $\D$ and valuations $\vec{v}$.}
\label{alg:my}
\end{algorithm}

\paragraph{Extracting full surplus as revenue.} At the other extreme, a celebrated result of \citet{CM} shows that for distributions $\D'$ which are ``sufficiently correlated'', the optimal auction extracts full surplus (\latin{i.e.}, the revenue equals the maximum valuation in each valuation profile). 
Formally, the result requires that for each agent, their conditional distribution over others' values given their own value is full rank; for our purposes, we require a restriction of their result to $n = 2$ buyers, each with two possible valuations.
\begin{theorem}[\citep{CM}]
\label{CM2} 
For $n = 2$ buyers, where each buyer $i$ has $\K_i = 2$ and the joint distribution over the valuations is $\D'$, the seller (who faces an interim IR constraint) can extract the entire social welfare (i.e. get expected revenue equal to the expected value of the maximum of the buyer's valuations) when $\D'$ is a correlated (i.e. not independent) distribution.
\end{theorem}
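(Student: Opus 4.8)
This is the specialization of Cr\'emer--McLean full surplus extraction to two binary-valued buyers, and the plan is to carry out the explicit construction in this case. Write buyer $i$'s support as $\{a_i,a_i'\}$ with $a_i<a_i'$, and let $M_i$ be the $2\times 2$ matrix whose two rows are the conditional distributions of the \emph{other} buyer's value given buyer $i$'s two possible values. The first step is the linear-algebra observation that, for $n=2$ with $\K_i=2$, the prior $\D'$ is a product distribution exactly when some $M_i$ has proportional rows; equivalently, ``$\D'$ is not independent'' is precisely the statement that $M_1$ and $M_2$ are both invertible. (If some value has probability $0$ the instance collapses to fewer types and the claim is vacuous or immediate, so we may assume full support.)

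The second step records what full extraction forces. Expected welfare is at most $\E[\max(V_1,V_2)]$, with equality only for the efficient allocation; and in any interim-IR mechanism revenue equals welfare minus a nonnegative consumer surplus. Hence extracting $\E[\max(V_1,V_2)]$ as revenue requires both using the efficient (second-price) allocation \emph{and} leaving every buyer type with exactly zero interim utility. So we fix the efficient allocation and start from its VCG payments, under which each type $v_i$ of buyer $i$ has some nonnegative information rent $U_i(v_i)$. Then we augment the VCG payments by a side payment $c_i(r_i,v_{-i})$ depending on buyer $i$'s report $r_i$ and the realized value $v_{-i}$ of the other buyer: for each possible report $r_i$, choose the two numbers $c_i(r_i,\cdot)$ as the solution of $M_i c_i(r_i,\cdot)=\tau$, where $\tau$ has the entry $U_i(r_i)$ in the coordinate corresponding to type $r_i$ and an arbitrarily large entry $B$ in the other coordinate; invertibility of $M_i$ from the first step is exactly what makes this system solvable. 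Truthful reporting then cancels the VCG rent against the side payment and leaves every type with interim utility $0$ (so interim IR holds), whereas any misreport yields expected utility at most $U_i(v_i)-B<0$ once $B$ is larger than every information rent; thus the mechanism is incentive compatible, and since each buyer type's expected payment equals its expected value under the efficient allocation, total expected revenue equals expected welfare $=\E[\max(V_1,V_2)]$.

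The mechanism built above is Bayesian incentive compatible and interim individually rational, which already yields the statement. If in addition one wants it to be ex-post IC, matching the auction class in \cref{sec:optimal}, the only remaining step is to check that the payment values (four per buyer) can be chosen to also satisfy the ex-post monotonicity/interval constraints; this is a short case analysis over the handful of ways the two two-point supports can interleave, and in each case the two revenue/zero-rent equations for a buyer, together with the invertibility of $M_i$, leave enough slack to land inside the ex-post IC intervals. I expect that last reconciliation of ex-post IC with full extraction to be the only mildly delicate point; the surplus-extraction core is otherwise a direct specialization of \citep{CM}.
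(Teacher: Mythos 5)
The paper does not prove this statement at all --- it is quoted from Cr\'emer--McLean \citep{CM} as a black box --- so there is no in-paper argument to compare against; your proposal is a reconstruction of the cited result, and its core is correct. The linear-algebra step is right (for a full-support $2\times 2$ joint distribution, $\det M_i$ is a nonzero multiple of the determinant of the joint probability matrix, so non-independence is exactly invertibility of each $M_i$), and the side-bet construction with the large penalty $B$ correctly yields a Bayesian-IC, interim-IR mechanism with zero interim rents, hence revenue equal to $\E[\max(V_1,V_2)]$.

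The one genuine loose end is exactly the point you defer at the end: the seller's mechanism class in \cref{sec:optimal} is \emph{ex-post} IC, and your side payment $c_i(r_i,v_{-i})$ depends on buyer $i$'s own report, so the large-$B$ trick only disciplines deviations in expectation --- ex post, a buyer can strictly gain from misreporting, and it is not obvious that a ``short case analysis'' restores ex-post IC for that construction. The standard and cleaner route (and the one matching the paper's setup) is to make the side bet depend \emph{only} on the opponent's report: solve $M_i c_i = U_i$ where $U_i$ is the vector of VCG information rents, and charge buyer $i$ the extra amount $c_i(v_{-i})$ on top of the VCG payment. Since this surcharge is independent of $i$'s own report, ex-post IC is inherited verbatim from VCG; interim IR holds with equality because $\E[c_i(V_{-i})\mid V_i=v_i]=U_i(v_i)$ by construction; and invertibility of $M_i$ (your Step 1) is precisely the solvability condition. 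This removes both the need for $B$ and the unresolved reconciliation step, and it is the version of Cr\'emer--McLean the paper actually relies on.
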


\subsection{Auctions with an Information Intermediary}
\label{sec:bbm}
We next formalize the model of an information intermediary illustrated in~\cref{fig1a}. Since the effect of the intermediary's signal is captured by the resulting posterior distribution over valuations, for ease of notation, we {henceforth use ``signal'' to refer to a distribution $\S$ over valuations}.

A signaling scheme $\Z = \{\gamma_q, \S_q \}_{q\in[m]}$ comprises a collection of \emph{signals} (\latin{i.e.}, joint distributions over valuations) $\S_1, \S_2, \ldots, \S_m$ and corresponding non-negative weights $\gamma_1, \gamma_2, \ldots, \gamma_m$. The scheme $\Z$ is feasible (or ``Bayes plausible''~\citep{kamenica2011bayesian}) if it satisfies $\sum_q \gamma_q = 1$ and $\sum_q \gamma_q \S_q = \D$.  
The intermediary commits to scheme $\Z$ before the auction, and it is known to the seller and all buyers.

The intermediary maps observed valuation profile $\vec{v} \sim \D$ to signal $\S_q$ with probability $\frac{\gamma_q \Pr[\S_q = \vec{v}]}{\Pr[\D = \vec{v}]}$. The seller uses $\S_q$ as the shared prior and runs an optimal auction on the buyers. 
Note that though $\D$ is a product distribution, the $\{\S_q\}$ can be correlated. Abusing the notations introduced earlier in \cref{sec:optimal}, we denote the revenue generated by signaling scheme $\Z$ as $\R(\Z) =  \sum_q \gamma_q \R(\S_q)$, its consumer surplus by $\cs(\Z) = \sum_q \gamma_q \cs(\S_q)$, and its {welfare} by $\W(\Z)= \sum_q \gamma_q \W(\S_q)$.

When $\D$ is a product distribution, the revenue from any signaling scheme must be at least the optimal revenue of Myerson's auction without signaling, $\R(\D)$. To see this, we note that Myerson's auction on $\D$ is \latin{ex-post} IC and IR. This means that this allocation and payment rule is still a feasible (\latin{interim} IC and IR) mechanism conditioned on receiving any signal, completing the argument. Therefore, the consumer surplus $\cs(\Z)$ under any signaling scheme $\Z$ is bounded by the difference of the maximum possible welfare $\W^* = \E_{\vec{V} \sim \D}[\max_i V_i]$ and the maximum revenue without signaling $\R(\D)$. We henceforth denote this bound as $\opt$, which is defined as follows:
\[
\opt = \W^* - \R(\D).
\]
We say that $\Z$ is a $\tau$-approximation signaling scheme if $\cs(\Z) \ge \frac{\opt}{\tau}$. Our goal is to find the best approximation factor $\tau$ via a signaling scheme whose computation time is polynomial in $n$ and $\K$. In the rest of the paper, we omit the dependence on $\D$ when clear from context.

\paragraph{Optimal signaling for a single buyer.}  
For $n=1$ buyer, \citet{bergemann2015limits} present a signaling scheme with consumer surplus exactly equal to $\opt$ (\latin{i.e.}, implementing the point $O$ in~\cref{fig1b}. Their signaling scheme constructs distributions (signals) $\S_1,\S_2, \ldots, \S_m$ and assigns weights $\gamma_1, \gamma_2, \ldots, \gamma_m$ to them such that $\sum_q \gamma_q \S_q = \D$.   

Let prior $\D$ takes value $v_{i}$ with probability $\eta_{i}$, where $0 < v_1 < \cdots <v_{k}$. Let $\vec{\eta} =(\eta_1, \eta_2, \cdots, \eta_k)$. In each iteration $\ell$, the algorithm constructs an {\em equal revenue} distribution $\S_{\ell}$ and subtracts it from the prior $\D$. This equal revenue distribution assigns positive probability $\eta_{i \ell}$ to $v_i$ if $\eta_i >0$ and assigns $\eta_{i \ell} = 0$ if $\eta_i =0$. In $\S_{\ell}$, the seller raises equal revenue by setting the price to be any of the values $v_i$ with $\eta_i > 0$. It is easy to see that the equal revenue condition specifies a unique distribution $\S_{\ell}$. Note that since this signal is equal revenue, (we may assume) the seller sets the lowest value as price, so that the item always sells and the consumer surplus is maximum possible.

Let $\vec{\eta_{\ell}}$ be the probability vector of $\S_{\ell}$. We set the largest weight $\gamma_{\ell}$ such that $\vec{\eta} - \gamma_{\ell}   \vec{\eta_{\ell}} \ge 0$. We update $\D$ by setting $\vec{\eta}$ to $\vec{\eta} - \gamma_{\ell} \vec{\eta_{\ell}}$,  and increase $\ell$ by one. We repeat this till the support of $\D$ becomes empty. The $\{\gamma_{\ell}, \S_{\ell} \}$ specifies the signaling scheme. We illustrate this procedure by an example.

\begin{example}
Suppose the type space is $\{1,2,3\}$ and $\D = \langle \frac{1}{3}, \frac{1}{3}, \frac{1}{3} \rangle$ are the probabilities of these types. The monopoly price is $\theta = 2$ with revenue $\R(\D) = \frac{4}{3}$, while the point $A$ in~\cref{fig1b} has social welfare $\R(A) = \W^* =  \E[\D] = 2$. Suppose $\S_1 = \langle \frac{1}{2}, \frac{1}{6}, \frac{1}{3} \rangle$ with $\gamma_1 = \frac{2}{3}$; $\S_2 = \langle 0, \frac{1}{3}, \frac{2}{3} \rangle$ with $\gamma_2 = \frac{1}{6}$; and $\S_3 = \langle 0, 1, 0 \rangle$ with $\gamma_3 = \frac{1}{6}$. It is easy to check that the monopoly price for each signal is the lowest price in its support so that the item always sells, and $\sum \gamma_i \R(\S_i) = \frac{4}{3}$. Therefore, $\sum \gamma_i \cs(\S_i) = 2 - \frac{4}{3} = \frac{2}{3} = \opt$, which corresponds to point $O$ in~\cref{fig1b}.
\end{example}

We henceforth use $\BBM(v,D)$ to refer to this scheme when the buyer has valuation distribution $D$ and the realized value is $v \sim D$. Below we state some critical properties of the $\BBM$ scheme which we use in our results.
\begin{lemma}[Implicit in \citep{bergemann2015limits}] 
\label{lem: CS(BBM)} 
For a single buyer with value distribution $\D$ (with reserve price $r_{\D}$), the $\BBM$ mechanism satisfies the following properties:
\begin{enumerate*}
\item For any signal $\S_q$, $\varphi_{\S_q}(v) \ge 0$ for all $v$ in the support of $\S_q$.
\item  $
\cs(\BBM) = \opt
 \ge  \Pr_{V\sim \D} [V < r_{\D}] \cdot \E_{V \sim \D}[V \mid V < r_{\D}]   = \sum_{v < r_{\D}} v f_{\D}(v).$
	\end{enumerate*}
\end{lemma}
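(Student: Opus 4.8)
The plan is to establish the two properties separately, drawing entirely on the explicit construction of the $\BBM$ scheme described just above. For property (1), the key observation is that each signal $\S_q$ is an \emph{equal-revenue} distribution: the seller raises exactly the same revenue by posting any value in the support of $\S_q$ as a price. I would first translate this equal-revenue condition into a statement about virtual values. Recall from \eqref{eq:vv} that for a value $z_\ell$ below the top of the support, $\varphi_{\S_q}(z_\ell) = z_\ell - (z_{\ell+1}-z_\ell)\frac{S_{\S_q}(z_{\ell+1})}{f_{\S_q}(z_\ell)}$; a standard manipulation shows $f_{\S_q}(z_\ell)\cdot\varphi_{\S_q}(z_\ell) = z_\ell S_{\S_q}(z_\ell) - z_{\ell+1}S_{\S_q}(z_{\ell+1})$, i.e.\ $\varphi_{\S_q}(z_\ell)\ge 0$ exactly when the posted-price revenue at $z_\ell$ is at least the posted-price revenue at $z_{\ell+1}$. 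Since all these posted-price revenues are \emph{equal} in an equal-revenue distribution, every such difference is zero, so $\varphi_{\S_q}(z_\ell) = 0$ for all $z_\ell$ below the top, and $\varphi_{\S_q}(z_k) = z_k > 0$ at the top; in particular $\varphi_{\S_q}(v)\ge 0$ for every $v$ in the support, which is property (1).

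For property (2), the equality $\cs(\BBM)=\opt$ follows because by construction $\sum_q \gamma_q \S_q = \D$ (Bayes plausibility), and within each signal $\S_q$ the seller is indifferent among all posted prices, so we may assume she posts the \emph{lowest} value in the support; then the item always sells and $\W(\S_q) = \E_{V\sim\S_q}[V]$, while the revenue is the common equal-revenue value, which I would argue (this is the content of the Bergemann--Brooks--Morris analysis) aggregates to exactly $\R(\D) = \Rm$ across the scheme. Hence $\cs(\BBM) = \sum_q\gamma_q\W(\S_q) - \sum_q\gamma_q\R(\S_q) = \W^* - \Rm = \opt$. The inequality $\opt \ge \sum_{v<r_\D} v f_\D(v)$ I would prove directly: starting from $\opt = \W^* - \Rm = \E_{V\sim\D}[V] - r_\D S_\D(r_\D)$, split $\E_{V\sim\D}[V] = \sum_{v<r_\D} v f_\D(v) + \sum_{v\ge r_\D} v f_\D(v)$ and bound the second sum below by $r_\D\sum_{v\ge r_\D} f_\D(v) = r_\D S_\D(r_\D) = \Rm$, using that $v\ge r_\D$ in that range. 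Subtracting $\Rm$ leaves $\opt \ge \sum_{v<r_\D} v f_\D(v) = \Pr_{V\sim\D}[V<r_\D]\cdot\E_{V\sim\D}[V\mid V<r_\D]$, as claimed.

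The main obstacle, such as it is, is making precise the claim that $\sum_q \gamma_q \R(\S_q) = \Rm$ — i.e.\ that the $\BBM$ scheme exactly preserves the seller's revenue. This is the crux of the Bergemann--Brooks--Morris result and is what makes the lemma's equality (not just inequality) hold; since the statement says ``implicit in \citep{bergemann2015limits}'', I would cite that work for the revenue-preservation fact and keep the argument here to the short derivations above. Everything else — the virtual-value computation for equal-revenue distributions and the elementary bound on $\opt$ — is routine algebra given the construction.
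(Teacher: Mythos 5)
Your proposal is correct and matches the reasoning the paper leaves implicit: the paper gives no formal proof of this lemma, but its description of the $\BBM$ construction in \cref{sec:bbm} (equal-revenue signals, seller indifferent so she posts the lowest support value, item always sells) contains exactly the ingredients you use, and your identity $f_{\S_q}(z_\ell)\varphi_{\S_q}(z_\ell) = z_\ell S_{\S_q}(z_\ell) - z_{\ell+1}S_{\S_q}(z_{\ell+1})$ correctly converts the equal-revenue condition into property (1). Deferring the revenue-preservation fact $\sum_q \gamma_q \R(\S_q) = \R(\D)$ to \citep{bergemann2015limits} is appropriate, since that is precisely the content of the cited result and the reason the lemma is labeled ``implicit'' in that work; the remaining algebra for the inequality in property (2) is correct as written.
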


\section{Lower Bound Instances}
\label{sec:lower} 
\label{sec:char}
We now prove \cref{thm:lb1,thm:lb2}. We note that though our lower bounds assume that the seller runs interim IR and ex-post IC mechanisms, and the intermediary can send arbitrary signals, the same lower bounds hold even when the seller runs ex-post IC and IR mechanisms, provided we restrict the intermediary's signals to induce posteriors that are product distributions.

Our lower bounds are based on a $2$-buyer instance illustrated in~\cref{fig2}: given values $a > b > c > d$, buyer $1$ has value $V_1 \in \{a,b\}$ with probabilities $\alpha$ and $1-\alpha$ respectively, while buyer $2$ has value $V_2 \in\{c,d\}$ with probabilities $\beta$ and $1-\beta$ respectively. We choose $\alpha a = b$ and $\beta c = d$; thus, the virtual values satisfy: $\varphi_1(a) = a$, $\varphi_2(c) = c$, and $\varphi_1(b) = \varphi_2(d) = 0$. Call this distribution $\D$.

\begin{figure*}[!t]
\centering
\resizebox{0.7\columnwidth}{!}{\input{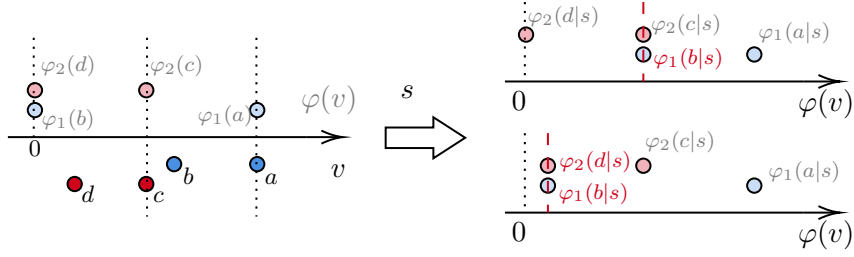}}
\caption{\it\small
Illustrating the setting for~\cref{thm:lb1,thm:lb2}: On the left (below the axis) we show the setting without signaling, where buyer 1 (blue) has values $(a,b)$ and buyer 2 (red) has values $(c,d)$; we also show the corresponding virtual values (above the axis). On the right, we show the two settings characterized by \cref{lem:struct_opt} under which a signal $s$ has non-zero consumer surplus (the changed virtual values are highlighted).
}
\label{fig2}
\end{figure*}

\paragraph{Characterization of optimal signaling.} By~\cref{CM2}, we know any signal that correlates the buyers raises zero consumer surplus. Therefore, the only signals $\S$ of interest are those under which buyer values are independent. Abusing notation we denote such a signal as $s = (\alpha', \beta')$, where $\Pr[v_1 = a] = \alpha'$ and $\Pr[v_2 = c] = \beta'$. 
Note that in this instance, for a signal to get maximum welfare the resulting optimal mechanism must always award buyer $1$, and for non-zero consumer surplus it must award the item to buyer $1$ at price $b$, or buyer $2$ at price $d$.

Let $\cs(s)$ denote the consumer surplus under any such a signal $s$, and $\varphi_1(b|s)$ and $\varphi_2(d|s)$ denote the new virtual values (note that by definition, $\varphi_1(a|s)=a$ and $\varphi_2(c|s)=c$ under any signal $s$ with $\alpha',\beta'>0$). 
We can use Myerson's characterization (\cref{sec:optimal}) to exhaustively characterize the resulting optimal mechanisms as a function of $(\varphi_1(b|s),\varphi_2(d|s))$:
\begin{proposition}
\label{lem:struct_any}
Conditioned on receiving a signal $s$, we have the following cases:
\begin{enumerate*}
	\item If $\varphi_1(b) \geq c$, then the optimal mechanism is to sell to Buyer $1$ at price $b$. $\cs(s) = (a - b)\alpha'$.
	\item If $\varphi_2(d) \geq \max(0, \varphi_1(b))$, then the optimal mechanism is to try selling to Buyer $1$ at price $a$ then to Buyer $2$ at price $d$. $\cs(s) = (1 - \alpha')\beta'(c - d)$.
	\item If $\varphi_1(b) \leq 0$ and $\varphi_2(d) \leq 0$, then the optimal mechanism is to try selling to Buyer $1$ at price $a$ then to Buyer $2$ at price $c$. $\cs(s) = 0$.
	\item If $0 \leq \varphi_1(b) \leq c$ and $\varphi_2(d) \leq \varphi_1(b)$, then the optimal mechanism is to sell to Buyer $1$ at price $b$ if Buyer $2$ has valuation $d$; otherwise, it tries selling to Buyer $1$ at price $a$ then to Buyer $2$ at price $c$. $\cs(s) = \alpha' (1 - \beta') (a - b)$.
\end{enumerate*}
\end{proposition}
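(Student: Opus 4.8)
The plan is to apply Myerson's auction for independent discrete valuations (\cref{alg:my}) directly to the signal $s=(\alpha',\beta')$: since under $s$ the two buyers are independent, this auction is exactly the revenue-optimal, ex-post IC, interim IR mechanism, so it suffices to simulate it on each realization $(V_1,V_2)\in\{a,b\}\times\{c,d\}$, read off who wins and at what threshold price, and then sum the consumer surplus weighted by the product probabilities (afterwards re-describing the outcome as the sequential posted prices named in each case). Three elementary observations collapse everything to the four stated cases: (i) $\varphi_1(a\mid s)=a$ and $\varphi_2(c\mid s)=c$ because $a,c$ are the top values of the two supports; (ii) from \eqref{eq:vv}, $\varphi_1(b\mid s)\le b<a$ and $\varphi_2(d\mid s)\le d<c$, so whenever $V_1=a$ buyer~$1$ has the strictly largest virtual value, and buyer~$2$'s virtual value is strictly larger at value $c$ than at value $d$; hence (iii) the allocation and prices depend only on the positions of the two ``free'' quantities $\varphi_1(b\mid s)$ and $\varphi_2(d\mid s)$ relative to $0$, relative to $c$, and relative to each other --- exactly the data the four cases record.

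First I would verify exhaustiveness over the admissible parameter region, where $\varphi_1(b\mid s)\le b$ and $\varphi_2(d\mid s)\le d$. Split on $\varphi_1(b\mid s)$ versus $c$: the branch $\varphi_1(b\mid s)\ge c$ is Case~1, where by (ii) $\varphi_2(d\mid s)<c\le\varphi_1(b\mid s)$ so buyer~$1$ always wins; on the branch $\varphi_1(b\mid s)<c$, split on $\varphi_2(d\mid s)$ versus $\max(0,\varphi_1(b\mid s))$: the ``$\ge$'' sub-branch is Case~2 (and $\varphi_2(d\mid s)<c$ forces $\varphi_1(b\mid s)<c$, so Cases~1 and~2 are disjoint), and on the ``$<$'' sub-branch a final split according to whether $\varphi_1(b\mid s)\le 0$ (Case~3) or $\varphi_1(b\mid s)>0$ (Case~4, where also $\varphi_1(b\mid s)<c$) exhausts everything. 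The degenerate boundary patterns ($\varphi_1(b\mid s)\in\{0,c\}$, $\varphi_2(d\mid s)=\max(0,\varphi_1(b\mid s))$) are resolved by the fixed tie-breaking rule of \cref{alg:my} and in any case do not arise in the instances constructed for \cref{thm:lb1,thm:lb2}.

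Second, for each case I would run \cref{alg:my} realization by realization. The one recurring computation is the threshold charged to a winning buyer: it is the smallest support value $z$ with $\varphi(z)>w$, where $w:=\max(0,\text{second-highest virtual value})$, so buyer~$1$ pays $b$ when $\varphi_1(b\mid s)>w$ and $a$ otherwise, while buyer~$2$ pays $d$ when $\varphi_2(d\mid s)>w$ and $c$ otherwise. Tracking $w$ over $V_2\in\{c,d\}$ and comparing with $\varphi_1(b\mid s),\varphi_2(d\mid s)$ gives: in Case~1 buyer~$1$ always wins and always pays $b$, so only $\{V_1=a\}$ contributes and $\cs(s)=\alpha'(a-b)$; in Case~2 buyer~$1$ (when $V_1=a$) pays $a$ while buyer~$2$ (when $V_1=b$) wins and pays $d$ because $\varphi_2(d\mid s)\ge\max(0,\varphi_1(b\mid s))$, so only $\{V_1=b,V_2=c\}$ contributes and $\cs(s)=(1-\alpha')\beta'(c-d)$; in Case~3 every sale is at the buyer's own value ($a$ to buyer~$1$, $c$ to buyer~$2$) or the item is unsold when both low virtual values are $\le 0$, so $\cs(s)=0$; in Case~4 buyer~$1$ pays $b$ exactly when $V_2=d$ (then $w=\max(0,\varphi_2(d\mid s))\le\varphi_1(b\mid s)$) and pays $a$ when $V_2=c$ (then $w=c>\varphi_1(b\mid s)$), while buyer~$2$ pays $c$ when $V_1=b,V_2=c$, so only $\{V_1=a,V_2=d\}$ contributes and $\cs(s)=\alpha'(1-\beta')(a-b)$.

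I do not expect a conceptual obstacle here: the content is a finite case check enabled by \cref{alg:my}. The only places that need care are the threshold-price bookkeeping within each case (keeping track of $w$ and of which support value first has virtual value exceeding it) and the degenerate ties at the case boundaries, which is why I would isolate facts (i)--(iii) at the outset and defer the boundary patterns to the tie-breaking convention and to the genericity of the parameters used later.
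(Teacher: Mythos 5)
Your proposal is correct and follows exactly the route the paper intends: the paper states \cref{lem:struct_any} as an immediate consequence of Myerson's characterization for discrete product priors (\cref{alg:my}) without writing out the case analysis, and your realization-by-realization simulation, together with the observations $\varphi_1(a\mid s)=a$, $\varphi_2(c\mid s)=c$, $\varphi_1(b\mid s)\le b$, $\varphi_2(d\mid s)\le d<c$, is precisely that omitted computation. The threshold-price bookkeeping and the exhaustiveness split you describe check out in every case, and deferring the degenerate boundary ties to the fixed tie-breaking rule is consistent with how the paper later uses the proposition.
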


Our main insight, however, is that the setting can be further simplified to get the following structural property for the optimal signaling scheme. 

\begin{theorem}[Structural Theorem]
\label{lem:struct_opt}
In an optimal signaling scheme, the only signals $s = (\alpha', \beta')$ that raise non-zero consumer surplus have the following form:
\begin{enumerate}[label=(\arabic*'),nosep]
	\item Under signal $s$, $\varphi_1(b|s) = \varphi_2(c|s) = c$ and $\cs(s) = \alpha'(a - b)$.
	\item Under signal $s$, $\varphi_2(d|s) = \varphi_1(b|s) \geq 0$ and $\cs(s) = \alpha'(1 - \beta')(a - b)$.
\end{enumerate}
\end{theorem}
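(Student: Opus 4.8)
The plan is to begin with an arbitrary optimal signaling scheme and rewrite it one signal at a time, never decreasing the total consumer surplus, until every signal with positive surplus has one of the two listed forms; optimality of the original scheme then forces the rewritten scheme to also attain $\opt$, and inspecting its signals proves the claim. By \cref{CM2}, any signal whose posterior is not a product distribution extracts the full welfare and so contributes zero surplus, so every positive-surplus signal has a product posterior, which we identify (as before \cref{lem:struct_any}) with a point $s=(\alpha',\beta')\in[0,1]^2$, $\alpha'=\Pr[V_1=a]$, $\beta'=\Pr[V_2=c]$. A scheme is then a finite weighted set $\{(\gamma_q,(\alpha'_q,\beta'_q))\}$ with $\sum_q\gamma_q=1$ and the Bayes-plausibility constraint $\sum_q\gamma_q(\alpha'_q,\beta'_q)=(\alpha,\beta)$, and its surplus is $\sum_q\gamma_q\,\cs(\alpha'_q,\beta'_q)$ with $\cs(\cdot)$ the piecewise function of \cref{lem:struct_any}. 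Forms (1') and (2') are precisely the Case-1/Case-4 boundary $\varphi_1(b|s)=c$ and the Case-2/Case-4 boundary $\varphi_1(b|s)=\varphi_2(d|s)\ge 0$; on each of these two revenue-optimal mechanisms tie in revenue, and we adopt the convention (made rigorous by an infinitesimal perturbation of the signal) that the seller breaks the tie toward the surplus-larger mechanism, which \eqref{eq:vv} identifies as the Case-1 (resp.\ Case-4) one, giving the stated $\cs(s)$; in particular on form (2') one checks $\alpha'(1-\beta')(a-b)\ge(1-\alpha')\beta'(c-d)$, so this choice is consistent.

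The engine of the proof is a \emph{splitting} operation. Suppose the current scheme uses a signal $s=(\alpha',\beta')$ with positive surplus that is not of form (1') or (2'); then $s$ lies in the interior of Case 1, 2, or 4, and I replace it by a Bayes-plausible mixture of one signal of form (1') or (2') and one or two zero-surplus signals, keeping both marginals fixed and not decreasing the total surplus. The zero-surplus signals used are degenerate ones in which a single buyer's value is deterministic, all of which have $\cs=0$ by a short run of Algorithm~\ref{alg:my}. The cleanest instance is Case 1, i.e.\ $\varphi_1(b|s)\ge c$, equivalently $\alpha'\le\alpha^*:=\tfrac{b-c}{a-c}$: here $\cs(s)=(a-b)\alpha'$ is linear in $\alpha'$ and independent of $\beta'$, and $s=\tfrac{\alpha'}{\alpha^*}(\alpha^*,\beta')+\bigl(1-\tfrac{\alpha'}{\alpha^*}\bigr)(0,\beta')$ is a Bayes-plausible split into a form-(1') signal and a zero-surplus signal that preserves the surplus exactly. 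For Case 2 with $\alpha'\le\alpha$ (so $\varphi_1(b|s)\in[0,d]$), let $\beta^{\diamond}=\tfrac{d-\varphi_1(b|s)}{c-\varphi_1(b|s)}$ be the $\beta$-coordinate at which, for this $\alpha'$, the two low virtual values coincide; Case 2 forces $\beta'\le\beta^{\diamond}$, so $s=\tfrac{\beta'}{\beta^{\diamond}}(\alpha',\beta^{\diamond})+\bigl(1-\tfrac{\beta'}{\beta^{\diamond}}\bigr)(\alpha',0)$ splits $s$ into a form-(2') signal and a zero-surplus signal, and the needed inequality $\tfrac{\beta'}{\beta^{\diamond}}\alpha'(1-\beta^{\diamond})(a-b)\ge(1-\alpha')\beta'(c-d)$ collapses, after substituting $\varphi_1(b|s)=b-(a-b)\tfrac{\alpha'}{1-\alpha'}$ from \eqref{eq:vv}, to the triviality $b>d$.

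The remaining cases --- Case 4, and Case 2 with $\alpha'>\alpha$ --- are handled by analogous but fussier splits, taking the canonical piece on, or at an endpoint of, the relevant boundary curve and padding with one or two degenerate zero-surplus signals; in each case the surplus inequality again reduces to a consequence of $a>b>c>d$. A split replaces one ``bad'' signal by a canonical signal plus only zero-surplus signals, so the number of bad signals strictly decreases and the process terminates, leaving a Bayes-plausible scheme whose positive-surplus signals all have form (1') or (2') and whose surplus is at least that of the starting optimal scheme; hence it attains $\opt$, and its non-zero-surplus signals carry exactly the stated $\cs(s)$. The main obstacle throughout is that $\cs(\cdot)$ is \emph{discontinuous} across the case boundaries: it jumps \emph{upward} onto the very curves (1') and (2') (which is exactly why those are the informative signals worth keeping), so a naive ``split along a coordinate'' generically bleeds surplus; arranging, in every sub-case, the right two or three pieces and weights so that both marginals and the surplus are respected, and pinning down the boundary tie-breaking, is where the real work lies --- the Case-1 and ``$\alpha'\le\alpha$'' Case-2 splits above being representative.
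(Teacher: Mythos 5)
Your proposal is correct and is essentially the paper's own argument in different clothing: the paper likewise reduces to product posteriors via Cr\'emer--McLean, sets up the relaxed LP~(\ref{eq:lp}), and pushes each positive-surplus signal onto the boundary curves $\varphi_1(b|s)=c$ and $\varphi_1(b|s)=\varphi_2(d|s)$ by trading off $\gamma_s$ against $\alpha'$ or $\beta'$ while conserving probability mass --- which is precisely your splitting into one canonical signal plus degenerate zero-surplus remainders --- and then closes with the same inequality $\alpha'(1-\beta')(a-b)\ge(1-\alpha')\beta'(c-d)$ on form (2'). The differences are presentational (explicit Bayes-plausible splits and explicit tie-breaking at the boundaries versus the paper's $\le$-constrained LP perturbation), and the sub-cases you defer (Case 4, and Case 2 with $\varphi_1(b|s)<0$) do go through with the analogous splits, e.g.\ padding with the degenerate signals $(\alpha',1)$ or $(1,\beta')$.
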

\begin{proof}
Recall that we restrict ourselves to signals $\S$ under which the buyer valuations remain independent. Any such signal can be alternately written as $s = (\alpha', \beta')$ where $\alpha' = \Pr[v_1 = a]$ and $\beta' = \Pr[v_2 = c]$. For ease of notation, we henceforth drop the conditioning of virtual valuations on signal $s$ (\latin{i.e.}, write $\varphi(\cdot)$ for $\varphi(\cdot|s)$) when clear from context.

Next, let $\gamma_s$ denote the weight of any signal $s = (\alpha', \beta')$. The signaling scheme that maximizes consumer surplus is the solution to the following linear program written over signals $s = (\alpha', \beta')$:
\begin{equation}
\label{eq:lp}
\begin{array}{lccll}
\text{Maximize}  & \displaystyle\sum_s \gamma_s \cs(s) &\\
\text{Subject to}& \sum_{s = (\alpha', \beta')} \gamma_s \alpha' \beta' & \le & \alpha \beta & \\
&\sum_{s = (\alpha', \beta')} \gamma_s \alpha' (1-\beta') & \le & \alpha (1-\beta) & \\
&\sum_{s = (\alpha', \beta')} \gamma_s (1-\alpha') \beta' & \le & (1-\alpha) \beta & \\
&\sum_{s = (\alpha', \beta')} \gamma_s (1-\alpha')(1- \beta') & \le & (1-\alpha)(1- \beta) & \\
&\gamma_s & \ge & 0 & \forall s
\end{array}
\end{equation}

We examine the cases in~\cref{lem:struct_any} with positive consumer surplus, and characterize the optimal solution:
\begin{itemize*}
	\item In Case (1), we have $\varphi_1(b) = c$. To see this, consider any signal $s$ with $\varphi_1(b) > c$. Suppose we increase $\alpha'$ and decrease $\gamma_s$ while preserving the product $\alpha' \gamma_s$. Since $\gamma_s \cs(s) = \gamma_s \alpha'(a-b)$, this is preserved by the change. Therefore, the objective of LP~(\ref{eq:lp}) is preserved, and so are the first two constraints. Further, since $(1-\alpha')$ and $\gamma_s$ decrease, this only makes the third and fourth constraints more feasible. This transformation decreases $\varphi_1(b)$.
	\item In Case (2) and (4), we have $\varphi_2(d) = \varphi_1(b)$.  It does not help to make them unequal by a similar argument as above: In case (2), if $\varphi_2(d) > \varphi_1(b)$, we can increase $\beta'$ while preserving $\gamma_s \beta'$. Since $\gamma_s \cs(s) =  \gamma_s (1 - \alpha')\beta'(c - d)$, this does not change the contribution to the objective of LP~(\ref{eq:lp}), and preserves all constraints. This transformation decreases $\varphi_2(d)$. In case (4), if $\varphi_2(d) < \varphi_1(b)$, we can increase $\alpha'$ while preserving $\gamma_s \alpha'$. Since $\gamma_s \cs(s) =  \gamma_s \alpha' (1 - \beta') (a - b )$,  this does not change the contribution to the objective of LP~(\ref{eq:lp}), and preserves all constraints. This transformation decreases $\varphi_1(b)$.
\end{itemize*}

Therefore, the only two types of signals $s$ that give positive $\cs$ are
\begin{enumerate}[label=(\arabic*')]
	\item If $\varphi_1(b) = c$, then $\cs(s) = \alpha'(a - b)$.
	\item If $\varphi_2(d) = \varphi_1(b) \geq 0$, then $\cs(s) = \max((1 - \alpha')\beta'(c - d), \alpha'(1 - \beta')(a - b))$.
\end{enumerate}
As $(1 - \beta')(b - d) \geq 0$, we have
\[
\left(b - d + \frac{\beta'}{1 - \beta'} (c - d)\right)(1 - \beta') \geq \beta'(c - d).
\]
Notice that in Case (2'), we have $\varphi_1(b) = b - \frac{\alpha'}{1 - \alpha'}(a - b) = d - \frac{\beta'}{1 - \beta'}(c - d) = \varphi_2(d)$.
This gives
\[
\alpha'(1 - \beta')(a - b) \geq (1 - \alpha')\beta'(c - d).
\]
Thus, the two types of signals $s$ that give positive $\cs$ become
\begin{enumerate}[label=(\arabic*')]
	\item If $\varphi_1(b) = c$, then $\cs(s) = \alpha'(a - b)$.
	\item If $\varphi_2(d) = \varphi_1(b) \geq 0$, then $\cs(s) = \alpha'(1 - \beta')(a - b)$. \qedhere
\end{enumerate}
\end{proof}

Using the above structural theorem, the proofs of~\cref{thm:lb1,thm:lb2} follow by different choices of the parameters $(a,b,c,d)$. Suppose the virtual values of $b$ and $d$ are slightly above zero with $\varphi_1(b) > \varphi_2(d)$ so that Case (4) in Proposition~\ref{lem:struct_any} is uniquely optimal for the seller. The optimal auction generates consumer surplus $\cs(\D) = \alpha (1 - \beta) (a - b) =  \frac{b}{c} \cdot \frac{c - d}{a} \cdot (a - b)$ according to \cref{lem:struct_any}.

\subsection{Proof of Theorem~\ref{thm:lb1}}
\label{app:lb1}
To prove \cref{thm:lb1}, we set $b \to c^+$.\footnote{$\varphi_1(b) - \varphi_2(d)$ and $\varphi_2(d)$ can be arbitrarily small as long as positive, so we take the limits for them first, i.e., we are calculating $\lim_{b \to c^+}\lim_{\varphi_2(d) \to 0^+, \varphi_1(b) \to \varphi_2(d)^+} \cs$ in the following part of the proof. This allows us to treat $\alpha = \frac{b}{a}$ and $\beta = \frac{d}{c}$ in calculating $(1 - \alpha)(1 - \beta)$, as $\alpha$ and $\beta$ are not infinitesimally close to $1$ for any fixed $\varepsilon$. (We will define $d = (1 - \varepsilon / 2) b$.)} Now in Proposition~\ref{lem:struct_any}, in Case (1), we must have $\alpha' \rightarrow 0^+$ since $b \rightarrow c^+$, so that $\cs \to 0$. Also if $\alpha' = 1$ in a signal then $\cs = 0$ here. The only other signal where the item is allocated to the higher bidder is in Case (4) when $\beta' = 0$. Let $\gamma$ denote the probability of the signal of this type $s = (\alpha', 0)$. (Having multiple signals of this form gives the same CS as having a single signal as their average.) Since $\varphi_1(b) \geq \varphi_2(d)$, we have $\alpha' \leq \frac{b - d}{a - d}$.

By the constraints of LP~(\ref{eq:lp}), we have:
\[
\Pr[v_1 = b \land v_2 = d] = \gamma (1-\alpha') \le (1-\alpha)(1-\beta),
\]
which simplifies to $\gamma \le \frac{(a-d) \cdot (c-d)}{ac}$. The consumer surplus in this case is therefore:
\[
\cs = \gamma \cs(s) = \gamma \alpha' (a-b) \le \frac{(b-d) \cdot (c-d)}{ac} \cdot (a-b) \le \frac{b-d}{b} \cdot \cs(\D).
\]
Setting $d = \left(1 - \frac{\varepsilon}{2}\right) b$ and combining with the fact that $\cs \to 0$ in Case (1), we have the consumer surplus of any efficient signaling, $\cs \to \frac{\varepsilon}{2} \cdot \cs(\D)$ so $\cs < \varepsilon \cdot \cs(\D)$.


\subsection{Proof of Theorem~\ref{thm:lb2}}
\label{app:lb2}
Without signaling, $\E[\max v_i] = \alpha a  + (1 - \alpha) b$ and $\R(\D) = \alpha a + (1 - \alpha) \beta c$. (Case (2), (3) and (4) in Proposition~\ref{lem:struct_any} give the same revenue $\R(\D)$.) Therefore
\[
\opt = \E[\max v_i] - \R(\D) = (1 - \alpha)(b - \beta c).
\]

Now we assign the values as: $\alpha = \beta = 1 - \delta$; $a = \frac{1}{(1 - \delta)^2}$, $b = \frac{1}{1 - \delta}$, $c = 1$, $d = 1 - \delta$ with $\delta \to 0^+$. Then we plug in the values and the two possible types of signals $s$ in Theorem~\ref{lem:struct_opt} become
\begin{enumerate}[label=(\arabic*')]
	\item If $\alpha' = \frac{1 - \delta}{2 - \delta} < \frac{1}{2}$, then $\cs(s) \le \frac{1}{2} \delta (1 + o(1))$.
	\item If $\alpha' \leq 1 - \delta$; $\beta' = \frac{1 - 3(1 - \alpha') + 3\delta(1 - \alpha') - \delta^2(1 - \alpha')}{1 - 2(1 - \alpha') + \delta(1 - \alpha')} > \frac{1 - 3(1 - \alpha')}{1 - 2(1 - \alpha')}$, then $\cs(s) \le \alpha'(1 - \beta') \delta (1 + o(1))$.
\end{enumerate}

The consumer surplus maximizing signaling scheme should use $t$ signals $S_{2, i}$ of type (2'), with $\alpha'_{2, i}$, $\beta'_{2, i}$ and weight $w(S_{2, i})$. There is an additional signal $S_1$ (with weight $w(S_1)$) of type (1') with $\alpha'_1$ and $\beta'_1$. (Having multiple signals of type (1') gives the same CS as having a single signal as their average.) Denoting the valuation of the first buyer by $v_1$ and the second buyer by $v_2$, the constraints in LP~(\ref{eq:lp}) imply the two constraints:  
\begin{equation}
\tag{Constraint (I)}
\label{eq:1}
\Pr[v_1 = b] = (1-\alpha'_1) w(S_1) + \sum_{i = 1}^t (1 - \alpha_{2, i}')\cdot w(S_{2, i}) \leq 1-\alpha = \delta,
\end{equation}
\begin{equation}
\tag{Constraint (II)}
\label{eq:2}
\Pr[v_1 = b \land v_2 = d]  = \sum_{i = 1}^t (1 - \alpha_{2, i}')(1 - \beta_{2, i}') \cdot w(S_{2, i}) \leq (1-\alpha)(1-\beta) =  \delta^2.
\end{equation}
Note that $\opt = (1 - \alpha)(b - \beta c) =  2\delta^2 (1 + o(1))$. 

The total consumer surplus therefore is:
\begin{align*}
\cs \leq &\frac{1}{2} \delta (1 + o(1)) \cdot w(S_1) + \sum_{i = 1}^t\alpha_{2, i}'(1 - \beta_{2, i}') \delta (1 + o(1)) \cdot w(S_{2, i})\\
\leq &\frac{1}{2} \delta (1 + o(1)) \cdot 2 \left(\delta - \sum_{i = 1}^t (1 - \alpha_{2, i}')\cdot w(S_{2, i})\right) + \delta (1 + o(1)) \sum_{i = 1}^t\alpha_{2, i}'(1 - \beta_{2, i}') \cdot w(S_{2, i})\\
= &\delta^2 (1 + o(1)) + \delta (1 + o(1)) \sum_{i = 1}^t (\alpha_{2, i}'(1 - \beta_{2, i}') - (1 - \alpha_{2, i}')) \cdot w(S_{2, i})\\
\leq &\delta^2 (1 + o(1)) + \delta (1 + o(1)) \sum_{i = 1}^t (1 - \alpha_{2, i}')(1 - \beta_{2, i}') \cdot w(S_{2, i})\\
\leq &\delta^2 (1 + o(1)) + \delta (1 + o(1)) \cdot \delta^2 = \delta^2 (1 + o(1)).
\end{align*}
Here the second inequality follows from \ref{eq:1}, and $\alpha_1' < \frac{1}{2}$ by the condition of (1'). The third inequality uses the implication of $\varphi_2(d) = \varphi_1(b)$ that $\beta_{2, i}' > \frac{1 - 3(1 - \alpha_{2, i}')}{1 - 2(1 - \alpha_{2, i}')}$. The fourth inequality uses \ref{eq:2}. This establishes a lower bound of $2$, since $\opt = 2\delta^2 (1 + o(1))$.


\subsection{Achieving the Pareto-Frontier in the I.I.D. Case} 
\label{sec:welfare}
We now ask if there are cases where we can circumvent \cref{thm:lb1} and maximize welfare while ensuring at least as much surplus as Myerson's auction (that is, achieve a point on the line $BO$ in \cref{fig1b}). Note that \cref{thm:lb1} rules this out for non-i.i.d. distributions. Surprisingly, however, for i.i.d. regular distributions $\D_i$, the following simple signaling scheme turns out to be sufficient for achieving point $B$ in \cref{fig1b}. Morally this shows why our lower bound constructions are delicate.


Suppose the common reserve price of $\D_i$ is $r$, and the maximum value of any buyer is $v_m$. 
\begin{enumerate*}
    \item If $v_m < r$, then the intermediary reveals $v_m$ and the identity of the highest buyer to the seller, who then sells to this bidder at price $v_m$.
    \item If $v_m \ge r$, the intermediary only reveals the information that some buyer has value $\ge r$ (but does not reveal either $v_m$ or the identity of the highest bidder). In this case, though the posterior is not a product distribution anymore, it can be shown that the seller's optimal auction remains the second price auction with reserve $r$.\footnote{Note this is the only case when the seller gets non-zero revenue in the optimal auction for the original product distribution. Suppose for the purpose of contradiction that the seller can do better for this posterior, she can also do better than the optimal auction for the original distribution.}
\end{enumerate*}

It is easy to check that the item always sells to the highest buyer, and the $\cs$ is exactly the same as in Myerson's auction without signaling, thereby achieving point $B$.  Note however that this scheme does not give any guarantees on approximating $\cs$ itself, since the surplus of Myerson's auction is not an approximation to $\opt$. The question of approximating $\opt$ is much more challenging even for the i.i.d. regular setting, and this is what we focus on in the next section.


\section{Approximating Consumer Surplus: The i.i.d. Case}
\label{sec:iid}
In this section, we present our main result (\cref{thm:main0}): An $O(1)$-approximation to $\opt$ when the buyers' valuation distributions $\D_i$ are identical and regular. 

Recall we start with a product distribution $\D = \D_1 \times \D_2 \times \cdots \times \D_n$.  Let $x_1 < x_2 < \cdots < x_{\K}$ be the supports of value distributions. $\R(\D)$ denotes the revenue of the optimal auction (\cref{alg:my}) on $\D$, and $\opt = \E_{\vec{v} \sim \D}[\max_i v_i] - \R(\D)$. 

When $\D_i$'s are identical and regular, since the highest virtual-value and highest value buyers coincide, the optimal auction (\cref{alg:my}) assigns the item to the buyer with highest value if this value is above the common reserve price. Therefore, we decompose $\opt$ into two components:
\begin{itemize}
    \item Myerson's surplus: The $\cs$ generated by Myerson's auction, denoted by $\cs(\D)$.
    \item Non-allocation surplus: The loss in $\cs$ due to Myerson's auction not allocating the item, denoted by $\cs_0$.
\end{itemize}

We therefore have $\opt = \cs(\D) + \cs_0$. In the remainder, we will present an $O(1)$-approximation for the non-allocation surplus $\cs_0$, which will imply \cref{thm:main0} when $\cs_0 \geq \cs(\D)$. (When $\cs(\D) > \cs_0$, sending no signal is already a $2$-approximation.)

\subsection{Preliminaries and Intuition} 
\label{sec:intuition}
Our approximation bound for the non-allocation surplus will also apply when $\D_i$ are independent but not necessarily identical, which will be required for showing \cref{thm:main2}. Therefore, in the sequel, we will proceed assuming the more general case that $\D_i$'s are not necessarily identical, and derive signaling schemes that approximate the non-allocation surplus $\cs_0$ for this case.

We denote a realization from $\D$ by $\vec{v} = \{v_i\}$. Let $p_i = \Pr_{v_i \sim \D_i}[v_i < r_{\D_i}]$ for any buyer $i$, where $r_{\D_i}$ is the reserve price of $\D_i$.  Let $Y_i= \D_{i | < r_{\D_i}}$ denote the distribution of $\D_i$ conditioned on being smaller than $r_{\D_i}$. Suppose we draw a sample independently from each distribution $Y_i$. Let $Z_{\ell}$ denote the distribution for the $\ell^{\text{th}}$ largest value among these $n$ draws. 

We first derive an expression for $\cs_0$. 

\begin{lemma}
\label{lem:z1}
Let $P = \prod _{i=1}^n p_i$. Then, $\cs_0 = P \cdot \E[Z_1]$.
  \end{lemma}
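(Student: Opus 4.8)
The plan is to recognize $\cs_0$ as the expected welfare destroyed when Myerson's auction leaves the item unsold, and then to evaluate this expectation by conditioning on the event that every buyer falls below her own reserve. First I would pin down the non-allocation event: by the description of the optimal auction in \cref{alg:my}, on a valuation profile $\vec v$ the item is allocated precisely when $\max_i \varphi_{\D_i}(v_i) \ge 0$; and since each $\D_i$ is regular, $\varphi_{\D_i}$ is nondecreasing, with $\varphi_{\D_i}(v_i)\ge 0$ exactly when $v_i \ge r_{\D_i}$ (the facts $\varphi_{\D_i}(r_{\D_i})\ge 0$ and $\varphi_{\D_i}(v)<0$ for $v<r_{\D_i}$ are recorded in \cref{sec:optimal}). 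Hence the item goes unsold exactly on the event $E := \{\,v_i < r_{\D_i}\ \text{for all}\ i\,\}$, and on that event the welfare forgone relative to the efficient outcome is $\max_i v_i$, so
\[
\cs_0 \;=\; \E_{\vec v\sim\D}\!\left[\ind[E]\cdot \max_i v_i\right].
\]
(In the i.i.d.\ regular case, where Myerson's allocation is efficient whenever it allocates, this coincides with $\opt-\cs(\D)=\W^*-\W(\D)$; for non-identical $\D_i$ it is precisely the non-allocation component of the welfare loss, which is the quantity the subsequent sections approximate.)

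Second, I would factor this expectation using independence. Because $\D=\D_1\times\cdots\times\D_n$, the events $\{v_i<r_{\D_i}\}$ are independent, so $\Pr[E]=\prod_{i=1}^n \Pr_{v_i\sim\D_i}[v_i<r_{\D_i}]=\prod_{i=1}^n p_i = P$. Moreover $E$ is a product event, so conditioned on $E$ the coordinates $v_1,\dots,v_n$ remain mutually independent, with $v_i$ now distributed as $\D_i$ conditioned on $v_i<r_{\D_i}$, i.e.\ as $Y_i$. Consequently, conditioned on $E$, the variable $\max_i v_i$ has the same law as the maximum of $n$ independent draws, one from each $Y_i$, which is by definition $Z_1$. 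Combining with the previous display gives $\cs_0 = \Pr[E]\cdot\E[\max_i v_i\mid E] = P\cdot\E[Z_1]$.

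I do not anticipate a real obstacle here: the only points that need care are the structural observation that Myerson's auction leaves the item unsold exactly on $E$ (which is where regularity and \cref{alg:my} are used) and the fact that conditioning on the product event $E$ preserves the independence of the coordinates; everything after that is the one-line computation above.
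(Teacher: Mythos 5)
Your proof is correct and follows essentially the same route as the paper's: identify the non-allocation event as all values falling below their reserves, note the lost welfare is the maximum valuation, and factor the expectation using independence to get $P\cdot\E[Z_1]$. You are somewhat more explicit than the paper about why regularity pins down the non-allocation event and why conditioning on the product event preserves independence, but the argument is the same.
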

\begin{proof}
Note that $\cs_0$ is the expected surplus lost due to not allocating the item in Myerson's mechanism. This happens only when all realized values are below their corresponding reserve price. In this case, the value lost is the maximum valuation, since this value contributes to the welfare, and the revenue raised is zero. Therefore, we have:
	\[
	\cs_0 = \left(\prod _{i=1}^n p_i \right) \cdot \E\left[\max_{i=1,2,\ldots,n} v_i \ \middle| \ \forall i, v_i < r_{\D_i} \right]
	\]
where the expectation is over $\vec{v} \sim \D$. This is equal to $P \cdot \E[Z_1]$.
\end{proof}

\paragraph{Vanilla Signaling Schemes.} Before presenting our general signaling scheme, it is instructive to first consider simpler schemes to understand the challenges posed by this problem. One possible scheme is to pick a random buyer and apply the single-buyer $\BBM$ signaling scheme defined in Section~\ref{sec:prelim} to it. Denote this buyer by $i$. Such a single-buyer signaling scheme will construct the set of $\BBM$ signals for buyer $i$ by decomposing $\D_i$ and pretending the other buyers don't exist. Given the valuation $v_i \sim \D_i$ of this buyer, the scheme will send a signal $\BBM(v_i,\D_i)$ just as in single-buyer case, and reveal the identity of this buyer. There is no signal sent for the other buyers, so that the seller's information for $j \neq i$ is their prior $\D_j$.

The nice property of the $\BBM$ signaling scheme is that the virtual value of buyer $i$ is always non-negative (\cref{lem: CS(BBM)}). Therefore, in the event when all buyers $j \neq i$ have values $v_j < r_{\D_j}$ (so that their virtual values are negative), the seller allocates the item to $i$. Since $v_i$ is independent of other buyers' values, the mechanism therefore behaves exactly as $\BBM(v_i,\D_i)$ from the perspective of buyer $i$. In other words, with probability $\prod_{j \neq i} p_j$, we generate the single-buyer $\cs$ from \cref{lem: CS(BBM)}, which is at least:
$$\cs(\BBM) \mbox{ from buyer } i \ge \Pr[v_i < r_{\D_i}] \E[v_i \mid v_i < r_{\D_i}] = p_i \E[Y_i].$$ 
Since buyer $i$ is a randomly chosen buyer, the overall $\cs$ generated is:
$$ \mbox{Overall }  \cs \ge \frac{1}{n} \sum_{i=1}^n  \left( \prod_{j \neq i} p_j \right) p_i \E[Y_i] = \left(\prod_{j=1}^n p_j \right) \frac{\sum_{i=1}^n \E[Y_i]}{n} =  P \frac{\sum_{\ell=1}^n \E[Z_{\ell}]}{n} \ge P \frac{\E[Z_1]}{n}$$
where $P = \prod_{j=1}^n p_j$. Therefore, comparing the expression above with that in \cref{lem:z1}, this scheme yields an $n$-approximation to $\cs_0$. Further, we can construct identical regular distributions for which the expected value of the max is comparable to the expected value of the sum, that is, $\sum_{\ell=1}^n \E[Z_{\ell}] =O(\E[Z_1])$. Therefore, this analysis cannot be improved.

On the other hand, the above scheme does achieve $\cs = \opt$ for two-valued i.i.d. distributions ($\K = 2$ and arbitrary $n$). To see this, assume the support is $a < b$, and let $q = \Pr[\D_i = a]$. If the reserve price is $a$, Myerson's auction is already efficient, that is, $\cs(\D) = \opt$; else Myerson's auction has $\cs(\D) = 0$. We now have $Z_i = a$ for all $i$ and $p_i = q$, so that the above scheme has surplus $q^n a = q^n \E[Z_1] = \cs_0 = \opt$. Therefore, in either case, we extract $\cs = \opt$. Interestingly, this also shows that our lower bounds in \cref{thm:lb1,thm:lb2} do require non-i.i.d. distributions when each $\K_i = 2$ regardless of the number $n$ of buyers. 

Moving beyond the $\K = 2$ setting to general $\K$ and $n$, it is tempting to run the $\BBM$ signaling scheme directly on the buyer with highest value, hoping to extract surplus $P \E[Z_1]$. However, this requires revealing the identity of the highest buyer to the seller, since the signaling scheme itself is public knowledge. But if the seller knows the identity of the highest buyer, she can always increase the reserve price to be the second highest bid. In other words, the posterior of the highest buyer is {\em truncated} at $Z_2$. This case needs a more careful construction of the signal and analysis, since the event of a buyer being the largest and hence $\BBM$ being applied to it is now correlated with the surplus this buyer generates in $\BBM$. We perform this construction and analysis in \cref{lem:nonIIDBBMk}. Intuitively, signaling using $\BBM$ on the largest buyer will only yield $\cs \ge P \E[Z_1 - Z_2]$, which is again an $\Omega(n)$-approximation to $P \E[Z_1]$.

Our signaling scheme in the next section  chooses a middle ground between these extremes -- we will choose a rank $t \in \{1,2,\ldots,n\}$ carefully, and choose a buyer whose value lies in the top $t$ ranks at random. We will then perform the single-buyer $\BBM$ scheme on this buyer as we describe below. Surprisingly, this improves the na\"ive $n$-approximation to an $O(1)$-approximation!

\subsection{Ranking-Based Multi-Buyer Signaling Scheme}
\label{sec:tb}
\label{sec:OM}
\label{sec:cstb}
We now introduce the family of signaling schemes $\TB_t$. We will derive a lower bound for the $\cs$ obtained by these schemes in our key lemma,~\cref{lem:nonIIDBBMk}. As mentioned before, since this scheme will also form the basic subroutine for the non-i.i.d. case (\cref{thm:main2}), we present this scheme assuming $\D_i$ can be non-identical. 

Recall the definitions of $p_i, r_{\D_i}, Y_i, Z_{\ell}$ from above. In order to define the signaling scheme, we need an additional definition. For agent $i$ with $V_i\sim\D_i$, we use $\D_{i | >a}$ to denote the conditional distribution of $V_i$ given $V_i>a$, and $\D_{i | <a}$ to denote the conditional distribution of $V_i$ given $V_i<a$. Moreover, we use $\D_{i | >a} - b$ to denote the distribution of $V_i - b$ given $V_i>a$; we refer to it as the distribution of $V_i$ truncated at $a$ and reduced by $b$. 

The following result relates the reserve price of the truncated and the original distributions.
\begin{lemma}
	\label{lem:TRres}  Let $\D'_i =  \D_{i | > \vc} -\vc$ for any $\vc$. Then we have $r_{\D'_i}  \ge r_{\D_i} -\vc $, and moreover, for any $v > \vc$, we have $\varphi_{\D'_i}(v - \vc) = \varphi_{\D_i}(v) - \vc$.
\end{lemma}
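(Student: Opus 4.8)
The plan is to work directly from the definition of the virtual value function in~\eqref{eq:vv} and then deduce the reserve-price inequality as a corollary. Write the support of $\D_i$ as $z_1 < z_2 < \cdots < z_k$, and let $z_j$ be the smallest support element strictly greater than $\vc$. Then $\D'_i = \D_{i\mid >\vc} - \vc$ is supported on $z_j - \vc < z_{j+1} - \vc < \cdots < z_k - \vc$, and its pmf and upper-tail function are obtained from those of $\D_i$ by conditioning on $V_i > \vc$: for each $\ell \geq j$ we have $f_{\D'_i}(z_\ell - \vc) = f_{\D_i}(z_\ell)/S_{\D_i}(z_j)$ and $S_{\D'_i}(z_\ell - \vc) = S_{\D_i}(z_\ell)/S_{\D_i}(z_j)$, where I am using $S_{\D_i}(z_j) = \Pr[V_i > \vc] = \Pr[V_i \geq z_j]$.

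First I would prove the virtual-value identity $\varphi_{\D'_i}(v - \vc) = \varphi_{\D_i}(v) - \vc$ for every support element $v = z_\ell$ with $v > \vc$. For $\ell < k$ this is a short computation: plug the conditional pmf and tail formulas into~\eqref{eq:vv} for $\D'_i$,
\[
\varphi_{\D'_i}(z_\ell - \vc) = (z_\ell - \vc) - (z_{\ell+1} - z_\ell)\,\frac{S_{\D'_i}(z_{\ell+1} - \vc)}{f_{\D'_i}(z_\ell - \vc)} = (z_\ell - \vc) - (z_{\ell+1} - z_\ell)\,\frac{S_{\D_i}(z_{\ell+1})}{f_{\D_i}(z_\ell)},
\]
where the common factor $S_{\D_i}(z_j)$ cancels and the value gap $z_{\ell+1} - z_\ell$ is unchanged by the shift; the right-hand side is exactly $\varphi_{\D_i}(z_\ell) - \vc$. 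The top element is the boundary case $\ell = k$: there $\varphi_{\D'_i}(z_k - \vc) = z_k - \vc = \varphi_{\D_i}(z_k) - \vc$ directly from the definition. This establishes the second claim.

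Then the reserve-price bound follows from monotonicity of ironed virtual values together with the fact (noted just after~\eqref{eq:vv}) that $\varphi_{\D_i}(r_{\D_i}) \geq 0$ and $\varphi_{\D_i}(v) < 0$ for $v < r_{\D_i}$. Consider any support element $v$ of $\D'_i$ with $v < r_{\D_i} - \vc$, i.e. $v + \vc < r_{\D_i}$; by the identity just proved, $\varphi_{\D'_i}(v) = \varphi_{\D_i}(v + \vc) - \vc < 0 - \vc \le 0$ provided $\vc \geq 0$ (which is the only case we use, since $\vc$ is a value in the support). Hence no value below $r_{\D_i} - \vc$ can be the reserve price of $\D'_i$, so $r_{\D'_i} \geq r_{\D_i} - \vc$. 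I expect the main obstacle to be purely bookkeeping rather than conceptual: being careful that the arithmetic shift by $\vc$ leaves all consecutive value-gaps $z_{\ell+1} - z_\ell$ intact while the conditioning normalizer cancels cleanly, and handling the top-of-support term separately since its virtual value is defined by a different case of~\eqref{eq:vv}. (If one wants the statement without the implicit $\vc \geq 0$ assumption, the reserve bound should be stated for $\vc$ in the support, which is how the lemma is used in \cref{sec:tb}.)
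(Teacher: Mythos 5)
Your proof is correct, and the virtual-value identity is established exactly as in the paper (the normalizer $S_{\D_i}(z_j)$ cancels in the ratio $S/f$, the gaps $z_{\ell+1}-z_\ell$ are shift-invariant, and the top of the support is its own base case). Where you diverge is the reserve-price bound. The paper proves $r_{\D'_i} \ge r_{\D_i} - \vc$ by a direct revenue-dominance contradiction: if some $r' < r - \vc$ were at least as good for $\D'_i$, then adding $\vc\, S_{\D'_i}(\cdot)$ to both sides and using $S_{\D'_i}(r') \ge S_{\D'_i}(r-\vc)$ shows $r'+\vc$ would beat $r$ for the original $\D_i$, contradicting minimality of $r_{\D_i}$. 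That argument uses only the definition of the reserve as a revenue maximizer and is distribution-free. You instead deduce the bound from the virtual-value identity plus the fact that $\varphi_{\D_i}(v)<0$ for $v<r_{\D_i}$ — a fact the paper obtains from regularity. Your route is cleaner once the identity is in hand, and it is sound in the paper's setting (regularity is a standing assumption, and it survives the passage to ironed virtual values for the non-regular extension), but it is logically weaker: the paper's version of the reserve claim needs no regularity at all. You are also right to flag the implicit $\vc \ge 0$; note that the paper's revenue argument needs it too (the step $\vc\, S_{\D'_i}(r') \ge \vc\, S_{\D'_i}(r-\vc)$ reverses for $\vc<0$), and it always holds where the lemma is invoked.
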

\begin{proof}
We first prove the result about reserve prices. Let $r = r_{\D_i}$ and $r' = r_{\D'_i}$. If $r \le v^{\circ}$ the inequality is trivial. Otherwise, suppose for contradiction that for some $r' < r-v^{\circ}$, $r' \cdot S_{\D'_i}(r') \ge (r-v^{\circ}) \cdot S_{\D'_i}(r-v^{\circ}).$ Then we would have:
$(r'+v^{\circ}) \cdot S_{\D'_i}(r') \ge r \cdot S_{\D'_i}(r-v^{\circ}),$ since $S_{\D'_i}(r') \ge S_{\D'_i}(r-v^{\circ})$. Thus, $(r'+v^{\circ}) \cdot S_{\D_i}(r'+v^{\circ}) \ge r \cdot S_{\D_i}(r),$ a contradiction to the assumption that $r$ is the smallest optimal reserve price of $\D_i$.

To see the second part,  if we condition the distribution on $v > v^{\circ}$, this does not change the virtual value function for values $v_j > v^{\circ}$, since both the numerator and denominator in Eq.~(\ref{eq:vv}) scale by the same amount. If we now subtract $v^{\circ}$ from the support of the distribution, it reduces the virtual value by the same amount. This completes the proof.
\end{proof}

Note that $\D'_i =  \D_{i | > \vc} -\vc$ represents buyer $i$'s \emph{excess value} compared to $\vc$. \Cref{lem:TRres} shows that for any threshold $\vc$ and any buyer $i$, given the side-information that $V_i>\vc$, her new reserve price is greater than her original reserve price.

\paragraph{The $\TB_t$ signaling scheme.} We now present the family of signaling schemes $\TB_t$ parameterized by the rank $t \in \{1, \ldots, n\}$. For any realized joint valuation profile $\vec{v}= (v_1,v_2, \ldots, v_n)$, the signal sent by $\TB_t$ consists of two parts. In the first part, $\TB_t$ observes $\vec{v}$ and outputs $(\vc,T)$, where $\vc$ is the value of $(t+1)^{\text{st}}$ largest realized value (or $0$ when $t=n$), and $T$ is the subset of buyers with realized value strictly greater than $\vc$. For the second part of the signal, $\TB_t$ chooses a buyer $j$ uniformly at random from $T$, and computes her excess distribution $\D_{j|>\vc} - \vc$. It then reveals both the identity of $j$, as well as the signal $\BBM(v_j - \vc,\D_{j|>\vc} - \vc)$ generated by the single-buyer $\BBM$ scheme on a buyer with value distribution $\D_{j|>\vc} - \vc$. The scheme is formalized in~\cref{alg:TB}.


\paragraph{Optimal mechanism under $\TB_t$.} Conditioned on receiving the signal generated by $\TB_t$, the seller is guaranteed a revenue of $\vc$ from the $(t+1)^{\text{st}}$ largest buyer, and knows that only buyers in $T$ can pay more than $\vc$. The seller can now charge at least $\vc$ to any buyer in $T$, and can further run an auction over the excess value of buyers in $T$, where for buyer $i \in T$, her excess value has distribution $\D'_i = \D_{i | > \vc} - \vc$. 
Note that for any buyer $i \in T$ except the randomly chosen buyer $j$, a value drawn from $\D'_i$ represents how much more than $\vc$ she is willing to pay. Moreover, distributions $\D'_i$ are independent, and also, since the identity of $j$ is chosen uniformly at random, the $\BBM$ scheme modifies the distribution of buyer $j$ in a fashion that is independent of $\D'_i$.

By~\cref{lem: CS(BBM)}, we know that the $\BBM$ scheme ensures the virtual value of buyer $j$ is always non-negative. 
From the characterization of the optimal auction~\citep{myerson,Elkind}, since the item is always allocated to the highest virtual value buyer as long as this value is non-negative, the item will always be allocated to buyer $j$ if all other buyers $i \in T,i\neq j$ have excess values $v_i - \vc < r_{\D'_i}$ (and hence, negative virtual values).

\begin{algorithm}[H]
	\DontPrintSemicolon
	$ \vc \gets (t+1)^{\text{st}}\text{ largest value in } \vec{v}$\;
	$T \gets \{i : v_i > \vc \}$ \;
	\If{$T \neq \phi$} {
	$j \gets $ Buyer chosen uniformly at random from $T$ \; 
	$s \gets \BBM(v_j-\vc, \D_{j| > \vc}-\vc)$ \;
	\Return $\vc$, $T$, $j$, and $s$ \;
	}
	\Else {
	\Return $\vc$, and $T = \phi$ \;
	}
	\caption{\textsc{$\TB_t (\vec{v}, \D)$}}\label{alg:TB}
\end{algorithm}

\medskip\noindent {\em Consumer surplus under $\TB_t$.} 
We require the following key lemma, that gives a lower bound for the consumer surplus generated under $\TB_t$. This lemma forms the crux of our subsequent analysis, helping us quantify how the $\BBM$ signal recovers much of $\cs$ lost by Myerson's auction. The difficulty in proving it arises because the random choice of buyer $j$ in \cref{alg:TB} is correlated with its rank, which in turn is correlated with its winning the auction and the surplus it generates in $\BBM$. We get around this correlation by carefully {\em coupling} the surplus generated when buyer values are above the reserve with the order statistics of buyer valuations below the reserve.

\begin{lemma}
	\label{lem:nonIIDBBMk}
	For $1 \le t \le n$, the consumer surplus of $\TB_t$ satisfies:
\[\cs(\TB_t) \ge \left(\prod _{i=1}^n p_i \right) \cdot \left(\left(\frac{1}{t} \cdot \sum_{\ell = 1}^t \E[Z_\ell]\right) - \E[Z_{t+1}]\right).\]
\end{lemma}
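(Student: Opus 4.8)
The plan is to condition on the event $E$ that every buyer has realized value strictly below her reserve price, i.e. $v_i < r_{\D_i}$ for all $i$; this event has probability $P = \prod_i p_i$, and since $\cs(\TB_t) \ge 0$ always, it suffices to show that the conditional expected consumer surplus is at least $\frac1t\sum_{\ell=1}^t \E[Z_\ell] - \E[Z_{t+1}]$. Under $E$, each $v_i$ is an independent draw from $Y_i = \D_{i|<r_{\D_i}}$, the order statistics of $(v_1,\dots,v_n)$ are distributed as $(Z_1,\dots,Z_n)$, the cutoff $\vc$ equals $Z_{t+1}$, and the set $T$ is exactly the set of $t$ buyers holding the top-$t$ values. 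The key structural input is the discussion preceding the lemma: conditioned on the $\TB_t$ signal, the seller's optimal auction allocates to the randomly chosen buyer $j$ whenever all other buyers $i \in T$ have excess value $v_i - \vc < r_{\D'_i}$ below the new reserve $r_{\D'_i}$ of $\D'_i = \D_{i|>\vc}-\vc$. But we are inside event $E$, so $v_i < r_{\D_i}$ for every $i$; and by \cref{lem:TRres}, $r_{\D'_i} \ge r_{\D_i} - \vc > v_i - \vc$. Hence, under $E$, for every choice of $j \in T$ the ``all other buyers are below the new reserve'' condition holds automatically, so the item is always allocated to $j$, and from buyer $j$'s perspective the mechanism behaves exactly as the single-buyer $\BBM$ scheme on the distribution $\D'_j$ with realized value $v_j - \vc$.

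Next I would use the guarantee of the $\BBM$ scheme. By \cref{lem: CS(BBM)}, applied to buyer $j$ with distribution $\D'_j = \D_{j|>\vc}-\vc$, the consumer surplus $\BBM$ generates for buyer $j$ equals $\opt$ of that single-buyer instance, which is at least $\sum_{w < r_{\D'_j}} w\, f_{\D'_j}(w) = \Pr[\D'_j < r_{\D'_j}] \cdot \E[\D'_j \mid \D'_j < r_{\D'_j}]$. Now I would lower-bound this crudely but usefully: restricted to the event $E$ (which forces $v_j < r_{\D_j}$, hence $v_j - \vc < r_{\D'_j}$ by \cref{lem:TRres}), the realized excess value $v_j - \vc$ is below the new reserve, so the $\BBM$ surplus for buyer $j$ is at least $v_j - \vc$ in expectation over the $\BBM$ randomness — more precisely, the total (unconditional on $E$) $\BBM$ surplus of buyer $j$, being $\ge \sum_{w < r_{\D'_j}} w f_{\D'_j}(w)$, when we take expectations over $v_j$ conditioned on $v_j < r_{\D_j}$, is at least $\E[v_j - \vc \mid E, j, T]$. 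Summing the contribution over the uniform random choice of $j \in T$: conditioned on $E$, on the identities in $T$, and on the values, the expected consumer surplus is at least $\frac{1}{t}\sum_{j \in T}(v_j - \vc)$. Taking expectation over the values under $E$ and using $\vc = Z_{t+1}$ and $\{v_j : j \in T\} = \{Z_1,\dots,Z_t\}$ in distribution gives
\[
\E[\cs(\TB_t) \mid E] \ge \frac{1}{t}\sum_{\ell=1}^t \E[Z_\ell] - \E[Z_{t+1}],
\]
and multiplying by $\Pr[E] = P$ yields the claim.

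The main obstacle — and the place requiring genuine care rather than the routine steps above — is the \emph{coupling / correlation issue} flagged in the paragraph before the lemma: the buyer $j$ is drawn uniformly from $T$, so $j$'s conditional value distribution is not $Y_j$ but rather $Y_j$ conditioned on being among the top $t$ of the $n$ draws, and this conditioning is correlated with how much excess surplus $\BBM$ extracts from $j$. The fix is precisely the structure I used: once we condition on $E$ and on the \emph{unordered multiset} of values together with the assignment of which buyer holds which value, the quantity $\frac1t\sum_{j\in T}(v_j-\vc)$ is a deterministic function of that data, so there is no residual correlation to worry about — we simply average the per-buyer $\BBM$ lower bound $v_j - \vc$ over the uniform choice of $j$, and only afterwards take the expectation over the joint distribution of the order statistics. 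One should double-check the direction of the inequality $\BBM\text{-surplus for } j \ge \E_{v_j}[\,v_j - \vc \mid v_j < r_{\D_j}]$: this holds because $\cs(\BBM) = \sum_{w<r_{\D'_j}} w f_{\D'_j}(w) = \E[(V_j - \vc)\mathbb{1}[V_j - \vc < r_{\D'_j}] \mid V_j > \vc]\cdot\Pr[\cdot]$, and conditioning further down to $V_j < r_{\D_j} \le r_{\D'_j}+\vc$ only drops nonnegative terms; care is needed to keep the probability-weight bookkeeping consistent with how $\TB_t$ selects the signal, which is where \cref{lem:TRres} (guaranteeing $r_{\D'_i} \ge r_{\D_i}-\vc$ for \emph{all} $i$, not just $j$) does the essential work of ensuring the allocation always goes to $j$ under $E$.
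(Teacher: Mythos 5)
Your structural setup (the allocation always going to the chosen buyer $j$ when everyone else is below reserve, via \cref{lem:TRres}) matches the paper, but there is a genuine gap in the surplus accounting: namely the step claiming that, conditioned on the event $E$ that \emph{every} buyer (including the chosen $j$) has value below her reserve, the expected consumer surplus is at least $\frac{1}{t}\sum_{j\in T}(v_j-\vc)$. The $\BBM$ guarantee of \cref{lem: CS(BBM)} is an \emph{ex-ante} statement: $\cs(\BBM)=\E[V]-\R(\D)\ge \sum_{v<r_{\D}}v f_{\D}(v)$ holds because the item always sells and the revenue is at most the welfare contributed by values $\ge r_{\D}$. It does \emph{not} say that the surplus conditioned on $V<r_{\D}$ is at least $\E[V\mid V<r_{\D}]$; the realized surplus at value $v$ is $v$ minus the (positive) posted price of the signal containing $v$, so it is always strictly less than $v$, and the $\BBM$ surplus is typically paid out entirely on the \emph{high} realizations. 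Concretely, take $n=t=1$ and $\D_1$ equal to $1$ w.p.\ $0.6$ and $10$ w.p.\ $0.4$, so $r_{\D_1}=10$: the $\BBM$ surplus of $0.6$ comes entirely from the realization $V_1=10\ge r_{\D_1}$ (where the scheme sometimes sends the equal-revenue signal with price $1$), while conditioned on your event $E=\{V_1<r_{\D_1}\}$ the surplus is exactly $0$, not $\E[Z_1]=1$. Hence the chain $\cs(\TB_t)\ge\Pr[E]\cdot\E[\cs(\TB_t)\mid E]$ with your claimed conditional bound cannot yield the lemma; your own hedged restatement (``the total (unconditional on $E$) $\BBM$ surplus \dots is at least $\E[v_j-\vc\mid E,j,T]$'') silently drops the probability factor $\Pr[V_j<r_{\D_j}\mid V_j>\vc]$, which is exactly where the argument breaks.

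The paper's proof avoids this by conditioning on a different event: it fixes the buyer $b$ and the \emph{other} buyers' values (all below their reserves), but lets $V_b$ range over its entire conditional distribution above the threshold $v^t_b$, including values at or above $r_{\D_b}$. On that event $Q(\vec v_{-b},b,t)$, whose probability carries the factor $S_{\D_b}(v^t_b)$ rather than $\Pr[v^t_b<V_b<r_{\D_b}]$, the full ex-ante $\BBM$ bound applies and is then further lower-bounded by $\sum_{v^t_b<v_b<r_b}(v_b-v^t_b)f_{\D_b}(v_b)/S_{\D_b}(v^t_b)$ using $r_{\D'_b}\ge r_{\D_b}-v^t_b$; multiplying by $\Pr[Q]$ and re-summing over $b$ and $\vec v_{-b}$ then produces the order-statistic expression $P\bigl(\frac1t\sum_{\ell}\E[Z_\ell]-\E[Z_{t+1}]\bigr)$ as a combinatorial identity, even though the surplus is physically collected on realizations outside your event $E$. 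Your argument could be repaired by replacing $E$ with this family of events, but as written the key conditional-expectation inequality is false.
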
 
\begin{proof}
For convenience, denote $r_i = r_{\D_i}$. Fix a buyer $b$, and any valuation profile $\vec{v}_{-b} = \{v_i,i\neq b\}$ such that $v_i < r_i \,\forall\,i \neq b$.  Define $v^t_b$ as the $t^{\text{th}}$ largest value in $\{v_i,i\neq b\}$. Now consider the event 
\[Q(\vec{v}_{-b}, b, t) = \{ V_b > v^t_b \text{ AND } b \text{ selected for $\BBM$ signaling}\}.\] 
Conditioned on $Q(\vec{v}_{-b}, b, t)$, we have that the $\TB_t$ scheme (\cref{alg:TB}) with parameter $t$ sets threshold value as $\vc = v^t_b$. 
By Lemma~\ref{lem:TRres}, we have that for every $i \in T, i \neq b$, their value $v_i$ is smaller than their new reserve price $\vc + r_{\D'_i}$, since $v_i < r_{\D_i}$, and modifying $\D_i$ to $\D'_i = \D_{i | > \vc} - \vc$ does not decrease the reserve price. 
Therefore, conditioned on $Q(\vec{v}_{-b}, b, t)$, the auction behaves like the single item mechanism $\BBM(v'_b, \D'_b)$, where $v'_b = v_b - v^t_b$, $\D'_b = \D_{b | > v^t_b} - v^t_b$. 
Let $r'_b$ denote the reserve price of $\D'_b$; again using~\cref{lem:TRres} we have $r'_b \ge r_b - v^t_b$. 
Now, using~\cref{lem: CS(BBM)}, we get that the expected consumer surplus generated by $\TB_t$ under $Q(\vec{v}_{-b}, b, t)$ is at least:
\[
\E[\cs(\TB_t) \mid Q(\vec{v}_{-b}, b, t)]  \ge \sum_{v'_b < r'_b} v'_b \Pr[\D'_b = v'_b]  \ge  \sum_{v_b^t < v_b < r_b} (v_b - v^t_b) \frac{f_{\D_b} (v_b)}{S_{\D_b}(v^t_b)}.
\]

Note also that $\Pr[Q(\vec{v}_{-b}, b, t)] = \frac{1}{t}\cdot\left(\prod_{i \neq b} f_{\D_i}(v_i)\mathds{1}_{\{v_i<r_{\D_i}\}} \right) \cdot S_{\D_b}(v^t_b)$. Thus for any buyer $b$, and any valuation profile $\vec{v}_{-b}$ with $v_i < r_i$ for all $i\neq b$, we have
\begin{align*}
\E[\cs(\TB_t)\cdot\mathds{1}_{Q(\vec{v}_{-b}, b, t)}] &= \Pr[Q(\vec{v}_{-b}, b, t)] \cdot \E[\cs(\TB_t) \mid Q(\vec{v}_{-b}, b, t)]\\
&\ge  \frac{1}{t} \left(\prod_{i \neq b} f_{\D_i}(v_i) \right) \sum_{v_b^t < v_b < r_b} (v_b - v^t_b) f_{\D_b} (v_b)\\
&=  \frac{1}{t} \sum_{v_b < r_b}\left(\prod_{i} f_{\D_i}(v_i) \right) \max\{(v_b - v^t_b),0\}.
\end{align*}
For any $\vec{v}$ let $v^{(t)}$ denote the $t^{\text{th}}$ largest value, and $I_t(\vec{v})$ to be the indices corresponding to the top $t$ values in $\vec{v}$.
Summing up over all $b$, and all $\vec{v}_{-b}$ such that $v_i<r_i\,\forall\,i\neq b$, we have
\begin{align}
\label{eq:bcontrib}
\sum_b\sum_{\vec{v}_{-b}} \E[\cs(\TB_t)\cdot\mathds{1}_{Q(\vec{v}_{-b}, b, t)}]
&\geq 
\sum_{\vec{v}|v_i<r_i}\frac{1}{t}\cdot\left(\prod_{i} f_{\D_i}(v_i) \right)\cdot\left(\sum_b \max\{(v_b - v^t_b),0\} \right) \nonumber \\
&=
\sum_{\vec{v}|v_i<r_i}\left(\prod_{i} f_{\D_i}(v_i) \right)\left(\sum_{i\in I_t(\vec{v})}\frac{1}{t}\left(v_i - v^{(t+1)} \right)\right) \nonumber \\
&=
\sum_{\vec{v}|v_i<r_i}\left(\prod_{i} f_{\D_i}(v_i) \right)\left(\left(\sum_{i\in I_t(\vec{v})}\frac{v_i}{t}\right) - v^{(t+1)} \right).
\end{align}

Let $\D_i'' = \D_{i | < r_{\D_i}}$ be the distribution of buyer $i$'s value conditioned on $V_i<r_{\D_i}$. Recall we define $p_i= \Pr_{v_i \sim \D_i}[v_i < r_{\D_i}]$; thus $f_{\D_i''}(v) = f_{\D_i}(v)/p_i$. 
Suppose we independently sample $Y_i\sim\D_{i | < r_{\D_i}}$ for each $i$, and define $Z_{\ell}$ as the $\ell^{\text{th}}$ largest value in $\{Y_i\}$.
Then~\cref{eq:bcontrib} can be written as
\begin{align*}
\sum_b\sum_{\vec{v}_{-b}|v_i<r_i\,\forall\,i\neq b} \E[\cs(\TB_t)\cdot\mathds{1}_{Q(\vec{v}_{-b}, b, t)}]
&\geq 
\sum_{\vec{v}|v_i<r_i}\left(\prod_{i}p_i  \right)f_{\D''}(\vec{v})\left(\left(\sum_{i\in I_t(\vec{v})}\frac{v_i}{t}\right) - v_{(t+1)} \right)\\
&=
\left(\prod_{i}p_i\right)\left(\left(\sum_{\ell=1}^t\frac{\E[Z_{\ell}]}{t}\right) - \E[Z_{t+1}] \right).
\end{align*}
Finally, noting that the $Q(\vec{v}_{-b}, b, t)$ events are all non-overlapping, we can write
\begin{align*}
\cs(\TB_t) &\ge \sum_b\sum_{\vec{v}_{-b}|v_i<r_i\,\forall\,i\neq b} \E[\cs(\TB_t)\cdot\mathds{1}_{Q(\vec{v}_{-b}, b, t)}],
\end{align*}  
thereby completing the proof.
\end{proof}

\subsection{Approximating Non-allocation Surplus $\cs_0$ and Proof of Theorem~\ref{thm:main0}} 
\label{app:cs0}
Given the above signaling scheme, approximating $\cs_0$ (and hence showing \cref{thm:main0}) is now simple: We choose the parameter $t \in \{1,2,\ldots,n\}$ that maximizes $\cs(\TB_t)$ and run $\TB_t$.  We denote this scheme as $\{\S_0 (\vec{v},\D)\}_{\vec{v}}$, and present it in~\cref{alg:CS0_2}.

\begin{algorithm}[H]
	\DontPrintSemicolon
	Choose $t = \mbox{argmax}_{t'=1}^n \cs(\TB_{t'})$. 
	\Return $\TB_t (\vec{v},\D)$ \;
	\caption{\textsc{$\S_0 (\vec{v},\D)$}}\label{alg:CS0_2}
\end{algorithm}

The following theorem shows that this scheme approximates $\cs_0$ when the $\D_i$ are a common regular distribution $\Theta$. Since $\opt = \cs(\D) + \cs_0$ when $\D_i$ are regular and identical, \cref{thm:cs0} shows the better of no signaling and \cref{alg:CS0_2} is an $O(1)$-approximation to $\opt$, completing the proof of~\cref{thm:main0}. 

\begin{theorem}
\label{thm:cs0}
The consumer surplus of the signaling scheme $\S_0 (\vec{v},\D)$  is an $O(1)$-approximation to the non-allocation surplus, $\cs_0$.
\end{theorem}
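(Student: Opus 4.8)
The plan is to reduce \cref{thm:cs0} to a single inequality about order statistics and then to prove that inequality using regularity. By \cref{lem:z1} we have $\cs_0 = P\cdot\E[Z_1]$ with $P=\prod_{i=1}^n p_i$, while by \cref{lem:nonIIDBBMk} and the fact that $\S_0$ runs $\TB_t$ for the best rank $t$, $\cs(\S_0)\ge P\cdot\max_{1\le t\le n} h(t)$, where $h(t):=\frac1t\sum_{\ell=1}^t\E[Z_\ell]-\E[Z_{t+1}]$ and $\E[Z_{n+1}]:=0$, and $Z_1\ge\cdots\ge Z_n$ are the order statistics of $n$ i.i.d.\ draws $Y_1,\dots,Y_n$ from the common below-reserve law $Y$ (so $p_i=p=\Pr[Y_i\text{ exists}]$, $P=p^n$). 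Hence it suffices to produce a universal constant $c>0$ and some $t$ with $h(t)\ge c\,\E[Z_1]$. Two quick reductions narrow the problem. First, if $\E[Z_2]<\tfrac12\E[Z_1]$ then $h(1)=\E[Z_1]-\E[Z_2]\ge\tfrac12\E[Z_1]$ and we are done. Second, $\sum_{\ell=1}^n Z_\ell=\sum_i Y_i$ forces $h(n)=\E[Y]$, so if $\E[Y]\ge\tfrac12\E[Z_1]$ we are again done. Thus we may assume $\E[Z_2]\ge\tfrac12\E[Z_1]$ (order statistics bunched near the top) and $\E[Y]<\tfrac12\E[Z_1]$ (small average); in this regime the surplus recovered by $\TB_t$ must come from an intermediate rank, and pinning one down is the heart of the argument.

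To find a good intermediate $t$ we pass to quantile space and invoke regularity. Let $v_Y(q)$ be the value of $Y$ at upper quantile $q$, so $\E[Z_\ell]=\E[v_Y(B_\ell)]$ with $B_\ell\sim\mathrm{Beta}(\ell,n-\ell+1)$, which is well-approximated by $v_Y(\ell/n)$ for $\ell$ not too small. The key structural fact is that the revenue curve $\mathcal{R}_Y(q)=q\,v_Y(q)$ is concave: $\Theta$ regular means $\mathcal{R}_\Theta$ is concave, and a short computation — writing $w=p^*+q(1-p^*)$, so that $\mathcal{R}_Y(q)\propto (w-p^*)\mathcal{R}_\Theta(w)/w$, and using that $\mathcal{R}_\Theta$ is nonnegative and decreasing on $[p^*,1]$ — shows that conditioning below the reserve preserves concavity. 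Concavity, together with $\mathcal{R}_Y(0)=0$ and $\mathcal{R}_Y'(0^+)=v_Y(0^+)$, caps how slowly $v_Y$ can decay; the adversarial extreme is the equal-revenue tail, namely $v_Y\equiv v_Y(0^+)$ on a prefix $q\le q_0$ and $v_Y(q)=v_Y(0^+)q_0/q$ afterwards. For this shape a direct calculation — $\frac1t\sum_{\ell\le t}\E[Z_\ell]\approx \frac{v_Y(0^+)q_0 n}{t}\bigl(1+\ln\tfrac{t}{q_0 n}\bigr)$ and $\E[Z_{t+1}]\approx\frac{v_Y(0^+)q_0 n}{t}$ — gives $h(t)$ of order $v_Y(0^+)\cdot\frac{\ln x}{x}$ with $x=t/(q_0 n)$, which over the feasible range $x\le e$ (since $t\le n$) is maximized near $x=e$ at $v_Y(0^+)/e$; as $\E[Z_1]$ is within a constant factor of $v_Y(0^+)$ in the bunched regime, choosing $t\approx e\,q_0 n$ (or $t=n$ when $q_0 n$ is large) yields $h(t)\ge c\,\E[Z_1]$. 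The two degenerate limits of this shape — $q_0$ tiny, so $v_Y$ drops immediately and $\E[Z_1]-\E[Z_2]$ is a constant fraction of $\E[Z_1]$; $q_0$ near $1$, so $Y$ is nearly a point mass and $\E[Y]$ is a constant fraction of $\E[Z_1]$ — are precisely the two quick reductions above, which is why they close the remaining gap.

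The main obstacle is making this quantile heuristic rigorous. The identity $\E[Z_\ell]=\E[v_Y(B_\ell)]$ equals $v_Y(\ell/n)$ only approximately, and for small ranks — especially $\ell=1$, where $B_1\sim\mathrm{Beta}(1,n)$ has a heavy left tail and is not concentrated — one cannot simply substitute the quantile. Instead one must bracket $\E[v_Y(B_\ell)]$ between $v_Y$ at a mildly inflated quantile and that value minus a term paying for $\Pr[B_\ell\text{ small}]$, and here concavity of $\mathcal{R}_Y$ is used a second time to ensure these brackets differ by only a constant factor. The other technical point is the extremal-curve claim: that among concave revenue curves with prescribed initial slope $v_Y(0^+)$, the equal-revenue tail minimizes $\max_t h(t)$. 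I expect to establish this by an exchange argument that straightens any concave $\mathcal{R}_Y$ toward the piecewise-linear extremal curve without increasing $\max_t h(t)$, after which the explicit calculation finishes the proof; the resulting constant $c$ will be on the order of $1/e$, degraded by a further absolute constant from the bracketing and the $\E[Z_1]$-versus-$v_Y(0^+)$ comparison, which is compatible with the $O(1)$ bound asserted in \cref{thm:cs0,thm:main0}.
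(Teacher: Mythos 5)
Your reduction to the order-statistics inequality $\max_t h(t)\ge c\,\E[Z_1]$ is exactly the paper's \cref{thm:regularity}, your two quick reductions ($h(1)=\E[Z_1]-\E[Z_2]$ and $h(n)=\E[Y]$) are sound, and your observation that the revenue curve of $Y$ is concave is correct and is the same structural input the paper extracts in \cref{lem:reg_concave}. But the heart of the proof --- producing an intermediate rank $t$ in the regime where both quick reductions fail --- is not actually carried out. The extremal claim (that the equal-revenue tail minimizes $\max_t h(t)$ among concave revenue curves) is only announced, not proved, and more seriously, the normalization it rests on is wrong: in the remaining regime it is \emph{not} true that $\E[Z_1]=\Theta(v_Y(0^+))$. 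Concretely, take $Y$ supported on $\{M,\,m,\,2m/n\}$ with probabilities $\{\varepsilon,\,2/n,\,1-\varepsilon-2/n\}$, where $\varepsilon M=m/(100n)$ and $M\to\infty$. The revenue curve has breakpoints $(0,0)$, $(\varepsilon,\varepsilon M)$, $(\varepsilon+2/n,(\varepsilon+2/n)m)$, $(1,2m/n)$ with successive slopes $\approx M,\ m,\ 0$, so it is concave and non-decreasing and such a $Y$ is realizable as a below-reserve conditional. Here $\E[Z_1]\approx(1-e^{-2})m$, $\E[Z_2]\approx(1-3e^{-2})m\ge\tfrac12\E[Z_1]$, and $\E[Y]=O(m/n)\ll\tfrac12\E[Z_1]$, so you are in the ``remaining regime''; yet $v_Y(0^+)=M$ is arbitrarily larger than $\E[Z_1]$, and since $h(t)\le\E[Z_1]$ for every $t$, no choice of $t$ can give $h(t)\ge c\,v_Y(0^+)$. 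Any correct extremal formulation must therefore control $\max_t h(t)$ against $\E[Z_1]$ itself, which your exchange argument would also perturb --- this is a genuinely different and harder variational problem than the one you set up, and it is precisely where the argument is missing.

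For comparison, the paper avoids the extremal-curve route entirely: it splits $\cs_0$ into a tail part (mass at quantile $\le 4/n$), which is charged to $h(1)=\E[Z_1-Z_2]$ much as in your first reduction, and a core part $P\cdot u^-$ at quantile $\approx 4/n$, which is handled by a four-case analysis on $S(u^-)$. In each case it uses the concavity inequality of \cref{lem:reg_concave} to find a value $w$ at a constant-factor-larger quantile with $w=\Omega(u^-)$ (or a gap $w^+-w^-=\Omega(u^-)$ between consecutive support points), and then picks $t$ so that $\Pr[Y\ge w]=\Theta(t/n)$ and applies Chernoff bounds to show $Z_{\lfloor t/10\rfloor}\ge w$ and $Z_{t+1}\le b$ with constant probability. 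If you want to salvage your approach, you would need either to prove the exchange argument with $\E[Z_1]$ (not $v_Y(0^+)$) as the yardstick, or to adopt this kind of quantile-bucketing argument directly; as written, the proposal does not yet establish the theorem.
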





Let $\alg$ denote the $\cs$ of $\S_0 (\vec{v},\D)$. The above theorem will follow from the following, since $\alg$ is at least  the LHS by \cref{lem:nonIIDBBMk} and $\cs_0$ is equal to $P \cdot \E[Z_1]$ by \cref{lem:z1}.

\begin{theorem}
\label{thm:regularity}
When each $\D_i$ is i.i.d. and regular with common distribution $\Theta$, then we have:
$$ P \cdot \max_t \left(\left(\frac{1}{t} \sum_{i = 1}^t \E[Z_i]\right) - \E[Z_{t + 1}]\right)  \geq \frac{1}{1900} \cdot P \cdot \E[Z_1].$$
\end{theorem}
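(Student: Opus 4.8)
Since the factor $P=\prod_i p_i$ occurs on both sides, it suffices to prove $\max_{1\le t\le n}g(t)\ge z_1/1900$, where $z_\ell:=\E[Z_\ell]$, $z_{n+1}:=0$, and $g(t):=\frac1t\sum_{\ell=1}^t z_\ell-z_{t+1}$. Two quick reductions first. Using the identity $g(t)=\frac1t\sum_{\ell=1}^t \ell(z_\ell-z_{\ell+1})$ and $z_\ell-z_{\ell+1}\ge0$, one gets $g(n)\ge\frac1n\sum_{\ell=1}^n(z_\ell-z_{\ell+1})=z_1/n$, so if $n\le1900$ we are done with $t=n$; and if $z_n\ge z_1/2$ then $g(n)=\frac1n\sum_\ell z_\ell\ge z_n$ finishes it. So assume $n>1900$ and $z_n<z_1/2$.

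The plan is to first pin down the profile of $(z_\ell)$ using regularity and then pick a good rank $t$. Move to quantile space: $Z_\ell$ is the $\ell$-th largest of $n$ i.i.d.\ draws from $\Theta$ conditioned on lying below the common reserve $r:=r_\Theta$, and conditioning on $\{V<r\}$ is the same as conditioning the uniform quantile to exceed $q^\ast:=S_\Theta(r)$; hence $Z_\ell\overset{d}{=}v_\Theta(Q_\ell)$, where $v_\Theta$ is $\Theta$'s value--quantile function and $Q_\ell=q^\ast+(1-q^\ast)U_{(\ell)}$ with $U_{(\ell)}$ the $\ell$-th smallest of $n$ i.i.d.\ uniforms (so $\E[U_{(\ell)}]=\ell/(n+1)$ and $\E[1/U_{(\ell)}]=n/(\ell-1)$ for $\ell\ge2$). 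Write $T:=nq^\ast$. (For discrete $\Theta$ one uses the piecewise-linear concave revenue curve / ironed virtual value in place of $v_\Theta$; the estimates are unchanged.) Now invoke regularity: $R_\Theta(q):=q\,v_\Theta(q)$ is concave on $[0,1]$ with $R_\Theta(0)=0$, attains its maximum $rq^\ast$ at $q=q^\ast$, and — since $r$ is the reserve — is nonincreasing on $[q^\ast,1]$. Thus for all $q\ge q^\ast$,
\[
\frac{r q^\ast(1-q)}{(1-q^\ast)\,q}\ \le\ v_\Theta(q)\ \le\ \frac{r q^\ast}{q},
\]
the left inequality from the chord of the concave $R_\Theta$ between $(q^\ast,rq^\ast)$ and $(1,v_\Theta(1))$ with $v_\Theta(1)\ge0$. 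Taking expectations over $Q_\ell$ and plugging in the moment identities, these bracket $z_\ell$ between absolute-constant multiples of $\frac{rq^\ast n}{T+\ell}$ for $\ell\le n/2$ (with $z_1$ itself between constant multiples of $\min\{r,\,rT\log(2/T)\}$): up to a possible single large gap between $z_1$ and $z_2$, the sequence $(z_\ell)$ is a shifted-harmonic sequence with ``knee'' near index $\max\{T,1\}$. The crucial point is the upper bound $v_\Theta(q)\le rq^\ast/q$, which prevents $(z_\ell)$ from decaying more slowly than a shifted harmonic — and this is exactly where regularity enters, since a distribution whose conditional-below-reserve had a log-shaped quantile function $v(q)\propto\log(1/q)$ would make $z_\ell$ decay only like $1-\log\ell/\log n$, for which $\max_t g(t)=o(z_1)$; but such a shape would force $R_\Theta$ to be strictly convex just below $q^\ast$, contradicting regularity (and incompatible with $q^\ast$ being the reserve quantile).

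With this profile in hand, choose $t$. If $z_2\le z_1/2$, take $t=1$: $g(1)=z_1-z_2\ge z_1/2$. Otherwise $z_1\asymp z_2$ (and in the small-$T$ range $(z_\ell)$ is in fact a genuine harmonic sequence), and we set $t:=\lceil k_0\max\{T,1\}\rceil$, not exceeding $n$, for a fixed integer constant $k_0$; if this bound bites then $T=\Omega(n)$ and $t=n$ works directly from the lower bound, which gives $\frac1n\sum_\ell z_\ell=\Omega(z_1)$. Otherwise the upper bound gives $z_{t+1}\le Cz_1/k_0$, while summing the lower bound over $\ell\le t$ gives $\frac1t\sum_{\ell\le t}z_\ell\ge\frac{z_1}{k_0}(c_1+c_2\log k_0)$, the $\log k_0$ coming from $\sum_{\ell=T+1}^{k_0 T}1/\ell$. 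Hence $g(t)\ge\frac{z_1}{k_0}(c_1+c_2\log k_0-C)\ge\frac{z_1}{2k_0}$ once $k_0\ge e^{(C+c_1)/c_2}$, and tracking the (crude) constants $C,c_1,c_2$ from the bracket makes $2k_0\le1900$.

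The load-bearing step is the bracketing of $z_\ell$ with workable constants, especially the upper half (decay no slower than shifted harmonic, which is where regularity truly bites), together with the attendant care for discrete $\Theta$, the small-$T$ regime, and ``capped'' distributions (a point mass at the top of the below-reserve support). The large final constant $1900$ is simply what this crude accounting yields — it is essentially exponential in the ratio of the bracket constants, and is not optimized.
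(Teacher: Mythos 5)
Your overall strategy --- pass to quantile space, use regularity to sandwich the order-statistic means $z_\ell=\E[Z_\ell]$ between constant multiples of a shifted harmonic $\frac{rq^\ast n}{T+\ell}$, then pick $t\approx k_0\max\{T,1\}$ so that the harmonic sum beats the tail term by a $\log k_0$ factor --- is genuinely different from the paper's proof (which does a core--tail decomposition of $\cs_0$ at the quantile $4/n$, bounds the tail via $t=1$, and bounds the core by a four-way case analysis on consecutive support values with Chernoff bounds). For \emph{continuous} regular distributions your bracket is essentially correct: the upper half is $qv_\Theta(q)\le rq^\ast$ and the lower half is the chord bound from concavity, and only $\ell=1$ can escape the harmonic profile.

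The gap is the discrete case, which is the paper's actual setting and which you dismiss with ``the estimates are unchanged.'' They are not: for a discrete $\Theta$ the quantile function is a step function, and the revenue curve $q\mapsto qv_\Theta(q)$ is concave only \emph{at the support quantiles}; between consecutive support quantiles $s_{j+1}<q<s_j$ the actual value is the constant $x_j$, which can be far \emph{below} the chord value $L(q)/q$. Concretely, regularity does not prevent the below-reserve conditional $Y$ from being two-valued, say $\Pr[Y=1]=1/n$ and $\Pr[Y=\epsilon]=1-1/n$ with $\epsilon$ tiny, embedded in a regular $\Theta$ with $q^\ast=\Theta(1)$ (so $T=\Theta(n)$). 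Your bracket then claims $z_\ell=\Theta(r)$ for all $\ell\le n/2$, but in fact only $\mathrm{Bin}(n,1/n)\approx 1$ draws hit the value $1$, so $z_\ell\approx\epsilon$ already for $\ell\ge 5$. The lower half of the bracket is therefore off by an unbounded factor, and your dichotomy (``either $z_2\le z_1/2$ and $t=1$ works, or the sequence is genuinely harmonic and $t=k_0\max\{T,1\}$ works'') fails: in the discrete case a cliff in $(z_\ell)$ can occur at any index $j$ where $Y$'s support has a quantile gap near $j/n$, not only at $j=1$, so the good rank $t$ must be located adaptively near that gap. This is precisely what the paper's Cases (3) and (4) in \cref{lem:core} do, by taking $w^-,w^+$ to be \emph{consecutive} support values straddling a target quantile and choosing $t$ with $\Pr[Y\ge w^+]\in[0.2\,t/n,0.25\,t/n]$. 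Separately, even in the continuous case your final constant is only asserted (``tracking the crude constants makes $2k_0\le 1900$''); since the required $k_0$ is exponential in the ratio of your bracket constants, this needs to be carried out explicitly, though my own rough accounting suggests it is salvageable there. The discrete issue, by contrast, requires new ideas beyond what you have written.
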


The rest of this section is devoted to proving \cref{thm:regularity}.  Note that the above theorem is not true if the distribution $Y$ (the conditional distribution of $\Theta$ below its reserve) on which the order statistics $Z_i$ are defined, is a generic regular distribution such as an Exponential distribution. Consider the following example:

\begin{example}
Suppose $Y$ is \texttt{Exponential}$(1)$. Then $\E[Z_i] = \sum_{j=i}^n \frac{1}{j} = H_n - H_{i-1}$, where $H_i = \sum_{j=1}^i \frac{1}{j}$. Therefore $\cs_0 = \E[Z_i] = H_n$. It is easy to check that $\left(\frac{1}{t} \sum_{i = 1}^t \E[Z_i]\right) - \E[Z_{t + 1}] = 1 + H_n - H_{t} - \left(H_n - H_{t}\right) = 1.$
Therefore, $\cs_0 = \Omega(\log n) \cdot \alg$.
\end{example}

Note however that $Y$ is of a more specific form: It is the conditional distribution of a regular distribution $\Theta$ below its reserve $r$. This means in particular that $Y$ \emph{cannot} be an Exponential distribution (or its discrete counterpart, the Geometric distribution). We will crucially use the property that the revenue of $\Theta$ when the price is set below the reserve $r$ is a concave function of the quantile of the price, and further, this function is non-decreasing. Formally, we use the following.

\begin{lemma}
For any regular distribution $\Theta$ with $S(v) = \Pr[\Theta \ge v]$ and reserve $r$, and for any $u \leq v < r$ in the support of $\Theta$, we have: (1) $u \cdot S(u) \geq \frac{1 - S(u)}{1 - S(v)} \cdot v \cdot S(v),$ and (2) $u \cdot S(u) \le v \cdot S(v).$ 
\label{lem:reg_concave}
\end{lemma}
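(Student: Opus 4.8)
The plan is to restate both inequalities in terms of the \emph{revenue curve in quantile space} and then read them off from its concavity. Write the support of $\Theta$ as $z_1 < z_2 < \cdots < z_k$, set $q_\ell = S(z_\ell)$ so that $q_1 = S(z_1) = 1 > q_2 > \cdots > q_k$, and let $R$ be the piecewise-linear function on $[q_k,1]$ interpolating the points $(q_\ell,\, z_\ell S(z_\ell))$ for $\ell=1,\dots,k$; thus $R(S(z)) = z\,S(z)$ for every $z$ in the support. The first step is a short computation: for consecutive support values, using $S(z_\ell) = S(z_{\ell+1}) + f_\Theta(z_\ell)$, the slope of $R$ on the segment $[q_{\ell+1},q_\ell]$ is
$\frac{z_\ell S(z_\ell) - z_{\ell+1}S(z_{\ell+1})}{S(z_\ell) - S(z_{\ell+1})} = z_\ell - (z_{\ell+1} - z_\ell)\frac{S(z_{\ell+1})}{f_\Theta(z_\ell)} = \varphi_\Theta(z_\ell)$, exactly Eq.~(\ref{eq:vv}). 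Since $\Theta$ is regular, $\varphi_\Theta$ is non-decreasing in $z$, so as $q$ increases these segment slopes $\varphi_\Theta(z_{k-1}),\varphi_\Theta(z_{k-2}),\dots,\varphi_\Theta(z_1)$ are non-increasing; hence $R$ is concave on $[q_k,1]$.

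Next I record two facts. First, $R(1) = R(S(z_1)) = z_1 \ge 0$. Second, $R$ is non-increasing on $[q_r,1]$ where $q_r = S(r)$: as $R$ is piecewise linear its maximum over $[q_k,1]$ is attained at some breakpoint and hence equals $\max_\ell z_\ell S(z_\ell) = r\,S(r)$; because $r$ is by definition the \emph{smallest} support value achieving this maximum, $q_r$ is the largest quantile at which $R$ is maximal, and a concave function is non-increasing to the right of its last maximizer.

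Now fix $u \le v < r$ in the support. If $u = v$ both claims are immediate (with equality), so assume $u < v$; then $z_1 \le u < v$ gives $v \ge z_2$, so $1-S(v) > 0$, and $S(r) < S(v) < S(u) \le 1$ (strict since $u,v,r$ are distinct support points and $S$ is strictly decreasing on the support). For part~(2): $[S(v),S(u)] \subseteq [q_r,1]$, where $R$ is non-increasing, and $S(v)\le S(u)$, so $R(S(u)) \le R(S(v))$, i.e.\ $u\,S(u) \le v\,S(v)$. For part~(1): write $S(u) = \lambda S(v) + (1-\lambda)\cdot 1$ with $\lambda = \frac{1-S(u)}{1-S(v)} \in [0,1]$; since $[S(v),1] \subseteq [q_k,1]$ and $R$ is concave there, $R(S(u)) \ge \lambda R(S(v)) + (1-\lambda)R(1) \ge \lambda R(S(v))$ by $R(1)\ge 0$, which is precisely $u\,S(u) \ge \frac{1-S(u)}{1-S(v)}\,v\,S(v)$.

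The only genuine work is the slope identity $\varphi_\Theta(z_\ell) = $ (segment slope of $R$), which is what converts the regularity hypothesis into concavity of $R$; everything after that is standard, part~(2) being ``concave and maximized at the reserve $\Rightarrow$ non-increasing past the reserve'' and part~(1) the usual interpolation bound for a concave function with non-negative value at the right endpoint $q=1$. The one point requiring care is the degenerate case $v = z_1$ (i.e.\ $S(v)=1$), where $\tfrac{1-S(u)}{1-S(v)}$ is $0/0$; this forces $u=v=z_1$ and is read as an equality, so it is harmless. I do not anticipate any real obstacle — this is a standard structural property of discrete regular distributions.
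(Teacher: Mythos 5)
Your proposal is correct and follows essentially the same route as the paper's proof: identify the segment slopes of the revenue curve in quantile space with the virtual values of Eq.~(\ref{eq:vv}), deduce concavity from regularity, and read off part~(2) from monotonicity past the reserve quantile and part~(1) from the chord bound at the right endpoint $q=1$ where $R(1)=z_1\ge 0$. The paper states this tersely; your write-up merely fills in the interpolation step for part~(1) and the $S(v)=1$ edge case, both of which check out.
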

\begin{proof}
Let $f_i = \Pr[\D = i]$. Regularity means the quantity $\varphi(x_i) = x_i - (x_{i+1} - x_i) \frac{S(x_{i+1})}{f(x_{i})}$ is non-decreasing in $i$. Let $R_i = x_i S(x_i)$ be the revenue when the price is $x_i$. Note that 
\[
\varphi(x_i) = \frac{R_{i} - R_{i+1}}{S(x_{i}) - S(x_{i+1})}.
\]
Consider the revenue as a function of $S(x_i)$. Regularity means this function is concave on $S(x_i)$ where $x_i$ is within the support of $\D$. The lemma then follows since the curve is non-increasing in $S(x_i)$ for $x_i \le r$.
\end{proof}

The next idea in the analysis is to decompose the distribution $Y$ into the ``core'' and ``tail'', so that values $v$ with $\Pr[Y \ge v] = O(1/n)$ lie in the tail, and the rest lie in the core. Roughly speaking, we will show that the expected value of $Y$ in either the core or the tail is upper bounded by $\cs(\rank_t)$ for suitable choices of $t$. Such choices are non-trivial and form the crux of our analysis. 

\subsubsection{Core-Tail Decomposition} 
To fix notation, recall that every $\D_i$ is the same regular distribution $\Theta$. The values $x_1 < x_2 < \cdots < x_\K$ form the support of  $\Theta$ and its reserve price is $r$. 
Recall $S(x) = \Pr[\Theta \ge x]$ and $Y$ is the conditional distribution of $\Theta$ strictly below $r$.

First, if $n \leq 1000$, $\alg \geq \frac{1}{1000} \E[Z_1] = \frac{1}{1000} \cs_0$. Therefore, we assume $n > 1000$ in the following analysis. We further assume $r > x_1$ since otherwise $\cs_0 = 0$ and Myerson's auction itself raises optimal $\cs$. Suppose $r = x_i$ with $i > 1$, then  $\{x_1, \ldots, x_{i-1}\}$ is the support of $Y$.

Let $u^+$ be the smallest value in the support of $Y$ with $\Pr[Y \geq u^+] \leq \frac{4}{n}$ and let $u^-$ be the largest value in the support of $Y$ with $\Pr[Y \geq u^-] > \frac{4}{n}$. Both of these values exist since $\Pr[Y \ge x_1] = 1$ (we assumed $r > x_1$) and $\Pr[Y \ge r] = 0$. Further, $u^-$ and $u^+$ are consecutive values in the support of $Y$.

We divide $\cs_0$ into $\core$ and $\tail$, based on whether there is a realized value at least $u^+$. 

\begin{definition}
$\core := P \cdot u^-$, and $\tail := P \cdot n \cdot \E[Y \cdot \ind(Y \geq u^+)]$.
\end{definition}

\begin{lemma}
\label{lem:coretail}
$\cs_0 \leq \core + \tail$.
\end{lemma}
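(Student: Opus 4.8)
### Proof Plan for Lemma~\ref{lem:coretail}

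The plan is to start from the expression $\cs_0 = P \cdot \E[Z_1]$ given by \cref{lem:z1}, where $Z_1$ is the maximum of $n$ independent draws $Y_1,\ldots,Y_n$ from the conditional distribution $Y = \Theta_{|<r}$. I would split the expectation $\E[Z_1]$ according to whether the realized maximum exceeds the threshold $u^+$ or not. Concretely, write
\[
\E[Z_1] = \E\!\left[Z_1 \cdot \ind(Z_1 < u^+)\right] + \E\!\left[Z_1 \cdot \ind(Z_1 \geq u^+)\right].
\]
The first term is the ``small'' regime and should be bounded by $u^-$; the second is the ``large'' regime and should be bounded by $n \cdot \E[Y \cdot \ind(Y \geq u^+)]$. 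Multiplying through by $P$ then gives exactly $\core + \tail$.

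For the first term, note that on the event $Z_1 < u^+$, all draws are strictly below $u^+$, hence (since $u^-$ and $u^+$ are consecutive in the support of $Y$) all draws are at most $u^-$, so $Z_1 \leq u^-$. Thus $\E[Z_1 \cdot \ind(Z_1 < u^+)] \leq u^- \cdot \Pr[Z_1 < u^+] \leq u^-$, which gives the $\core$ bound after multiplying by $P$. For the second term, I would use a union bound over which draw $Y_i$ is responsible for pushing the maximum above $u^+$: if $Z_1 \geq u^+$ then $Z_1 = \max_i Y_i$ and $Z_1 \leq \sum_{i=1}^n Y_i \cdot \ind(Y_i \geq u^+)$ on this event (in fact on the event at least one $Y_i \geq u^+$, and more crudely $Z_1 \cdot \ind(Z_1 \ge u^+) \le \sum_i Y_i \ind(Y_i \ge u^+)$ always since the max is at least $u^+$ exactly when some term is). Taking expectations and using that the $Y_i$ are identically distributed,
\[
\E\!\left[Z_1 \cdot \ind(Z_1 \geq u^+)\right] \leq \sum_{i=1}^n \E\!\left[Y_i \cdot \ind(Y_i \geq u^+)\right] = n \cdot \E\!\left[Y \cdot \ind(Y \geq u^+)\right],
\]
which is exactly $\tail / P$.

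The only mild subtlety — and the step I would be most careful about — is the inequality $Z_1 \cdot \ind(Z_1 \ge u^+) \le \sum_{i=1}^n Y_i \cdot \ind(Y_i \ge u^+)$. This holds pointwise: if $Z_1 < u^+$ the left side is $0$; if $Z_1 \ge u^+$ then the index $i^\star$ achieving the maximum has $Y_{i^\star} = Z_1 \ge u^+$, so that single term on the right already equals $Z_1$, and all other terms are nonnegative. Combining the two bounds and multiplying by $P$ yields $\cs_0 = P \cdot \E[Z_1] \le P \cdot u^- + P \cdot n \cdot \E[Y \cdot \ind(Y \ge u^+)] = \core + \tail$, as claimed. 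No concavity or regularity input is needed here; those come into play only in the later steps that bound $\core$ and $\tail$ separately against $\cs(\rank_t)$.
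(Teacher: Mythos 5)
Your proof is correct and follows essentially the same route as the paper: decompose $\E[Z_1]$ according to whether $Z_1$ exceeds $u^+$, bound the low regime by $u^-$ (using that $u^-$ and $u^+$ are consecutive in the support of $Y$), and bound the high regime by replacing the maximum with the sum $\sum_i Y_i \ind(Y_i \ge u^+)$. Your explicit pointwise justification of the max-by-sum step is a slightly more careful writeup of exactly the inequality the paper invokes.
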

\begin{proof}
This comes immediately from
\begin{align*}
\cs_0 &= P \cdot (\E[Z_1 \cdot \ind(Z_1 \leq u^-)] + \E[Z_1 \cdot \ind(Z_1 \geq u^+)])\\
&\leq P \cdot (u^- \cdot \Pr[Z_1 \leq u^-] + n \cdot \E[Y \cdot \ind(Y \geq u^+)])\\
&\leq \core + \tail. 
\end{align*}
where the first inequality upper bounds the max of $n$ i.i.d. samples from $\ind(Y \geq u^+)$ by their sum.
\end{proof}

\subsubsection{Upper Bound on $\core$ and $\tail$}
We will separately bound $\core$ and $\tail - O(1) \cdot \core$ in terms of $\alg$. We first bound $\tail$ using the following lemma.

\begin{lemma}
\label{lem:tail}
$\alg \geq \frac{1}{250} \cdot (\tail - 4 \cdot \core)$.
\end{lemma}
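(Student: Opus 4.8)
\textbf{Proof plan for Lemma~\ref{lem:tail}.}

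The goal is to show $\alg \geq \frac{1}{250}(\tail - 4\core)$. The natural route is to run $\TB_t$ for a carefully chosen rank $t$ and lower-bound $\cs(\TB_t)$ using \cref{lem:nonIIDBBMk}, which gives $\cs(\TB_t) \ge P\cdot\big(\frac1t\sum_{\ell=1}^t\E[Z_\ell] - \E[Z_{t+1}]\big)$. Since $\tail = P\cdot n\cdot\E[Y\cdot\ind(Y\ge u^+)]$ concerns the ``rare heavy'' values — those with survival probability at most $4/n$ — I would pick $t$ to be (roughly) a small constant fraction of the expected number of samples landing in the tail region, i.e.\ $t = \Theta(1)$ relative to $n\cdot\Pr[Y\ge u^+]$. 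Concretely, since $\Pr[Y\ge u^+]\le 4/n$, the expected number of draws at or above $u^+$ is at most $4$, so I expect to take $t$ to be a small absolute constant (the paper's constant $250$ suggests $t$ around $4$–$8$). With this choice, $\sum_{\ell=1}^t\E[Z_\ell]$ is comparable to $t\cdot\E[Z_1]$ restricted to the event that at least one sample exceeds $u^+$, while $\E[Z_{t+1}]$ — the $(t+1)$st order statistic — is with good probability below $u^-$ (since the chance of $t+1$ or more draws reaching the tail is geometrically small in $t$), so $\E[Z_{t+1}] \le u^- + (\text{small tail correction})$, and the $u^-$ part is exactly absorbed by the $-4\core$ slack (recall $\core = P\cdot u^-$).

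The key quantitative steps, in order: (1) Lower-bound $\frac1t\sum_{\ell=1}^t\E[Z_\ell]$: on the event $E_k$ that exactly $k$ of the $n$ i.i.d.\ $Y_i$'s land in $[u^+,\infty)$, for $k\ge 1$ the top-$\min(k,t)$ order statistics each contribute at least one tail value, so $\sum_{\ell=1}^t\E[Z_\ell\cdot\ind(Y_\ell\ge u^+)] \ge \E[\min(k,t)\cdot(\text{typical tail value})]$; summing over realizations this should reproduce a constant fraction of $n\cdot\E[Y\cdot\ind(Y\ge u^+)] = \tail/P$ provided the tail ``mass'' per sample is $O(1/n)$ (so that with constant probability at most $t$ samples are in the tail and we don't lose by truncating at $t$). (2) Upper-bound $\E[Z_{t+1}]$: split on whether fewer than $t+1$ samples reach $u^+$. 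If fewer than $t+1$ do, then $Z_{t+1}<u^+$, in fact $Z_{t+1}\le u^-$ since $u^-,u^+$ are consecutive in $Y$'s support; this contributes at most $u^-$. If $t+1$ or more samples reach $u^+$ — an event of probability at most $\binom{n}{t+1}(4/n)^{t+1}\le (4e/(t+1))^{t+1}$, which is a small constant for $t$ a moderate constant — then $Z_{t+1}$ is bounded by the (tail-truncated) value, contributing at most a small constant times $\E[Z_1\cdot\ind(Z_1\ge u^+)]$, hence a small constant times $\tail/(Pn)\le \tail/P$. (3) Combine: $\cs(\TB_t)\ge P\big(c_1\tail/P - u^- - c_2\tail/P\big) \ge \frac{1}{250}(\tail - 4\core)$ after choosing $t$ and tracking the numerical constants, and $\alg \ge \cs(\TB_t)$ by definition of $\alg$ as $\max_{t'}\cs(\TB_{t'})$.

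The main obstacle I anticipate is step (1) — getting a clean constant-factor lower bound on $\frac1t\sum_{\ell=1}^t\E[Z_\ell]$ in terms of $\E[Y\cdot\ind(Y\ge u^+)]$. The subtlety is that the order statistics $Z_\ell$ mix core and tail values: $Z_\ell$ could be a large core value ($<u^-$) rather than a tail value, and when many tail values are present the truncation at rank $t$ discards some of them. The coupling has to be set up so that each unit of expected tail mass $\E[Y_i\cdot\ind(Y_i\ge u^+)]$ gets ``charged'' to one of the top-$t$ order statistics with constant probability. I expect the right bound is something like: with probability at least (say) $1/2$, the number of tail samples is at most $t$, and conditioned on that, every tail sample appears among the top $t$ order statistics and hence is counted in $\sum_{\ell=1}^t Z_\ell$; this yields $\sum_{\ell=1}^t\E[Z_\ell] \ge \frac12\cdot n\cdot\E[Y\cdot\ind(Y\ge u^+)] = \frac{\tail}{2P}$, so $\frac1t\sum_{\ell=1}^t\E[Z_\ell]\ge \frac{\tail}{2tP}$. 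The rest is bookkeeping of absolute constants to land at $\frac1{250}$, together with the geometric-tail estimate on $\Pr[\#\{i:Y_i\ge u^+\}\ge t+1]$ to control the $\E[Z_{t+1}]$ overshoot against the $4\core$ budget.
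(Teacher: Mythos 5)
Your overall strategy---invoke \cref{lem:nonIIDBBMk} for a small constant rank $t$, argue that $\frac{1}{t}\sum_{\ell\le t}\E[Z_\ell]$ captures a constant fraction of the tail mass while $\E[Z_{t+1}]$ is essentially $u^-$---is the right idea, and the paper's proof is essentially the $t=1$ instance of it. But your step (3) does not close, and the problem is exactly the sentence ``the $u^-$ part is exactly absorbed by the $-4\core$ slack.'' By bounding $\E[Z_{t+1}]\le u^-+(\text{correction})$ \emph{unconditionally}, you pay a full $P\cdot u^-=\core$, whereas the lemma only budgets $\frac{4}{250}\core$ for the core term. Meanwhile the tail coefficient you can extract from $\frac{1}{t}\sum_{\ell\le t}\E[Z_\ell]$ is at most $\frac{1}{t}\Pr[N_{-i}\le t-1]$, where $N_{-i}$ is the number of other samples reaching $u^+$; since $n\Pr[Y\ge u^+]$ can be as large as $4$, this quantity is at most roughly $0.13$ for every choice of $t$. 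So your final bound has the form $c_1\tail-\core$ with $c_1\le 0.13$, and the inequality $c_1\tail-\core\ge\frac{1}{250}(\tail-4\core)$ is simply false when, say, $\tail=5\core$: the right side is positive while your lower bound is negative. The proof therefore fails precisely in the regime $4\core<\tail=O(\core)$ where the lemma still has content.

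The fix is to charge the $u^-$ subtraction only on the event that a tail sample is actually present, so that the core is charged at most $n\Pr[Y\ge u^+]\le 4$ times in expectation rather than once with certainty---this is where the constant $4$ in the lemma comes from. The paper does this with $t=1$: it writes $\alg\ge\cs(\rank_1)=P\,\E[Z_1-Z_2]\ge P\cdot\Pr[Z_1\ge u^+\wedge Z_2\le u^-]\cdot(\E[Y\mid Y\ge u^+]-u^-)$, shows $\Pr[Z_1\ge u^+\wedge Z_2\le u^-]\ge 0.004\,n\Pr[Y\ge u^+]$ (using $(1-4/n)^n\ge 0.018$ for $n>1000$, plus an exchangeability argument identifying $\E[Z_1\mid Z_1\ge u^+\wedge Z_2\le u^-]$ with $\E[Y\mid Y\ge u^+]$), and then uses $n\Pr[Y\ge u^+]\bigl(\E[Y\mid Y\ge u^+]-u^-\bigr)\ge \tail/P-4u^-$. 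Your multi-$t$ version can be repaired the same way by keeping the difference inside the sum: on the event $N\le t$ every tail sample is among the top $t$ order statistics and $Z_{t+1}\le u^-$, so
\[
\frac{1}{t}\sum_{\ell=1}^{t}\E[Z_\ell-Z_{t+1}]\;\ge\;\frac{1}{t}\,\E\Bigl[\ind(N\le t)\sum_i (Y_i-u^-)\,\ind(Y_i\ge u^+)\Bigr]\;\ge\;\frac{1}{t}\Bigl(1-\tfrac{4}{t}\Bigr)\bigl(\tail/P-4u^-\bigr),
\]
which for $t=8$ gives $\cs(\TB_8)\ge\frac{1}{16}(\tail-4\core)$. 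Either route works; as written, your accounting of the core term does not.
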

\begin{proof}
If $\Pr[Y \geq u^+] = 0$, then the inequality trivially holds since $\tail = 0$. Otherwise,
\begin{align*}
\alg \geq \cs(\rank_1) & = P \cdot \E[Z_1 - Z_2] \\
& \geq P \cdot \Pr[Z_1 \geq u^+ \land Z_2 \leq u^-] \cdot \E[Z_1 - Z_2 \mid Z_1 \geq u^+ \land Z_2 \leq u^-]\\
&\geq P \cdot \Pr[Z_1 \geq u^+ \land Z_2 \leq u^-] \cdot (\E[Z_1 \mid Z_1 \geq u^+ \land Z_2 \leq u^-] - u^-).
\end{align*}
Now, to bound the term $\Pr[Z_1 \geq u^+ \land Z_2 \leq u^-]$, we have the following. Note here $u^-$ and $u^+$ are consecutive values in the support of $Y$.
\begin{align*}
\Pr[Z_1 \geq u^+ \land Z_2 \leq u^-] &= \Pr[Z_1 \geq u^+] \cdot  \frac{\Pr[Z_1 \geq u^+ \land Z_2 \leq u^-]}{\Pr[Z_1 \geq u^+]}
\\
& =\Pr[Z_1 \geq u^+] \cdot \frac{n \cdot (1 - \Pr[Y \leq u^-]) \cdot \Pr[Y \leq u^-]^{n - 1}}{1 - \Pr[Y \leq u^-]^{n}}\\
& =\Pr[Z_1 \geq u^+] \cdot \frac{n}{\sum_{i=0}^{n-1} \Pr[Y \leq u^-]^i} \cdot \Pr[Y \leq u^-]^{n - 1}
\\
&\geq\Pr[Z_1 \geq u^+] \cdot \Pr[Y \leq u^-]^{n - 1} \\
& \geq \Pr[Y \leq u^-]^{n} = \Pr[Z_1 \geq u^+] \cdot \Pr[Z_1 \leq u^-].
\end{align*}
We now bound $ \Pr[Z_1 \geq u^+]$ and $\Pr[Z_1 \leq u^-]$ separately. To bound $ \Pr[Z_1 \geq u^+]$, note that 
\begin{align*}
\Pr[Z_1 \geq u^+] & = 1 - (1 - \Pr[Y \ge u^+])^n 
 \geq 1 - \exp(-n \cdot \Pr[Y \ge u^+])  \geq \frac{1 - e^{-4}}{4} \cdot n \cdot \Pr[Y \ge u^+]
\end{align*}
where the final inequality follows since the function $\psi(x) = \frac{1 - e^{-4x}}{(1 - e^{-4}) x} \geq 1$ when $x = \frac{1}{4} \cdot n \cdot \Pr[Y \ge u^+] \leq 1$. To bound $\Pr[Z_1 \leq u^-]$, note that:
$$
\Pr[Z_1 \leq u^-]  = (1 - \Pr[Y \geq u^+])^n  \geq (1 - 4/n)^n \geq 0.018
$$
where we have used $n \geq 1000$ in the last inequality. Therefore, we have
$$ \Pr[Z_1 \geq u^+ \land Z_2 \leq u^-]  \geq \Pr[Z_1 \geq u^+] \cdot \Pr[Z_1 \leq u^-] \geq 0.004 \cdot n \cdot \Pr[Y \geq u^+].$$

To bound $\alg$, we also need to simplify the term $\E[Z_1 \mid Z_1 \geq u^+ \land Z_2 \leq u^-]$. For this, let $Y_1, \ldots, Y_n$ be independent draws from $Y$. We have:
\begin{align*}
\E[Z_1 \mid Z_1 \geq u^+ \land Z_2 \leq u^-] &= \sum_{i = 1}^n \frac{1}{n} \cdot \E[Y_i \mid Y_i \geq u^+ \land Y_j \leq u^-, \ \forall j \neq i]\\
&= \E[Y_1 \mid Y_1 \geq u^+ \land Y_j \leq u^-, \ \forall j \neq 1]\\
&= \E[Y_1 \mid Y_1 \geq u^+]= \E[Y \mid Y \geq u^+].
\end{align*}
Here, the first equality follows since any $Y_i$ is equally likely to be the maximum value, and the third equality follows by the independence of $Y_i$'s. Putting all this together, we bound $\alg$ as:
\begin{align*}
\alg & \geq P \cdot \Pr[Z_1 \geq u^+ \land Z_2 \leq u^-] \cdot (\E[Z_1 \mid Z_1 \geq u^+ \land Z_2 \leq u^-] - u^-)
\\ & \geq 0.004 P \cdot n \cdot \Pr[Y \geq u^+] \cdot (\E[Y \mid Y \geq u^+] - u^-)\\
& = 0.004 P \cdot ( n \cdot \E[Y \cdot \ind(Y \geq u^+)] - n \cdot \Pr[Y \geq u^+] \cdot u^-) \\
&\geq 0.004 \cdot (\tail - 4 \cdot \core). 
\end{align*}
where the last inequality uses $n \cdot \Pr[Y \geq u^+] \le 4$.
\end{proof}

Next, we bound $\core$, crucially using the regularity of $\Theta$ (via \cref{lem:reg_concave}).

\begin{lemma}
\label{lem:core}
$\alg \geq \frac{1}{330} \cdot \core$.
\end{lemma}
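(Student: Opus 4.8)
The plan is to exhibit a single rank $t$ for which $\cs(\rank_t)\ge\core/330$; since $\alg$ is the consumer surplus of the best scheme $\rank_{t'}$ (\cref{alg:CS0_2}), this suffices. By \cref{lem:nonIIDBBMk}, $\cs(\rank_t)\ge P\big(\tfrac1t\sum_{\ell=1}^{t}\E[Z_\ell]-\E[Z_{t+1}]\big)$, and a short Abel-summation / layer-cake calculation rewrites the bracket as $\tfrac1t\int_0^\infty\E\big[N(v)\,\ind(N(v)\le t)\big]\,dv$, where $N(v):=\#\{i:Y_i\ge v\}\sim\mathrm{Bin}\big(n,S_Y(v)\big)$ and $Y_1,\dots,Y_n$ are i.i.d.\ copies of $Y$. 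The integrand is non-negative, so I may restrict the integral to $v\le u^-$; writing $\rho:=S_Y(u^-)=\Pr[Y\ge u^-]\in(4/n,1]$, it therefore suffices to show $\tfrac1t\int_0^{u^-}\E\big[N(v)\ind(N(v)\le t)\big]\,dv\ \ge\ u^-/330$ for a good choice of $t$.

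First I would dispatch the ``concentrated'' regime: taking $t=n$ forces $Z_{t+1}\equiv 0$ and gives $\cs(\rank_n)\ge P\cdot\tfrac1n\sum_\ell\E[Z_\ell]=P\,\E[Y]$, so if $\E[Y]\ge u^-/330$ we are done. Hence assume $\E[Y]<u^-/330$; since $\E[Y]\ge u^- S_Y(u^-)=u^-\rho$ this forces $\rho<1/330$, so $t:=\lceil c\,n\rho\rceil$ (for a suitable constant $c$, e.g.\ $c=12$) satisfies $t<n$, and also $t\ge 4c$ is large enough for the Chernoff estimates below.

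Now fix this $t\asymp n\rho$. For every $v\le u^-$ we have $S_Y(v)\ge\rho$, hence $\E[N(v)]=nS_Y(v)>4$; and whenever in addition $nS_Y(v)\le t/(2e)$, a Chernoff bound gives $\E[N(v)\ind(N(v)>t)]\le 4(t{+}1)2^{-t}\le\tfrac12$, so $\E[N(v)\ind(N(v)\le t)]\ge\tfrac12\,nS_Y(v)$. Let $\tau$ be the smallest support value with $S_Y(\tau)\le t/(2en)$; then $\tau\le u^-$ (because $\rho\le t/(2en)$ for our $t$), so for $v\in[\tau,u^-]$ both conditions hold, giving $\tfrac1t\int_0^{u^-}\E[N(v)\ind(N(v)\le t)]\,dv\ \ge\ \tfrac{n}{2t}\int_\tau^{u^-}S_Y(v)\,dv$. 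The key use of regularity is that \cref{lem:reg_concave}(2) --- the revenue $vS(v)$ of $\Theta$ is non-decreasing below its reserve $r$ --- forces $S_Y$ to decay essentially polynomially in the value rather than only logarithmically, so that $\int_\tau^{u^-}S_Y(v)\,dv$ is a constant fraction of $u^-\rho$ (one gets $\tau\le u^-/2$ directly when $S(r)$ is small, and otherwise an extra logarithmic factor from integrating $S_Y$ over $[\tau,u^-]$, with \cref{lem:reg_concave}(1) supplying a matching lower bound on $S_Y(v)$). Combining, $\tfrac{n}{2t}\int_\tau^{u^-}S_Y(v)\,dv\ \ge\ \tfrac{n}{2t}\cdot\Omega(u^-\rho)\ =\ \Omega(u^-)$, and carrying the explicit constants through gives the claimed $u^-/330$.

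The main obstacle is precisely this regularity step. Without it $Y$ could be (the below-reserve conditional of) a geometric/exponential distribution, whose order statistics shrink only like $H_n$, so that $\tfrac1t\int_0^{u^-}\E[N(v)\ind(N(v)\le t)]\,dv$ is only $\Theta(u^-/\log n)$ --- this is exactly the \texttt{Exponential}$(1)$ example above, and it is why \cref{lem:reg_concave} is indispensable. A secondary subtlety is the dependence on the reserve quantile $S(r)$: when $S(r)$ is bounded away from $0$ the polynomial-decay bound alone does not push $\tau$ below a constant fraction of $u^-$, and one must either extract a logarithmic gain from the full integral $\int_\tau^{u^-}S_Y$ via \cref{lem:reg_concave}(1) or fall back on the already-handled case $\E[Y]\ge u^-/330$; reconciling these cases together with the Chernoff constants is what produces the explicit factor $330$.
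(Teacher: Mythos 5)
Your layer-cake reformulation of \cref{lem:nonIIDBBMk} (the bracket equals $\tfrac1t\int_0^\infty\E[N(v)\ind(N(v)\le t)]\,dv$) is correct and elegant, and the dispatch of the regime $\E[Y]\ge u^-/330$ via $t=n$ is fine. The gap is in the remaining regime, where you commit to $t=\Theta(n\rho)$ with $\rho=S_Y(u^-)$ and then need $\int_\tau^{u^-}S_Y(v)\,dv=\Omega(\rho u^-)$, attributing this to ``regularity forces $S_Y$ to decay polynomially.'' That claim is false: \cref{lem:reg_concave} constrains the survival function $S$ of $\Theta$, not the survival function $S_Y$ of the \emph{below-reserve conditional} $Y$. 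Writing $S(v)=S(r)+(1-S(r))S_Y(v)$, part (2) of \cref{lem:reg_concave} only yields $S_Y(u)\le\frac{v}{u}S_Y(v)+\frac{S(r)}{1-S(r)}\cdot\frac{v-u}{u}$, and the additive term swamps $S_Y(v)$ whenever $S(r)\gg\rho$. Concretely, let $Y$ have support $\{\varepsilon,\,1-\delta,\,1\}$ with $S_Y(1-\delta)=q$, $S_Y(1)=\rho=q/100$, $q$ slightly below $1/330$, $\delta=10^{-6}$, $\varepsilon$ negligible; this $Y$ arises from a regular $\Theta$ with reserve $r=1+\Theta(\delta\rho/q)$ and $\Pr[\Theta\ge r]=1-\Theta(\delta/q)$ (the revenue curve needs only tiny increments between $1-\delta$, $1$ and $r$, which is compatible with concavity when almost all mass sits at $r$; the virtual values come out as a huge negative number, two small negative numbers, then $r$). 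Here $u^-=1$, $\E[Y]\approx q<u^-/330$, yet $S_Y$ jumps by a factor $100$ over a value interval of length $10^{-6}$, so your $\tau$ equals $u^-$ and your lower bound is $0$. Worse, this is not just slack in your estimate: with $t=\lceil 12n\rho\rceil\ll nq$ one has $Z_{t+1}\ge 1-\delta$ except with exponentially small probability, so $\cs(\rank_t)\le P\cdot(\delta+o(1))$, which really is far below $\core/330$. No constant $c$ in $t=\lceil cn\rho\rceil$ repairs this, since $q/\rho$ can be made any constant.

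The missing idea is that $t$ must be keyed to the quantile of a \emph{witness value} $w$ defined through $S$ (the unconditional survival function), not through $\rho=S_Y(u^-)$; in the example above the right choice is $t=\Theta(nq)=\omega(n\rho)$. This is exactly what the paper's four-case analysis supplies: it locates $w$ in a multiplicative window $[2S(u^-),4S(u^-)]$ when $S(u^-)<0.1$ and an additive window when $S(u^-)\ge 0.1$, treats separately the case where the window contains no support value (there regularity forces the two consecutive support values straddling the window to differ by $\Omega(u^-)$, and the surplus is harvested from that value gap rather than from an integral of $S_Y$), and sets $t=\Theta(n\,S_Y(w))$. Your Chernoff sandwich could be rerun with such a $t$, but identifying it and verifying that regularity delivers $\Omega(u^-)$ in every configuration is the entire content of the proof, and your sketch does not supply it.
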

\begin{proof}
We divide the proof into four cases based on the value of $S(u^-)$.

\paragraph{Case (1): $S(u^-) \geq 0.1$ and there is a value $w$ with $S(w) \in [0.75 \cdot S(u^-) + 0.25, 0.25 \cdot S(u^-) + 0.75]$.} We have $w < u^- < r$. Note that $S(u^-) \geq 0.1$ and $S(w) \le 0.25 \cdot S(u^-) + 0.75$ together imply $S(u^-) \ge \frac{0.1}{0.775} \cdot S(w)$. Also note that $S(w) \geq 0.75 \cdot S(u^-) + 0.25$ implies $1 - S(w) \ge \frac{1}{4} (1 - S(u^-)) $. Using \cref{lem:reg_concave}, we now have
\[
w \cdot S(w) \geq \frac{1 - S(w)}{1 - S(u^-)} \cdot u^-  \geq \frac{1}{4} \cdot u^- \cdot \frac{0.1}{0.775} \cdot S(w).
\]
so that $w \geq \frac{1}{31} \cdot u^-$. To bound the performance of $\alg$, we have
\begin{align*}
\alg &\geq \cs(\rank_n) = P \cdot \frac{1}{n} \sum_{i=1}^n \E[Z_i]
\ge P \cdot \frac{\lfloor n/10 \rfloor}{n} \cdot \E[Z_{\lfloor n/10 \rfloor}] \\
& \geq P  \cdot \frac{\lfloor n/10 \rfloor}{n} \cdot w \cdot \Pr[Z_{\lfloor n/10 \rfloor} \geq w]\\
&\geq P \cdot (1 - \Pr[Z_{\lfloor n/10 \rfloor} < w]) \cdot \frac{1}{320} \cdot u^-\\
&= \frac{1}{320} \cdot (1 - \Pr[Z_{\lfloor n/10 \rfloor} < w]) \cdot \core.
\end{align*}
where we have used $w \geq \frac{1}{31} \cdot u^-$ and $n > 1000$.

Now notice that $\Pr[Y \geq w] \geq \Pr[\Theta \geq w] = S(w) \ge 0.25$. Consider the sum of $n$ i.i.d. samples from Bernoulli$(1, 0.25)$.  The quantity $\Pr[Z_{\lfloor n/10 \rfloor} < w]$ is upper bounded by the probability that the above sum is at most $\lfloor n/10 \rfloor$. We apply Chernoff bounds~\citep{Mitzenmacher} to this Bernoulli sum to obtain:
\[
\Pr[Z_{\lfloor n/10 \rfloor} < w] \leq \exp\left(-\frac{(0.25n - 0.1n)^2}{2 \cdot (0.25n)}\right) \leq \exp\left(-0.045 \cdot n\right) < 0.001.
\]
where we have used $n > 1000$.

We conclude that $\alg \geq \frac{1}{330} \cdot \core$ in this case.

\paragraph{Case (2): $S(u^-) < 0.1$ and there is a value $w$ with $S(w) \in [2 \cdot S(u^-), 4 \cdot S(u^-)]$.} As before, we have $w < u^- < r$. Further, note that $S(u^-) < 0.1$ and $S(w) \le 4 \cdot S(u^-)$ together imply $1 - S(w) \ge \frac{0.6}{0.9}(1 - S(u^-))$ and $S(u^-) \ge \frac{1}{4} S(w)$. Using \cref{lem:reg_concave}, we have
\[
w \cdot S(w) \geq \frac{1 - S(w)}{1 - S(u^-)} \cdot u^- \cdot S(u^-) \geq \frac{0.6}{0.9} \cdot u^- \cdot \frac{1}{4} \cdot S(w).
\]
so that $w \geq \frac{1}{6} \cdot u^-$. 

Let $b$ be the largest value within the support of $Y$ with $\Pr[Y \geq b] \geq 2 \cdot \Pr[Y \geq w]$. This implies $\Pr[Y > b] < 2 \Pr[Y \geq w]$. Note that such a value $b$ exists since $\Pr[Y \geq x_1] = 1$. Since $b < w < r$, by \cref{lem:reg_concave}, we have $b \cdot S(b) \leq w \cdot S(w)$ 
and thus
\[
b \leq w \cdot \frac{S(w)}{S(b)} = w \cdot \frac{\left(S(w) - S(r)\right) + S(r)}{\left(S(b) - S(r)\right) + S(r)}.
\]
Further, using the fact that $\frac{S(w) - S(r)}{S(b) - S(r)} = \frac{\Pr[Y \geq w]}{\Pr[Y \geq b]} \leq \frac{1}{2}$, and $S(w) - S(r) \geq S(r)$ (this is since $S(w) \geq 2 \cdot S(u^-)$), we get $b \leq \frac{2}{3} \cdot w$ and thus
\[
w - b \geq \frac{1}{3} \cdot w \geq \frac{1}{18} \cdot u^-.
\]

Choose $t \geq 40$ so that $\Pr[Y \geq w] \in [0.2 \cdot t/n, 0.3 \cdot t/n]$. We now show that such a $t$ exists. First note that these intervals overlap for consecutive $t$. Next note that $\Pr[Y \geq w] \geq 2 \cdot \Pr[Y \geq u^-] \geq \frac{8}{n}$ (by the choice of $u^-$), so that when $t=40$, the value $0.2 \cdot t/n$ is a lower bound on $\Pr[Y \geq w]$. Further, since $\Pr[Y \geq w] \leq 4 \Pr[Y \geq u^-] \le 0.4$, this means that when $t=n$, the value $0.3 \cdot t/n$ is an upper bound on $\Pr[Y \geq w]$. This shows such a $t$ must exist.  

Putting these together, we now have:
\begin{align*}
\alg \geq \cs(\rank_t) & = P \cdot \frac{1}{t} \sum_{i=1}^t \E[Z_i - Z_{t+1}] 
 \geq P \cdot \frac{\lfloor t/10 \rfloor}{t} \cdot \E[Z_{\lfloor t/10 \rfloor} - Z_{t+1}]
\\
& \geq P \cdot \frac{\lfloor t/10 \rfloor}{t} \cdot \Pr[Z_{\lfloor t/10 \rfloor} \geq w \land Z_{t} \leq b] \cdot  (w - b)\\
&\geq P \cdot (1 - \Pr[Z_{\lfloor t/10 \rfloor} < w] - \Pr[Z_{t} > b]) \cdot \frac{1}{190} \cdot u^-\\
&= \frac{1}{190} \cdot (1 - \Pr[Z_{\lfloor t/10 \rfloor} < w] - \Pr[Z_{t} > b]) \cdot \core.
\end{align*}
where we have used $w - b \geq \frac{1}{18} \cdot u^-$ and $n > 1000$. Now notice that $\Pr[Y \geq w] \geq 0.2 \cdot t/n$ and $\Pr[Y > b]  < 2S_Y(w) \le 0.6 \cdot t/n$. We apply Chernoff bounds as in Case (1) to obtain:
\[
\Pr[Z_{\lfloor t/10 \rfloor} < w] \leq \exp\left(-\frac{(0.2 t - 0.1 t)^2}{2 \cdot (0.2 t)}\right) \leq \exp\left(-0.025 \cdot t\right) < 0.368.
\]
and
\[
\Pr[Z_{t} > b] \leq \exp\left(-\frac{(t - 0.6 t)^2}{(1 + 1/0.6) \cdot (0.6 t)}\right) \leq \exp\left(-0.1 \cdot t\right) < 0.019.
\]
where we have used $t \ge 40$. Combining this with the lower bound on $\alg$, we derive:
\[
\alg \geq \frac{1}{190} \cdot (1 - \Pr[Z_{\lfloor t/10 \rfloor} < w] - \Pr[Z_{t} > b]) \cdot \core \geq \frac{1}{310} \cdot \core.
\]

\paragraph{Case (3): $S(u^-) \geq 0.1$ and there is no value $w$ with $S(w) \in [0.75 \cdot S(u^-) + 0.25, 0.25 \cdot S(u^-) + 0.75]$.}
Let $w^+$ be the smallest value within the support of $\Theta$ with $S(w^+) < 0.75 \cdot S(u^-) + 0.25$, and let $w^-$ be the largest value within the support of $\Theta$ with $S(w^-) > 0.25 \cdot S(u^-) + 0.75$. Using the same argument as in the previous cases, notice that $w^+ \geq \frac{1}{5} \cdot u^-$, since
\[
w^+ \cdot S(w^+) \geq \frac{1 - S(w^+)}{1 - S(u^-)} \cdot u^- \cdot S(u^-) \geq \frac{3}{4} \cdot u^- \cdot \frac{0.1}{0.325} \cdot S(w^+).
\]
Furthermore, $w^-$ and $w^+$ are consecutive in the support of $Y$ and both values are less than $r$.

If $\Pr[Y \geq w^+] > 0.25$, the proof is similar to Case (1) that
\begin{align*}
\alg &\geq \cs(\rank_n) \geq P \cdot \Pr[Z_{\lfloor n/10 \rfloor} \geq w^+] \cdot \frac{\lfloor n/10 \rfloor}{n} \cdot w^+\\
&\geq P \cdot (1 - \Pr[Z_{\lfloor n/10 \rfloor} < w^+]) \cdot \frac{1}{60} \cdot u^-\\
&\geq \frac{1}{60} \cdot (1 - \Pr[Z_{\lfloor n/10 \rfloor} < w^+]) \cdot \core,
\end{align*}
and by Chernoff bounds analogous to Case (1) we have,
\[
\Pr[Z_{\lfloor n/10 \rfloor} < w^+] \leq \exp\left(-\frac{(0.25n - 0.1n)^2}{2 \cdot (0.25n)}\right) \leq \exp\left(-0.045 \cdot n\right) < 0.1.
\]

Now assume $\Pr[Y \geq w^+] \leq 0.25$. Choose $t \geq 20$ so that $\Pr[Y \geq w^+] \in [0.2 \cdot t/n, 0.25 \cdot t/n]$. Using the same argument as for Case (2), this $t$ exists since $\Pr[Y \geq w^+] \geq \Pr[Y \geq u^-] \geq \frac{4}{n}$ and $\Pr[Y \geq w^+] \leq 0.25$. Similar to Case (2), we have
\begin{align*}
\alg &\geq P \cdot \Pr[Z_{\lfloor t/10 \rfloor} \geq w^+ \land Z_t \leq w^-] \cdot \frac{\lfloor t/10 \rfloor}{t} \cdot (w^+ - w^-)\\
&\geq P \cdot (1 - \Pr[Z_{\lfloor t/10 \rfloor} < w^+] - \Pr[Z_t > w^-]) \cdot \frac{\lfloor t/10 \rfloor}{t} \cdot (w^+ - w^-)\\
&= P \cdot (1 - \Pr[Z_{\lfloor t/10 \rfloor} < w^+] - \Pr[Z_t \geq w^+]) \cdot \frac{\lfloor t/10 \rfloor}{t} \cdot (w^+ - w^-).
\end{align*}
where we have used that $w^+$ and $w^-$ are consecutive values in the support of $Y$.

Now notice that $\Pr[Y \geq w^+] \geq 0.2 \cdot t/n$ and $\Pr[Y \geq w^+] \leq 0.25 \cdot t/n$ and as in Case (1), we apply Chernoff bounds as follows:
\[
\Pr[Z_{\lfloor t/10 \rfloor} < w^+] \leq \exp\left(-\frac{(0.2 t - 0.1 t)^2}{2 \cdot (0.2 t)}\right) \leq \exp\left(-0.025 \cdot t\right) < 0.607.
\]
and
\[
\Pr[Z_{t} \geq w^+] \leq \exp\left(-\frac{(t - 0.25 t)^2}{(1 + 1/0.25) \cdot (0.25 t)}\right) \leq \exp\left(-0.45 \cdot t\right) < 0.001.
\]
Therefore, $\alg \geq \frac{1}{45} \cdot P \cdot (w^+ - w^-)$.

On the other hand, we have
\begin{align*}
\alg \geq \cs(\rank_n) \geq P \cdot \Pr[Z_{\lfloor n/2 \rfloor} \geq w^-] \cdot \frac{\lfloor n/2 \rfloor}{n} \cdot w^-.
\end{align*}

Notice that $\Pr[Y \geq w^-] \geq 0.75$ and as in Case (1), we apply Chernoff bounds as follows:
\begin{align*}
\Pr[Z_{\lfloor n/2 \rfloor} < w^-] \leq \exp\left(-\frac{(0.75n - 0.5n)^2}{2 \cdot (0.75n)}\right) \leq \exp\left(-0.041 \cdot n\right) < 0.01.
\end{align*}
Therefore, $\alg \geq \frac{1}{3} \cdot P \cdot w^-$ and further $\alg \geq \frac{1}{50} \cdot P \cdot w^+ \geq \frac{1}{250} \cdot \core$.

\paragraph{Case (4): $S(u^-) < 0.1$ and there is no value $w$ with $S(w) \in [2 \cdot S(u^-), 4 \cdot S(u^-)]$.}
Let $w^+$ be the smallest value with $S(w^+) < 2 \cdot S(u^-)$, and let $w^-$ be the largest value with $S(w^-) > 4 \cdot S(u^-)$. These values are consecutive in the support of $Y$.

We have $w^+ - w^- \geq \frac{1}{5} \cdot u^-$, since by \cref{lem:reg_concave},
\[
w^+ \cdot S(w^+) \geq \frac{1 - S(w^+)}{1 - S(u^-)} \cdot u^- \cdot S(u^-) \geq \frac{0.8}{0.9} \cdot u^- \cdot \frac{1}{2} \cdot S(w^+),
\]
and
\[
w^+ \cdot S(w^+) \geq w^- \cdot S(w^-) \geq w^- \cdot 2 \cdot S(w^+).
\]
where we have used $w^+ > w^-$. These imply $w^+ \ge \frac{4}{9} u^-$ and $w^- \le \frac{w^+}{2}$, so that $w^+ - w^- \geq \frac{1}{5} \cdot u^-$.

Choose $t \geq 20$ so that $\Pr[Y \geq w^+] \in [0.2 \cdot t/n, 0.25 \cdot t/n]$. This $t$ exists since $\Pr[Y \geq w^+] \geq \Pr[Y \geq u^-] \geq \frac{4}{n}$ and $\Pr[Y \geq w^+] \leq 0.2$. Similar to Case (2), this now gives
\begin{align*}
\alg \geq \cs(\rank_t) &\geq P \cdot \Pr[Z_{\lfloor t/10 \rfloor} \geq w^+ \land Z_{t} \leq w^-] \cdot \frac{\lfloor t/10 \rfloor}{t} \cdot (w^+ - w^-)\\
&\geq P \cdot (1 - \Pr[Z_{\lfloor t/10 \rfloor} < w^+] - \Pr[Z_{t} > w^-]) \cdot \frac{1}{75} \cdot u^-\\
&\geq \frac{1}{75} \cdot (1 - \Pr[Z_{\lfloor t/10 \rfloor} < w^+] - \Pr[Z_{t} \geq w^+]) \cdot \core.
\end{align*}

Now notice that $\Pr[Y \geq w^+] \geq 0.2 \cdot t/n$ and $\Pr[Y \geq w^+] < 0.25 \cdot t/n$. As in Case (1), we apply Chernoff bounds:
\[
\Pr[Z_{\lfloor t/10 \rfloor} < w] \leq \exp\left(-\frac{(0.2 t - 0.1 t)^2}{2 \cdot (0.2 t)}\right) \leq \exp\left(-0.025 \cdot t\right) < 0.607.
\]
and
\[
\Pr[Z_{t} > b] \leq \exp\left(-\frac{(t - 0.25 t)^2}{(1 + 1/0.25) \cdot (0.25 t)}\right) \leq \exp\left(-0.45 \cdot t\right) < 0.001.
\]
Therefore,
$\alg \geq \frac{1}{250} \cdot \core$.
\end{proof}

\subsubsection{Completing Proof of \cref{thm:regularity}}
Using \cref{lem:coretail}, \cref{lem:tail}, and \cref{lem:core}, we obtain:
\begin{align*}
 \cs_0 & \leq \core + \tail  = 5 \cdot \core + (\tail - 4 \cdot \core)  \leq 1650 \cdot \alg + 250 \cdot \alg  = 1900 \cdot \alg. \qedhere
\end{align*}
This completes the proof of \cref{thm:regularity} and hence, \cref{thm:cs0} and \cref{thm:main0}.

\section{Approximating Consumer Surplus: The non-i.i.d. Case}
\label{sec:mis}
We now extend the machinery developed in \cref{sec:iid} to show the following theorem, where we assume $\D_i$ are independent, but otherwise arbitrary distributions. We restate \cref{thm:main2} here.

\begin{theorem} 
\label{thm:main}
When the $\D_i$'s are arbitrary and independent,\footnote{We present the proof assuming $\D_i$ are regular. To extend the proof to arbitrary distributions, we simply replace the virtual value with ironed virtual value throughout.} there is a signaling scheme achieving an $O\big(\!\min\!\left(n \log n, \K^2\right)\!\big)$-approximation to $\opt$. Further, this signaling scheme has computation time polynomial in $n$ and $\K$.
\end{theorem}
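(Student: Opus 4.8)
The plan is to reduce \cref{thm:main} to the machinery of \cref{sec:iid}, paying the stated factor for the loss of both regularity and the identical‑distribution symmetry. As before, write $\opt = \W^* - \R(\D) = \cs(\D) + \big(\W^*-\W(\D)\big)$, where $\W^*-\W(\D)$ is the welfare that Myerson's auction (\cref{alg:my}, with ironed virtual values when $\D_i$ is not regular) fails to realize. If $\cs(\D) \ge \tfrac12\opt$ then sending no signal is already a $2$-approximation, so I would assume $\W^*-\W(\D)\ge\tfrac12\opt$ and aim to recover a $1/O(\min(n\log n,\K^2))$ fraction of it using the family $\TB_t$ and \cref{lem:nonIIDBBMk}. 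Using $\R(\D)=\E_{\vec V\sim\D}\big[\max\!\big(\max_i\varphi_{\D_i}(V_i),0\big)\big]$ and letting $k=k(\vec V)=\argmax_i V_i$, the pointwise inequality $\max_iV_i - \max\!\big(\max_i\varphi_{\D_i}(V_i),0\big) \le V_k-\max\!\big(\varphi_{\D_k}(V_k),0\big)$ is immediate from $\max_i\varphi_{\D_i}(V_i)\ge\varphi_{\D_k}(V_k)$, so $\W^*-\W(\D)\le \E\big[V_k-\max(\varphi_{\D_k}(V_k),0)\big]$. This target splits into a part where the top buyer is below its own reserve (a direct generalization of $\cs_0=P\cdot\E[Z_1]$ from \cref{lem:z1}, with $Z_\ell$ now the order statistics of \emph{independent but non-identically distributed} draws from the sub-reserve conditionals $Y_i$) and a part where it is above, which is exactly the surplus $\TB_1$ recovers by truncating the top buyer at the runner-up value via \cref{lem:TRres}.

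For the $O(n\log n)$ bound I would restrict to the $O(\log n)$ dyadic ranks $t\in\{1,2,4,\dots,n\}$ and mimic the core/tail argument of \cref{thm:regularity}, but carried out per buyer rather than invoking the i.i.d. symmetry: Chernoff bounds on which of the $n$ draws land above or below a threshold quantile still let one argue that \emph{some} dyadic $t$ makes $\tfrac1t\sum_{\ell\le t}\E[Z_\ell]-\E[Z_{t+1}]$ an $\Omega(1/\log n)$ fraction of $\E[Z_1]$; what is lost relative to \cref{thm:main0} is that the buyer $\TB_t$ selects uniformly from the top $t$ is no longer typical, so its expected $\BBM$ surplus matches the best top-$t$ buyer's single-buyer surplus only up to a factor of $n$. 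Combining these, together with the non-allocation/above-reserve split above, gives the $O(n\log n)$ factor; in the non-regular case one simply iron everywhere. For the $O(\K^2)$ bound I would instead bucket the target $\E\big[V_k-\max(\varphi_{\D_k}(V_k),0)\big]$ by the (at most $\K$) possible realized values $x_j$ of $V_k$; the dominant bucket carries a $\ge 1/\K$ fraction, and for that bucket one can pick a single rank $t$ and revealed threshold $\vc$ tuned to the scale $x_j$ so that $\TB_t$ recovers an $\Omega(1/\K)$ fraction of the bucket's contribution, the $\K$-point structure of the support playing the role that the concavity of revenue-below-reserve (\cref{lem:reg_concave}) played in \cref{thm:regularity}. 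Since $\S_0$ (\cref{alg:CS0_2}) takes the best of the $n$ candidate ranks, it inherits whichever of the two bounds is smaller, and both $\TB_t$ and its $\BBM$ subroutine run in time polynomial in $n$ and $\K$ (the latter terminates after at most $\K$ equal-revenue peelings).

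The main obstacle — and the reason the guarantee degrades from $O(1)$ to $O(\min(n\log n,\K^2))$ — is precisely that \cref{lem:reg_concave} is unavailable. Without it the sub-reserve order statistics $Z_\ell$ can behave like those of an Exponential distribution (the Example following \cref{thm:regularity}), for which $\tfrac1t\sum_{\ell\le t}\E[Z_\ell]-\E[Z_{t+1}]$ is only a $\Theta(1/\log n)$ fraction of $\E[Z_1]$ for every $t$, forcing the logarithmic (or, via bucketing, $\K$) loss; and dropping the identical-distribution assumption removes the symmetry that let a random top-$t$ buyer stand in for all of them, forcing the extra factor of $n$ (resp. $\K$). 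A secondary subtlety to be checked carefully is that, even when $\prod_i p_i=0$ so that \cref{lem:nonIIDBBMk} is vacuous, the above-reserve (misallocation) term is still recovered: $\TB_1$ does this by reducing, conditioned on the identities of the top two buyers, to a single-buyer $\BBM$ run on the top buyer's \emph{excess} over the runner-up value, using \cref{lem:TRres} to control the shifted reserve and the same convention as in single-buyer $\BBM$ that the seller breaks revenue ties in favour of welfare; and one must confirm that the posterior induced by $\TB_t$ remains a product distribution, so that Myerson's auction (with ironing) stays optimal and the revenue floor $\R(\Z)\ge\R(\D)$ is preserved.
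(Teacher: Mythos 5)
Your high-level shape (Myerson's surplus, plus the welfare that Myerson's auction fails to realize, handled with $\TB_t$ and an extra factor of $n$ or $\K$ for losing the i.i.d.\ symmetry) matches the paper's \cref{eq:opt}, and your treatment of the below-reserve part is broadly in the spirit of the paper's bounds for $\cs_0$: the paper gets $O(\log n)$ by a clean telescoping identity over \emph{all} ranks with weights $w_j=\frac{1}{j+1}$ rather than dyadic ranks, and gets $O(\K)$ via a separate combinatorial lemma about Bernoulli sums (\cref{lem:coin}); your dyadic-Chernoff sketch would need an argument of comparable substance, since \cref{lem:reg_concave} is exactly what is unavailable here. (Also note that your event ``the top buyer is below \emph{its own} reserve'' is not the non-allocation event -- another buyer may still be above its reserve and win -- so even this part does not reduce verbatim to $\cs_0=P\cdot\E[Z_1]$ or to \cref{lem:nonIIDBBMk}.) The genuine gap, however, is in the above-reserve (mis-allocation) part. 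You assert that $\E\bigl[(V_k-\varphi_{\D_k}(V_k))\,\ind(V_k\ge r_{\D_k})\bigr]$ ``is exactly the surplus $\TB_1$ recovers by truncating the top buyer at the runner-up value.'' It is not: by \cref{lem:TRres}, running $\BBM$ on $\D_{k|>v^{(2)}}-v^{(2)}$ recovers only the excess of $V_k$ over $\max\bigl(\varphi_{\D_k}(V_k),\,v^{(2)}\bigr)$, so whenever the runner-up value exceeds $\varphi_{\D_k}(V_k)$ you lose the entire interval between them. This is precisely the $P\cdot\E[Z_1-Z_2]$ obstruction the paper spells out in \cref{sec:intuition}, and it can swallow all but a $1/n$ fraction of your target before any other loss is accounted for.

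The paper's actual mechanism for mis-allocation is the idea missing from your proposal: it singles out the buyer $i^*$ (resp.\ the value $x_{k^*}$) contributing the most to the mis-allocation surplus, reveals the set $T$ of buyers with value above the \emph{fixed} cutoff $v_{i^*}$ (resp.\ $x_{k^*}$) together with that cutoff, and runs $\S_0$ on the truncated, shifted product distribution (\cref{alg:cut}). The crux (proof of \cref{lem:csilog}) is that on every profile contributing to $\cs_{i^*}$, the highest-valued buyer $j$ satisfies $\varphi_{\D_j}(v_j)<v_{i^*}$, so by \cref{lem:TRres} every buyer in $T$ has truncated value strictly below its \emph{new} reserve; hence the mis-allocation loss is dominated by the \emph{non-allocation} surplus of the modified instance, to which \cref{lem:nonIIDBBMk} and the $O(\log n)$ (resp.\ $O(\K)$) bounds apply conditionally on $(T,a)$ -- this conditioning also disposes of your worry about $\prod_i p_i=0$. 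The factor $n$ (resp.\ $\K$) is paid for committing to the single index $i^*$ (resp.\ $x_{k^*}$) out of $n$ (resp.\ $\K$) candidates, not, as you suggest, because the uniformly chosen top-$t$ buyer is ``atypical.'' Without this reduction from mis-allocation to non-allocation, your proposal has no working route from $\TB_t$ to the above-reserve part of the target, so the proof as sketched does not go through.
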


Recall the decomposition of $\opt$ into Myerson's surplus $\cs(\D)$ and the non-allocation surplus $\cs_0$ from \cref{sec:iid}.  In the non-i.i.d. case, there is also a loss in surplus because because the highest value and highest virtual-value buyers are different. We call this the {\em mis-allocation surplus}.  We denote the expected loss due to mis-allocating the item to buyer $i$ as $\cs_i$, and the expected loss due to mis-allocating to a buyer with value $x_k$ as $\widehat{\cs}_k$. This gives us two ways of decomposing $\opt$:
\begin{equation}
\label{eq:opt} 
\opt = \cs(\D) + \cs_0 + \sum_{i=1}^n \cs_i \qquad \mbox{and} \qquad \opt = \cs(\D) + \cs_0 + \sum_{k=1}^{\K} \widehat{\cs}_k.
\end{equation}

Note that the approximation bounds in \cref{sec:iid} do not hold for $\cs_0$ in the non-i.i.d. setting. However, Lemma~\ref{lem:nonIIDBBMk} did not require $i.i.d.$ distributions in its proof and holds as is. We use this lemma to derive weaker approximation bounds for $\cs_0$  in the non-iid case by using \cref{alg:CS0_2}.

\subsection{Approximating Non-allocation Surplus $\cs_0$}
We show the following theorem:
\begin{theorem}
\label{thm:cs0-noniid}
For the non-i.i.d. setting, $\cs(\S_0)$ from \cref{alg:CS0_2} is an $O\left( \min(\log n, \K) \right)$ approximation to $\cs_0$.
\end{theorem}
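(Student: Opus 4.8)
The plan is to prove the two bounds separately and take the better one. Write $a_\ell := \E[Z_\ell]$ (with the convention $Z_{n+1}:=0$, consistent with $\vc=0$ when $t=n$), $A_t := \frac1t\sum_{\ell=1}^t a_\ell$, and $P := \prod_i p_i$. By \cref{lem:z1} (whose proof only uses that Myerson's auction fails to allocate exactly when every $v_i<r_{\D_i}$, which still holds) we have $\cs_0 = P\,a_1$. By \cref{lem:nonIIDBBMk} (stated and proved without the i.i.d. assumption) and the choice of $t$ in \cref{alg:CS0_2}, $\cs(\S_0)=\max_{1\le t\le n}\cs(\TB_t)\ge P\cdot\max_{1\le t\le n}(A_t-a_{t+1})$. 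So, writing $\mu:=\max_t(A_t-a_{t+1})$, it suffices in both cases to lower bound $\mu$ in terms of $a_1$.

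For the $O(\log n)$ bound, use the identity $A_{t+1}=\frac{t A_t+a_{t+1}}{t+1}$, which gives $A_t-A_{t+1}=\frac{A_t-a_{t+1}}{t+1}\le\frac{\mu}{t+1}$ for $1\le t\le n-1$. Telescoping yields $a_1-A_n=A_1-A_n\le\mu(H_n-1)$, and since $a_{n+1}=0$ we also have $A_n=A_n-a_{n+1}\le\mu$; adding these, $a_1\le\mu H_n$, i.e. $\mu\ge a_1/H_n$. Hence $\cs(\S_0)\ge\cs_0/H_n$, an $O(\log n)$-approximation.

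For the $O(\K)$ bound the idea is a layered decomposition. Let $x_1<\cdots<x_m$ be the distinct values in $\bigcup_i\mathrm{supp}(Y_i)$ (so $m\le\K$), set $x_0:=0$, $g_j:=x_j-x_{j-1}$, let $N_j:=\#\{i : Y_i\ge x_j\}$ be a sum of independent Bernoullis with $N_j\le n$, and $q_j:=\Pr[N_j\ge1]$. A layer-cake computation gives $a_1=\sum_{j=1}^m g_j q_j$, and for each $t$
\[A_t-a_{t+1}=\frac1t\sum_{\ell=1}^t\E[Z_\ell-Z_{t+1}]=\sum_{j=1}^m g_j\cdot\frac{\E[N_j\,\ind[N_j\le t]]}{t},\]
using $\ind[Z_{t+1}<x_j\le Z_\ell]=\ind[\ell\le N_j\le t]$ and $\sum_{\ell=1}^t\ind[\ell\le N_j\le t]=N_j\ind[N_j\le t]$. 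Pick the dominant layer $j^\star$ with $g_{j^\star}q_{j^\star}\ge a_1/m\ge a_1/\K$. It then suffices to find, for the single Bernoulli-sum $N:=N_{j^\star}$ (mean $\lambda:=\E[N]$, $q:=\Pr[N\ge1]$), some $t\in\{1,\dots,n\}$ with $\E[N\ind[N\le t]]/t\ge c\,q$ for an absolute constant $c$: substituting into the display and dropping the other nonnegative layers bounds $\mu\ge c\,g_{j^\star}q_{j^\star}\ge c\,a_1/\K$, as needed. To produce such a $t$ we exploit the light tail of a Bernoulli sum: $\E[N\ind[N>2\lambda]]\le 2\,\E[(N-\lambda)^+]=\E|N-\lambda|\le\sqrt{\mathrm{Var}(N)}\le\sqrt\lambda$. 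Thus for $\lambda$ above a fixed constant, $t=\min(\lceil 2\lambda\rceil,n)$ gives $\E[N\ind[N\le t]]/t=\Omega(1)\ge\Omega(q)$; for $\lambda$ below that constant, either $t=1$ works when $\lambda$ is tiny (since $\Pr[N=1]=q-\Pr[N\ge2]\ge q-\lambda^2/2$) or a fixed moderate constant $t$ works by a crude Chernoff tail bound (legitimate because $N\le n$), after first disposing of the trivial regime $n=O(1)$, where $\cs(\TB_n)\ge P a_1/n$ already gives an $O(1)$-approximation. Combining the two bounds, $\cs(\S_0)\ge\cs_0/O(\min(\log n,\K))$.

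I expect the main obstacle to be this last step: checking that some admissible $t\le n$ achieves $\E[N\ind[N\le t]]/t=\Omega(q)$ \emph{uniformly} over all sums of independent Bernoullis — in particular making the small-$\lambda$ case analysis clean and verifying it costs only a constant (not a $\log n$) factor, so that the $\K$ bound is genuinely separate from the $\log n$ bound. Everything else — the reduction through \cref{lem:z1,lem:nonIIDBBMk}, the telescoping identity, and the layer-cake rewriting of $A_t-a_{t+1}$ — is routine.
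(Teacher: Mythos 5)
Your proposal is correct and follows essentially the same route as the paper: the same reduction to $\max_t\bigl(\tfrac1t\sum_{\ell\le t}\E[Z_\ell]-\E[Z_{t+1}]\bigr)$ via \cref{lem:z1,lem:nonIIDBBMk}, a telescoping argument for the $O(\log n)$ bound that is equivalent to the paper's harmonic-weighted averaging in \cref{lem: noniid0}, and the same dominant-layer decomposition for the $O(\K)$ bound, where your target inequality $\E[N\ind[N\le t]]/t\ge c\,q$ is literally the paper's \cref{lem:coin}, since $\E[N\ind[N\le t]]=\sum_{i=1}^{t}\Pr[N\ge i]-t\Pr[N\ge t+1]$. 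The only divergence is your proof of that sub-lemma via $\E[N\ind[N>2\lambda]]\le\E|N-\lambda|\le\sqrt{\lambda}$ plus a case split on $\lambda$, whereas the paper uses $\Pr[N\ge 2i]\le\Pr[N\ge i]^2$ and a median index; both are valid, and the case analysis you flag as the main obstacle does close as you sketch (with the $n=O(1)$ regime handled by $\cs(\TB_n)\ge P\,\E[Z_1]/n$).
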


\paragraph{$O(\log n)$-Approximation of $\cs_0$} We first show $\cs(\S_0)$ is an $O(\log n)$-approximation to $\cs_0$. We construct a signaling scheme denoted by $\S_0^2$ such that $\cs(\S_0^2) \le \cs(\S_0)$. We will then prove that consumer surplus of $\S_0^2$ is an $O(\log n)$-approximation to $\cs_0$. 
 
We assign a weight $w_j$ to each rank $j \in \{1,2, \ldots, n\}$ and run $\TB_t$ with probability proportional to these weights. Note that this choice of the $t$ does not depend on $\vec{v}$. Formally:
   
\begin{algorithm}[H]
	\DontPrintSemicolon
	$w_j \gets \frac{1}{j+1} $ \quad for $j \in \{1,2,\ldots, n-2\}$; \qquad $w_{n-1} \gets 1$; \qquad $w_n \gets 1$\; 
	Choose rank $t \in \{1,2,\ldots, n\}$  where  rank $j$ is chosen with probability proportional to $w_j$. \;
	\Return $\TB_t (\vec{v},\D)$ \;
	\caption{\textsc{$\S_0^2 (\vec{v}, \D )$}}\label{alg:CS0_1}
\end{algorithm}

\begin{lemma} 
	\label{lem: noniid0}
The consumer surplus of $\S_0^2$ (and hence $\S_0$) is an $O(\log n)$-approximation to $\cs_0$. 
\end{lemma}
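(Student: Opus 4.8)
The plan is to bound $\cs(\S_0^2)$ directly against $\cs_0$, and then deduce the bound for $\S_0$ for free: since \cref{alg:CS0_2} picks the single rank maximizing $\cs(\TB_{t'})$ while \cref{alg:CS0_1} picks a random rank, $\cs(\S_0)=\max_t\cs(\TB_t)\ge\frac1W\sum_{t=1}^n w_t\,\cs(\TB_t)$, where $W:=\sum_{t=1}^n w_t$. Moreover, because the rank $t$ in \cref{alg:CS0_1} is drawn independently of $\vec v$, the scheme $\S_0^2$ is just a public mixture of the schemes $\TB_t$, so by linearity of consumer surplus over such a mixture, $\cs(\S_0^2)=\frac1W\sum_{t=1}^n w_t\,\cs(\TB_t)$. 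So it suffices to lower bound this weighted average of the $\cs(\TB_t)$'s.

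The next step is to substitute the guarantee of our key lemma \cref{lem:nonIIDBBMk} (which, as noted, never used the i.i.d.\ assumption): writing $a_\ell:=\E[Z_\ell]$, we have $\cs(\TB_t)\ge P\cdot\bigl(\frac1t\sum_{\ell=1}^t a_\ell-a_{t+1}\bigr)$ for $t\le n-1$, and for $t=n$ the scheme $\TB_n$ is exactly the vanilla one-random-buyer scheme, giving $\cs(\TB_n)\ge P\cdot\frac1n\sum_{\ell=1}^n a_\ell$ (equivalently, apply the lemma with the convention $\E[Z_{n+1}]=0$). Together with \cref{lem:z1} (which gives $\cs_0=P\cdot a_1$), the whole claim then reduces to the purely numerical statement
\[
\sum_{t=1}^n w_t\Bigl(\tfrac1t\textstyle\sum_{\ell=1}^t a_\ell-a_{t+1}\Bigr)\ \ge\ a_1
\qquad\text{for every non-increasing }a_1\ge a_2\ge\cdots\ge a_n\ge 0\ \ (a_{n+1}:=0),
\]
combined with the estimate $W=O(\log n)$; chaining these gives $\cs(\S_0^2)\ge P\,a_1/W=\cs_0/O(\log n)$. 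The bound on $W$ is immediate: $W=\sum_{j=1}^{n-2}\frac1{j+1}+2=H_{n-1}+1=\Theta(\log n)$.

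For the numerical inequality, the role of the choice $w_t=\frac1{t+1}$ is that $\frac{w_t}{t}=\frac1t-\frac1{t+1}$ telescopes. Swapping the order of summation, $\sum_{t=1}^n\frac{w_t}{t}\sum_{\ell=1}^t a_\ell=\sum_{\ell=1}^n a_\ell\,W_\ell$ with $W_\ell:=\sum_{t=\ell}^n\frac{w_t}{t}$; the telescoping (using $w_{n-1}=w_n=1$ to absorb the last two terms) yields $W_\ell=\frac1\ell+\frac1n$ for $1\le\ell\le n-1$ and $W_n=\frac1n$. The second sum reindexes to $\sum_{t=1}^n w_t a_{t+1}=\sum_{\ell=2}^{n-1}\frac{a_\ell}{\ell}+a_n$ (here $a_{n+1}=0$ is what makes the $\ell=n+1$ term vanish), so the $\frac1\ell$ coefficients cancel for $2\le\ell\le n-1$ and the left-hand side collapses to $a_1+\frac1n\sum_{\ell=1}^n a_\ell-a_n$. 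Since the $a_\ell$ are non-increasing, their average is at least their minimum $a_n$, so this is $\ge a_1$, completing the argument.

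The only real subtlety — more a design choice than an obstacle — is selecting the weights so that two things happen simultaneously: (i) after the telescope, the coefficients $W_\ell$ cancel against the $-a_{t+1}$ contribution except for one clean residual $a_1$ plus a manifestly non-negative remainder ($\frac1n\sum_\ell a_\ell-a_n\ge0$); and (ii) the normalizer $W$ stays $\Theta(\log n)$. The weights $w_t=\frac1{t+1}$, with the two endpoint weights raised to $1$ to handle the boundary at $t=n-1,n$, are exactly tuned to achieve both. Everything else — linearity of $\cs$ over a mixture of public signaling schemes, the monotonicity $\E[Z_\ell]\ge\E[Z_{\ell+1}]$, and the harmonic estimate for $W$ — is routine.
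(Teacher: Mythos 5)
Your proposal is correct and follows essentially the same route as the paper: express $\cs(\S_0^2)$ as the weighted average $\frac{1}{W}\sum_t w_t\,\cs(\TB_t)$, plug in \cref{lem:nonIIDBBMk}, and telescope $\frac{w_t}{t}=\frac1t-\frac1{t+1}$ so that the coefficients cancel against the $-\E[Z_{t+1}]$ terms, leaving $\E[Z_1]$ up to the normalizer $W=H_{n-1}+1=O(\log n)$. The only cosmetic difference is that you fold $t=n$ into the telescope via the convention $\E[Z_{n+1}]=0$ (yielding the residual $a_1+\frac1n\sum_\ell a_\ell-a_n\ge a_1$), whereas the paper handles $\cs(\TB_n)\ge P\cdot\E[Z_n]$ as a separate additive term; both are valid.
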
 
\begin{proof}
We have: $
\cs(\S_0^2) = \frac{\sum_{t=1}^{n} w_t \cdot \cs(\TB_t)}{\sum_{t=1}^n w_t}.$ Recall the definition of $P$ from Lemma~\ref{lem:z1}. 

We first use \cref{lem:nonIIDBBMk} to lower bound the numerator of the above formula. By performing a change of summation order, we have
\begin{align*}
& \sum_{t=1}^{n-1} w_t \cs(\TB_t) + w_n \cs(\TB_n) \\
 \ge & P \left(\left(\sum_{t = 1}^{n - 2} \E[Z_t] \cdot \sum_{j = t}^{n - 2} \frac{1}{j(j+1)}\right) - \left(\sum_{t = 2}^{n - 1} \E[Z_t] \cdot \frac{1}{t}\right) +  \left(\sum_{t = 1}^{n - 1} \E[Z_t] \cdot \frac{1}{n - 1}\right) - \E[Z_n] \right)+ \cs(\TB_n)\\
= &P \cdot (\E[Z_1] - \E[Z_n]) + \cs(\TB_n).
\end{align*}
It is easy to see that $\cs(\TB_n)  \geq P \cdot \E[Z_n] $. Using Lemma~\ref{lem:z1}, we have: 
\[
\sum_{i=1}^{n} w_i \cs(\TB_i)  \ge P \cdot \E[Z_1] = \cs_0.
\]
We next have: $\sum_{t=1}^n w_t = 1+\sum_{t = 1}^{n-1} \frac{1}{t} \leq 1 + \ln n$. Therefore, we have: 
\[
\cs(\S_0^2) = \frac{\sum_{t=1}^{n} w_t \cdot \cs(\TB_t)}{\sum_{t=1}^n w_t} \ge \frac{\cs_0}{1+\ln n}.
\]
Since the scheme $\S_0$ chooses the $\TB_t$ with largest value, this completes the proof.
\end{proof}

\paragraph{$O(\K)$-Approximation of $\cs_0$.} For proving this part, we will assume $n >5$. For smaller values of $n$, the previous analysis already yields a constant-approximation to $\cs_0$.  Using Lemma~\ref{lem:z1}, we know $\cs_0 = P \cdot \E[Z_1]$. Setting $x_0=0$,  we therefore have: 
\[
\E[Z_1] = \sum_{k=1}^ \K \Pr [Z_1 \ge x_k] \cdot (x_k-x_{k-1}).
\]

Therefore, there is a $k^* \in \{1,2, \ldots, \K\}$ such that $\E[Z_1] \le \K \cdot \Pr [Z_1 \ge x_{k^*}] \cdot (x_{k^*}-x_{k^*-1})$, i.e.,
\[
\cs_0 \le \K \cdot P \cdot \Pr [Z_1 \ge x_{k^*}] \cdot (x_{k^*}-x_{k^*-1}).
\]

Therefore, if we show that $\cs(\S_0) \ge \Omega(1) \cdot P \cdot \Pr [Z_1 \ge x_{k^*}] \cdot (x_{k^*}-x_{k^*-1})$, then we have an $O(\K)$-approximation to $\cs_0$. We now need the following probability lemma: 

\begin{lemma} 
	\label{lem:coin}
Given $n$ independent Bernoulli random variables, $X_i \in \{0,1\}$ for $i \in \{1,2,\ldots,n\}$, let $N = \sum_i X_i$. Then there exists a value $j \in \{1,2, \ldots, n\}$ such that:
\[
\frac{1}{j} \left(\sum_{i = 1}^j \Pr[N \geq i]\right) - \Pr[N \ge j+1] \ge  \frac{\Pr[N \ge 1]}{15}.
\]
\end{lemma}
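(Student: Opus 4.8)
Set $q_i=\Pr[N\ge i]$ (so $q_0=1$, $q_{n+1}=0$, and $q_1\ge q_2\ge\cdots$), and $\mu=\E[N]=\sum_{i=1}^n q_i$. For a fixed $j$ the quantity to be bounded is $\frac1j\sum_{i=1}^j q_i-q_{j+1}=\frac1j\,\E[\min(N,j)]-\Pr[N>j]$, which is always nonnegative. The plan is to choose a good $j$ according to a coarse case split, using only two consequences of the Bernoulli structure: (i) $\mu\ge q_1$, since $N\ge\ind[N\ge1]$ pointwise; and (ii) $q_2\le q_1^2$, which follows from the Weierstrass inequality $\prod_i(1-p_i)\ge 1-\sum_i p_i$ together with $p_i/(1-p_i)\ge p_i$ (equivalently, from log-concavity of the law of $N$).

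First, if $q_1\le \tfrac{14}{15}$, take $j=1$: the quantity is $q_1-q_2\ge q_1-q_1^2=q_1(1-q_1)\ge q_1/15$ by (ii), and we are done. So assume $q_1>\tfrac{14}{15}$, hence $\mu\ge q_1>\tfrac{14}{15}$ by (i). If moreover $\mu\ge n/8$, take $j=n$: since $N\le n$ and $q_{n+1}=0$, the quantity equals $\frac1n\sum_{i=1}^n q_i=\mu/n\ge\tfrac18>\tfrac1{15}\ge q_1/15$.

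The remaining case is $q_1>\tfrac{14}{15}$ and $\mu<n/8$ (which forces $n\ge 8$). Here I take $j=\lceil 8\mu\rceil$, so that $8\le j\le n$ (the lower bound using $\mu>\tfrac{14}{15}$, the upper using $8\mu<n$ and $n$ integral). The strong multiplicative Chernoff bound~\citep{Mitzenmacher}, $\Pr[N\ge m]\le e^{-\mu}(e\mu/m)^m$ for $m>\mu$, gives $\Pr[N\ge m]\le(e/8)^m$ for all $m\ge 8\mu$; summing the geometric series over $m\ge j+1\ge 9$ yields $\E[(N-j)^+]=\sum_{m\ge j+1}q_m< 10^{-4}$ and also $q_{j+1}<10^{-4}$. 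Since $\sum_{i=1}^j q_i=\E[\min(N,j)]=\mu-\E[(N-j)^+]$ and $j\le 8\mu+1$, the quantity is at least $\tfrac{\mu-10^{-4}}{8\mu+1}-10^{-4}$, which is increasing in $\mu$, hence at least its value at $\mu=\tfrac{14}{15}$, which exceeds $\tfrac1{10}>q_1/15$.

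The main obstacle is that the inequality is simply \emph{false} — degrading to $\Theta(\log n)$, the same phenomenon that limits the general-distribution analysis and produces the $O(\log n)$ loss in~\cref{lem: noniid0} (cf.\ the Exponential example in~\cref{sec:iid}) — for an arbitrary non-increasing sequence $(q_i)$ with fixed sum, so the Bernoulli structure must genuinely be used. The delicate point is that the two natural choices of $j$ cover complementary regimes: small $j$ works only when $q$ decays quickly at the start, which is guaranteed only when $1-q_1=\Pr[N=0]$ is not tiny; large $j\approx\mu$ works only when $q_{j+1}$ is negligible, which requires concentration and hence that $\mu$ be of the same scale as the point where $N$ concentrates. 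Making these regimes overlap with enough slack for the constant $\tfrac1{15}$ is exactly what the inequalities $q_2\le q_1^2$ and $\mu\ge q_1$ provide.
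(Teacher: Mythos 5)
Your proof is correct. It shares its opening move with the paper: both handle the regime where $\Pr[N\ge 1]$ is bounded away from $1$ by taking $j=1$ and using $\Pr[N\ge 2]\le \Pr[N\ge 1]^2$ (your derivation of this via Weierstrass checks out: it reduces to $\prod_i(1-p_i)\ge 1-\sum_i p_i/(1-p_i)$'s weaker consequence $Q\ge 1-S$, which holds). The divergence is in the main case. The paper proves the full sub-multiplicativity $\Pr[N\ge 2i]\le\Pr[N\ge i]^2$, pivots $j$ at the \emph{median} of $N$ (the largest $j'$ with $\Pr[N\ge j']\ge 1/2$), and extracts the constant from the gap $\sqrt{0.5}-0.5$ between $\Pr[N\ge\lfloor j/2\rfloor]$ and $\Pr[N\ge j+1]$ — a purely combinatorial argument with no concentration inequality. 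You instead pivot $j$ at a multiple of the \emph{mean}, use $\mu\ge q_1$ to guarantee $j\ge 8$, and invoke the multiplicative Chernoff bound to make $\E[(N-j)^+]$ and $q_{j+1}$ negligible; your numerics ($(e/8)^9/(1-e/8)<10^{-4}$, and $(\mu-10^{-4})/(8\mu+1)-10^{-4}>1/10$ at $\mu=14/15$ with monotonicity in $\mu$) all verify, as does the edge accounting ($j=\lceil 8\mu\rceil\in\{8,\dots,n\}$, the separate $\mu\ge n/8$ case with $j=n$ and $q_{n+1}=0$). The trade-off: the paper's route is more self-contained and needs only one structural fact about Poisson-binomial tails, while yours needs only the weakest instance $q_2\le q_1^2$ of that fact but imports a standard concentration bound; both deliver comparable constants.
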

\begin{proof}
Notice that $\Pr[N \geq 2i] = \Pr[N \geq i] \cdot \Pr[N \geq 2i \mid N \geq i] \leq \Pr[N \geq i]^2$ for any integer $i$. There is some $j' \in \{0, 1, \ldots, n\}$, so that $\Pr[N \geq j] \geq 0.5$, while $\Pr[N \geq j + 1] \leq 0.5$. Fix any such $j$. 
\begin{itemize} 
\item If $j' = 0$, then $\Pr[N \geq 1] \leq 0.5$, and for $j=1$, we have:
\begin{align*}
\Pr[N \geq 1] - \Pr[N \geq 2] \geq \Pr[N \geq 1] - \Pr[N \geq 1]^2 \geq 0.5 \cdot \Pr[N \ge 1].
\end{align*}
\item If $j' = 1$ and $\Pr[N \geq 1] \leq 0.75$, then for $j = 1$, we have:
\begin{align*}
\Pr[N \geq 1] - \Pr[N \geq 2] \geq \Pr[N \geq 1] - \Pr[N \geq 1]^2 \geq 0.25 \cdot \Pr[N \ge 1].
\end{align*}
\item If $j' = 1$ and $\Pr[N \geq 1] > 0.75$, then for $j = 1$, we have:
\begin{align*}
\Pr[N \geq 1] - \Pr[N \geq 2] \geq 0.25 \geq 0.25 \cdot \Pr[N \ge 1].
\end{align*}
\item If $j' \geq 2$, then for $j = j'$, we have
\begin{align*}
&\left(\frac{1}{j} \sum_{i = 1}^j \Pr[N \geq i]\right) - \Pr[N \geq j + 1]\\
\geq &\frac{1}{j}\left(\lfloor j/2 \rfloor \cdot \Pr[N \geq \lfloor j/2 \rfloor] + \lceil j/2 \rceil \Pr[N \geq j + 1]\right) - \Pr[N \geq j + 1]\\
\geq &\frac{\lfloor j/2 \rfloor}{j} \cdot \left(\Pr[N \geq \lfloor j/2 \rfloor] - \Pr[N \geq j + 1]\right)\\
\geq &\frac{\lfloor j/2 \rfloor}{j} \cdot \left(\sqrt{0.5} - 0.5\right) \geq \frac{1}{15} \geq \frac{1}{15} \cdot \Pr[N \geq 1]. \qedhere
\end{align*}
\end{itemize}
\end{proof}

We now complete the proof of the $O(\K)$-approximation in the lemma below.
\begin{lemma} 
	\label{lem:supk0}
	There exists a value $t'$ such that $\cs(\TB_{t'}) \ge \Omega(1) \cdot P \cdot \Pr [Z_1 \ge x_{k^*}] \cdot (x_{k^*}-x_{k^*-1})$. 
\end{lemma}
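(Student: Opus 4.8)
Here is the plan I would follow to prove \cref{lem:supk0}.

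\medskip\noindent\textbf{Approach.} The plan is to deduce \cref{lem:supk0} by feeding a carefully chosen rank into \cref{lem:nonIIDBBMk}, where the rank is produced by applying the combinatorial \cref{lem:coin} to a suitable family of Bernoulli variables. Recall that $Z_\ell$ is the $\ell$-th largest of independent draws $Y_1,\dots,Y_n$ with $Y_i\sim\D_{i|<r_{\D_i}}$. I would set $X_i=\ind(Y_i\ge x_{k^*})$ and $N=\sum_{i=1}^n X_i$. Since $Z_\ell\ge x_{k^*}$ exactly when at least $\ell$ of the $Y_i$ exceed $x_{k^*}$, we have $\Pr[Z_\ell\ge x_{k^*}]=\Pr[N\ge\ell]$, and in particular $\Pr[Z_1\ge x_{k^*}]=\Pr[N\ge 1]$. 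The idea is to ``quantize'' each draw to $Y_i':=(x_{k^*}-x_{k^*-1})\,X_i$ (with $x_0=0$), whose order statistics are $Z_\ell':=(x_{k^*}-x_{k^*-1})\,\ind(N\ge\ell)$, so that $\E[Z_\ell']=(x_{k^*}-x_{k^*-1})\,\Pr[N\ge\ell]$.

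\medskip\noindent\textbf{Key step (the main obstacle).} The crux is the \emph{pointwise} inequality
\[
\frac1t\sum_{\ell=1}^t Z_\ell - Z_{t+1}\ \ge\ \frac1t\sum_{\ell=1}^t Z_\ell' - Z_{t+1}'
\]
valid for every realization of $Y_1,\dots,Y_n$ and every $t\in\{1,\dots,n\}$ (with the convention $Z_{n+1}=Z_{n+1}'=0$). This is delicate because replacing $Y_i$ by $Y_i'$ shrinks both the ``reward'' sum $\tfrac1t\sum_\ell Z_\ell$ and the subtracted ``penalty'' $Z_{t+1}$, so monotonicity alone does not fix the sign. I would prove it by casing on $t$ versus $N$. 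If $t+1\le N$, then $Z_\ell'=x_{k^*}-x_{k^*-1}$ for all $\ell\le t+1$, the right-hand side is $0$, and the claim reduces to $\tfrac1t\sum_{\ell\le t}Z_\ell\ge Z_{t+1}$, which holds because the $Z_\ell$ are non-increasing. If $t\ge N$, then $Z_{t+1}'=0$ and the right-hand side is $\tfrac1t\sum_{\ell\le t}Z_\ell'=\tfrac{N}{t}(x_{k^*}-x_{k^*-1})$; moreover, since the $Y_i$ live on the discrete support, the $(t+1)$-st largest value is strictly below $x_{k^*}$, hence $Z_{t+1}\le x_{k^*-1}$. Writing $\sum_{\ell=1}^t Z_\ell=\sum_{\ell\le N}Z_\ell+\sum_{N<\ell\le t}Z_\ell$ and using $Z_\ell\ge x_{k^*}=(x_{k^*}-x_{k^*-1})+x_{k^*-1}$ for $\ell\le N$ together with $Z_\ell\ge Z_{t+1}$ for $N<\ell\le t$, one checks $\tfrac1t\big(\sum_{\ell=1}^t Z_\ell - N(x_{k^*}-x_{k^*-1})\big)\ge Z_{t+1}$ (the only nontrivial point being $x_{k^*-1}\ge Z_{t+1}$, which we just noted), i.e.\ exactly the desired inequality.

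\medskip\noindent\textbf{Wrap-up.} I would then take expectations of the pointwise inequality, obtaining $\tfrac1t\sum_{\ell=1}^t\E[Z_\ell]-\E[Z_{t+1}]\ge (x_{k^*}-x_{k^*-1})\big(\tfrac1t\sum_{\ell=1}^t\Pr[N\ge\ell]-\Pr[N\ge t+1]\big)$ for every $t$, and apply \cref{lem:coin} to $X_1,\dots,X_n$ to get a rank $t'\in\{1,\dots,n\}$ with $\tfrac1{t'}\sum_{\ell=1}^{t'}\Pr[N\ge\ell]-\Pr[N\ge t'+1]\ge \Pr[N\ge 1]/15$. Combined with \cref{lem:nonIIDBBMk} this yields
\[
\cs(\TB_{t'})\ \ge\ P\Big(\tfrac1{t'}\sum_{\ell=1}^{t'}\E[Z_\ell]-\E[Z_{t'+1}]\Big)\ \ge\ \tfrac{1}{15}\,P\cdot\Pr[Z_1\ge x_{k^*}]\cdot(x_{k^*}-x_{k^*-1}),
\]
which is the desired $\Omega(1)$-bound. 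Note that this argument uses neither the i.i.d.\ nor the regularity hypothesis — only the discreteness of the supports enters, through the implication $Z_{t+1}<x_{k^*}\Rightarrow Z_{t+1}\le x_{k^*-1}$.
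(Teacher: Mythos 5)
Your proposal is correct and follows essentially the same route as the paper: both reduce to the Bernoulli indicators $X_i=\ind(Y_i\ge x_{k^*})$ (the paper's $G_{k^*}$ is your $N$), lower-bound $\frac1t\sum_{\ell\le t}\E[Z_\ell]-\E[Z_{t+1}]$ by $(x_{k^*}-x_{k^*-1})\big(\frac1t\sum_{\ell\le t}\Pr[N\ge\ell]-\Pr[N\ge t+1]\big)$, and invoke \cref{lem:coin} together with \cref{lem:nonIIDBBMk}. The only difference is that you justify the intermediate inequality via a pointwise quantization/coupling, whereas the paper leaves it implicit (it follows from the layer-cake identity $\E[Z_\ell]=\sum_k\Pr[Z_\ell\ge x_k](x_k-x_{k-1})$ plus the monotonicity of $\Pr[Z_\ell\ge x_k]$ in $\ell$, which makes every level's contribution nonnegative); both justifications are valid.
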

\begin{proof}
Recall that $Y_i$ is the distribution of $\D_i$ conditioned on being strictly below the reserve price $r_{\D_i}$. Define $G_k$ as the number of $Y_i$ with value at least $x_{k^*}$. Using Lemma~\ref{lem:nonIIDBBMk}, we have: 
\begin{align*}
\cs(\TB_t) \ge P  \E\left[\frac{\sum_{i=1}^t Z_i}{t} - Z_{t+1} \right]
&\ge P (x_{k^*} - x_{k^*-1}) \left(\frac{1}{t}\left(\sum_{i = 1}^t \Pr[Z_i \geq x_{k^*}]\right) - \Pr[Z_{t + 1} \geq x_{k^*}]\right)\\
&\ge P (x_{k^*} - x_{k^*-1}) \left(\frac{1}{t}\left(\sum_{i = 1}^t \Pr[G_{k^*} \geq i]\right) - \Pr[G_{k^*} \geq t + 1]\right).
\end{align*}
	
We define a Bernoulli random variable $X_i$ that is $1$ when $Y_i \ge x_{k^*}$ and zero otherwise. Applying Lemma~\ref{lem:coin}, there exists a value of $t'$ between $1$ and $n$ such that the following holds: 
\[
\left(\frac{1}{t'}\left(\sum_{i = 1}^{t'} \Pr[G_{k^*} \geq i]\right) - \Pr[G_{k^*} \ge t'+1]\right) = \Omega(1) \cdot \Pr[G_{k^*} \ge 1] = \Omega(1) \cdot \Pr [Z_1 \ge x_{k^*}].
\]
This when combined with the previous inequality, completes the proof. 
\end{proof}

From the previous lemma, there is a $t'$ such that consumer surplus of $\TB_{t'}$ is an $O(\K)$-approximation and $\cs(\S_0) \ge \cs(\TB_{t'})$ for any $t'$. Therefore, we have the following corollary: 
\begin{corollary}
\label{cor2}
The consumer surplus of $\S_0$ is an $O(\K)$-approximation to $\cs_0$.
\end{corollary}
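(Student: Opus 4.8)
The plan is to assemble \cref{cor2} directly from the pigeonhole choice of $k^*$ recorded just above and from \cref{lem:supk0}, exploiting the fact that $\S_0$ (\cref{alg:CS0_2}) runs the $\TB_t$ with the largest consumer surplus. No new estimates are needed beyond what has already been set up, so this will be a short argument.

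First I would recall the telescoping identity $\E[Z_1] = \sum_{k=1}^{\K} \Pr[Z_1 \ge x_k]\,(x_k - x_{k-1})$ with the convention $x_0 = 0$. Averaging over the $\K$ summands yields an index $k^* \in \{1,\dots,\K\}$ with $\E[Z_1] \le \K \cdot \Pr[Z_1 \ge x_{k^*}]\,(x_{k^*} - x_{k^*-1})$; multiplying by $P$ and using \cref{lem:z1} gives
\[
\cs_0 \;=\; P\cdot\E[Z_1] \;\le\; \K \cdot P \cdot \Pr[Z_1 \ge x_{k^*}]\,(x_{k^*} - x_{k^*-1}).
\]
Then I would invoke \cref{lem:supk0} for this same $k^*$: it supplies a rank $t' \in \{1,\dots,n\}$ with $\cs(\TB_{t'}) \ge \Omega(1)\cdot P\cdot\Pr[Z_1 \ge x_{k^*}]\,(x_{k^*} - x_{k^*-1})$. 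Since \cref{alg:CS0_2} selects $t = \argmax_{t'} \cs(\TB_{t'})$, we have $\cs(\S_0) \ge \cs(\TB_{t'})$ for \emph{every} rank, and in particular for the $t'$ produced above. Chaining the two bounds,
\[
\cs(\S_0) \;\ge\; \cs(\TB_{t'}) \;\ge\; \Omega(1)\cdot P\cdot\Pr[Z_1 \ge x_{k^*}]\,(x_{k^*}-x_{k^*-1}) \;\ge\; \frac{\Omega(1)}{\K}\cdot\cs_0,
\]
which is the asserted $O(\K)$-approximation. To make the statement unconditional I would add the remark that \cref{lem:supk0} (through \cref{lem:coin}) assumed $n > 5$, whereas for $n \le 5$ the $O(\log n)$ bound of \cref{lem: noniid0} is already an $O(1)$-approximation, so the conclusion holds for all $n$.

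There is no real obstacle in this step; the substance lives in \cref{lem:supk0} and \cref{lem:coin}. The one subtlety worth stating explicitly is that the rank $t'$ depends on $k^*$, which is harmless precisely because $\S_0$ commits to no fixed rank — it takes the pointwise maximum over all ranks and therefore automatically dominates the best choice for whichever $k^*$ the averaging argument picks out.
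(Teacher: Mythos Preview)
Your proposal is correct and follows essentially the same approach as the paper: pick $k^*$ by averaging the telescoping sum for $\E[Z_1]$, invoke \cref{lem:supk0} to obtain a rank $t'$ with $\cs(\TB_{t'}) = \Omega(1)\cdot P\cdot\Pr[Z_1\ge x_{k^*}](x_{k^*}-x_{k^*-1})$, and use that $\S_0$ takes the argmax over ranks. Your remark handling $n\le 5$ via \cref{lem: noniid0} matches the paper's own caveat preceding the $O(\K)$ argument.
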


\subsection{Approximating Mis-allocation Surplus}
To prove Theorem~\ref{thm:main}, Eq.~(\ref{eq:opt}) shows that this requires approximating $\cs_i$ and $\widehat{\cs}_k$ for any $i$ and $k$.  We now how the machinery used for approximating $\cs_0$ (\cref{alg:CS0_1}) can also be used for approximating each of the terms $\cs_i$ and $\widehat{\cs}_k$ above. The key idea is to apply \cref{alg:CS0_2} to a suitably truncated distribution of a carefully chosen buyer. This will yield the proof of \cref{thm:main}. 

We now derive an expression for $\cs_i$, the surplus lost due to allocating the item to buyer $i$. Assuming we break ties in favor of the higher valued buyer, we have:
\begin{equation}
\label{eq:csi}
\cs_{i} = \sum_{\substack{\vec{v}: i = \mbox{argmax}_{j} (\varphi_{\D_j}(v_j)) }} \Pr[\D = \vec{v}] \cdot \left(\max_j (v_j) - v_i\right).
\end{equation}
Similarly,  $\widehat{\cs}_k$ is the surplus lost when the item is allocated to a buyer with value $x_k$. We have:
\begin{equation}
\label{eq:csk}
\widehat{\cs}_{k} = \sum_{\substack{\vec{v}: x_k = \mbox{max}_{j} (\varphi_{\D_j}(v_j)) }} \Pr[\D = \vec{v}] \cdot \left(\max_j (v_j) - x_k\right).
\end{equation}

To approximate these quantities, we run the signaling scheme for approximating $\cs_0$ (\cref{alg:CS0_2}) on a {\em modified} product distribution.  In this new scheme, we fix a cut-off value $c$, and reveal the identity of all the buyers with realized value strictly greater than $c$. Let $a$ denote the largest realized value that is at most $c$, and $T$ denote the set of buyers with values strictly bigger than $a$.  We modify the distribution $\D_i$ for $i \in T$ as follows: Recall that $\D_{i | > c}$ denotes the distribution of $V_i$ conditioned on $V_i>c$; let $X_i$ denote the corresponding random variable. We change the distribution to be that of $X_i - a$, that we denote $\D_{j | >c} -a$. We now subtract $a$ from all the valuations $v_i, i \in T$, and run the signaling scheme $\S_0$ (\cref{alg:CS0_2}) on this instance.  The details of the scheme can be found in Algorithm~\ref{alg:cut}, where $c$ is the cutoff parameter, and where we denote by $\vec{v}_{T}$ the $|T|$ dimensional vector made by choosing the indices in $T$ from $\vec{v}$. Note that $c$ could be different from $a$ when there is no buyer whose value coincides with $c$.

\begin{algorithm}[H]
	\DontPrintSemicolon
	$ a \gets \max_{v_i \le c} (v_i)$; and $T\gets \{j : v_j > a \}$  \; 
	$\D_{T,a} \gets \prod_{j\in T_i} (\D_{j | >c} -a)$ \tcp*{Modified distributions for $j \in T$.}
	$\vec{v'} \gets  \vec{v}_{T} - \mathbf{a}$ \tcp*{$\mathbf{a}$ is a vector with all elements equal to $a$.}
	$ s \gets \S_0(\vec{v'}, \D_{T,a})$   \tcp*{$s$ is the signal returned by $\S_0$.}
	\Return $(a, T_i, s)$ as final signal
	\caption{\textsc{$\cut(\vec{v}, \D, c)$}}\label{alg:cut}
\end{algorithm}

\subsubsection{$O(n \log n)$-Approximation to $\opt$}
We choose $i^* \in \{1,2,\ldots,n\}$ that maximizes $\cs_i$. By~\cref{eq:opt}, to get an $O(n \log n)$-approximation to $\opt$, it suffices to demonstrate an $O(\log n)$-approximation to the quantity $\cs_{i^*}$.  The scheme $\S_1$ chooses $i^* \in \{1,2,\ldots, n\}$ that maximizes $\cs_{i}$ and returns $\cut(\vec{v},\D, v_{i^*})$. 

\begin{theorem}
\label{lem:csilog}
The consumer surplus of $\S_1$ is an $O(\log n)$-approximation to $\cs_{i^*}$.	
\end{theorem}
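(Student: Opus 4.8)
The plan is to express $\cs(\S_1)$ as an average, over a family of auxiliary ``excess-value'' instances, of the consumer surplus of the scheme $\S_0$, then to chain \cref{lem: noniid0} and \cref{lem:z1}, and finally to bound the result below by $\cs_{i^*}$. Write $x = v_{i^*}$ for the realized value of buyer $i^*$, let $T_x = \{j\ne i^* : v_j > x\}$, and for a set $T$ let $I(x,T) := \prod_{j\in T}(\D_{j\mid >x}-x)$. Since $\S_1$ runs $\cut(\vec v,\D,v_{i^*})$ (\cref{alg:cut}), the intermediary reveals $a=x$ and $T=T_x$ and then runs $\S_0$ on $I(x,T)$. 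Conditioned on the revealed $(x,T)$, the seller's optimal mechanism can be taken to collect $x$ from the winner and run the optimal auction on excess values informed by $\S_0$'s further signal: by \cref{lem:TRres} the virtual value of $v_j$ in $\D_{j\mid >x}$ equals $x$ plus the virtual value of $v_j-x$ in $\D_{j\mid >x}-x$; by \cref{lem: CS(BBM)} the buyer that $\S_0$ selects for its $\BBM$ call has nonnegative excess virtual value, hence total virtual value $\ge x$, while every buyer outside $T$ has virtual value at most its value $\le x$ and therefore never wins; and translating all values and payments by $x$ preserves consumer surplus. This is the reasoning already used for $\TB_t$, and it yields
\[
\cs(\S_1)\;=\;\E_{x\sim\D_{i^*}}\Big[\,\E_T\big[\cs(\S_0\text{ on }I(x,T))\big]\,\Big],
\]
where the inner expectation is over the random set $T=T_x$ (each $j\ne i^*$ is in $T_x$ independently with probability $\Pr[v_j>x]$, and conditioned on $T_x=T$ the excess values $\{v_j-x\}_{j\in T}$ are independent with law $I(x,T)$).

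Next, I would apply \cref{lem: noniid0} to each $I(x,T)$: since $|T|\le n$, $\cs(\S_0\text{ on }I(x,T))\ge\frac{1}{1+\ln n}\,\cs_0(I(x,T))$. By \cref{lem:z1}, $\cs_0(I(x,T)) = \big(\prod_{j\in T}p'_j(x)\big)\cdot\E\big[\max_{j\in T}(v_j-x)\mid v_j-x<r'_j(x)\ \forall j\in T\big]$, where $r'_j(x)$ is the reserve of $\D_{j\mid >x}-x$ and $p'_j(x)=\Pr[v_j-x<r'_j(x)\mid v_j>x]$. Unfolding the expectations over $x$ and $T$ and using $\Pr[T_x=T\mid x]\cdot\prod_{j\in T}p'_j(x)=\Pr[\,T_x=T\text{ and }v_j-x<r'_j(x)\ \forall j\in T\mid x\,]$ gives
\[
\cs(\S_1)\;\ge\;\frac{1}{1+\ln n}\,\E_{\vec v}\Big[\ind\big(v_j-v_{i^*}<r'_j(v_{i^*})\ \forall j\in T_{v_{i^*}}\big)\cdot\max_{j\in T_{v_{i^*}}}(v_j-v_{i^*})\Big],
\]
the term being $0$ when $T_{v_{i^*}}=\emptyset$; intersecting the indicator with $\ind(\varphi_{\D_{i^*}}(v_{i^*})\ge0)$ only shrinks the right side.

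Finally, I would check that this expectation dominates $\cs_{i^*}$. By \cref{eq:csi}, with ties among virtual values broken toward the higher-valued buyer, $\cs_{i^*}$ sums $\max_j v_j-v_{i^*}=\max_{j\in T_{v_{i^*}}}(v_j-v_{i^*})$ over the profiles where $i^*$ is allocated, which forces $\varphi_{\D_{i^*}}(v_{i^*})\ge0$ and $\varphi_{\D_j}(v_j)<\varphi_{\D_{i^*}}(v_{i^*})$ for every $j\in T_{v_{i^*}}$ (strict, as such $j$ has $v_j>v_{i^*}$). Since $\varphi_{\D_{i^*}}(v_{i^*})\le v_{i^*}$, we get $\varphi_{\D_j}(v_j)-v_{i^*}<0$, so by \cref{lem:TRres} the virtual value of $v_j-v_{i^*}$ in $\D_{j\mid >v_{i^*}}-v_{i^*}$ is negative, hence $v_j-v_{i^*}<r'_j(v_{i^*})$ by regularity. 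Thus the profiles counted by $\cs_{i^*}$ form a sub-event of the one appearing above, and $\cs(\S_1)\ge\frac{1}{1+\ln n}\,\cs_{i^*}$, which is the claimed $O(\log n)$-approximation. The main obstacle is exactly this last coupling — identifying ``Myerson mis-allocates to $i^*$'' as a sub-event of ``every above-cutoff buyer has excess value below its truncated reserve'' — which is what lets the non-allocation-surplus machinery pay off the mis-allocation surplus; the secondary delicate point is the decomposition of the seller's optimal response to $\cut$ in the first paragraph, in particular ruling out allocations to buyers outside $T$ via the $\BBM$-induced nonnegative virtual value.
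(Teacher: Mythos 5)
Your proposal is correct and follows essentially the same route as the paper's proof: condition on the revealed cutoff $a=v_{i^*}$ and set $T$, apply \cref{lem: noniid0} to the truncated excess-value instance $\D_{T,a}$, and then use \cref{lem:TRres} to show that every profile where Myerson mis-allocates to $i^*$ maps to a profile of the truncated instance in which all excess values lie below their truncated reserves, so that $\cs_{i^*}\le\sum_{T,a}\Pr(T,a)\,\cs_0(\D_{T,a})$. Your indicator-function formulation of this coupling is just a rephrasing of the paper's profile-by-profile correspondence, so no substantive difference.
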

\begin{proof}
Applying \cref{lem: noniid0}  to the modified distribution $\D_{T,a}$ defined in~\cref{alg:cut}, the consumer surplus of $\S_1$ is at least: 
\[
\frac{1}{1 + \ln n} \sum_{T,a} \Pr(T,a) \cs_0(\D_{T,a})
\]
where $\Pr(T,a)$ is the probability of the event that $v_{i^*} = a$ {\em and}  the set of all the buyers with value strictly larger than $a$ is $T$. In order to prove the lemma, we need to show the following: 
\[
\cs_{i^*} \le \sum_{T,a} \Pr(T,a) \cs_0(\D_{T,a}).
\]
From Eq.~(\ref{eq:csi}), we have:
\[
\cs_{i^*} =  \sum_{\substack{\vec{v}: i^* = \mbox{argmax}_{j} (\varphi_{\D_j}(v_j)) }} \Pr[\D = \vec{v}] \cdot \left(\max_j (v_j) - v_{i^*}\right).
\]
Let $\D'_j =  \D_{j | > a} - a$ and let its reserve price be $r'_j$. Then,
\[
\Pr(T,a) \cs_0(\D_{T,a}) = \sum_{\vec{v'}: v'_j < r'_j \forall j \in T} \Pr(T,a) \cdot \Pr[\D_{T,a} = \vec{v'}] \cdot \max_{j \in T} (v'_j).
\]

Note that for every vector $\vec{v} \in \D$ there is a corresponding set $T$, value $a$, and vector $\vec{v'} \in D_{T,a}$ as defined in Algorithm~\ref{alg:cut}. We show that if a positive value is added to $\cs_{i^*}$ in the above formula, at least the same amount will be added to the corresponding $\Pr(T,a) \cs_0(\D_{T,a})$ by the corresponding vector $v'$. Since $\D_{T,a}$ is the product of conditional distributions of values of buyers in $T$ being larger than $a$, we have:  
\[
\Pr [\D = \vec{v}] = \Pr(T,a) \cdot \Pr [\D_{T,a} = \vec{v'}].
\]
Now, there is a positive contribution to $\cs_{i^*}$ in the event where $v_j = \max_i v_i$ and $ \varphi_{\D_j}(v_j) < \varphi_{\D_{i^*}}(v_{i^*}) \le v_{i^*} = a$.  This assumes ties are broken by allocating to the buyer with higher valuation. The contribution to $\cs_{i^*}$ is $v_j - v_{i^*}$.

Consider the distribution $\D_j$. By Lemma~\ref{lem:TRres},  $\varphi_{\D'_j}(v_j - a) = \varphi_{\D_j}(v_j)$. Since $\varphi_{\D_j}(v_j) < a = v_{i^*}$, we have $\varphi_{\D'_j}(v'_j) < 0$ so that $v'_j < r'_j$. Therefore, the contribution to $\cs_0(\D_{T,a})$ is precisely $v'_j = v_j - v_{i^*}$. This completes the proof.
\end{proof}

Since there are $n$ possible choices of $i^*$, this directly implies:

\begin{corollary}
\label{cor3}
If the values of the $n$ buyers are independent, there is a signaling scheme that is an $O(n \log n)$-approximation to $\opt$.
\end{corollary}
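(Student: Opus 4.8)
The plan is to assemble the bound from pieces that are already in place, so this corollary is essentially a wrap-up of~\cref{lem:csilog}. By the first decomposition in~\cref{eq:opt}, $\opt = \cs(\D) + \cs_0 + \sum_{i=1}^n \cs_i$. Recalling that $\S_1$ selects the index $i^*$ maximizing $\cs_i$, we have $\sum_{i=1}^n \cs_i \le n\cdot\cs_{i^*}$, and hence
\[
\opt \;\le\; \cs(\D) + \cs_0 + n\cdot\cs_{i^*}.
\]

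Next I would consider three concrete signaling schemes: (i) the trivial scheme $\Z_{\mathrm{triv}}$ that sends the single signal $\D$ (which is Bayes-plausible and whose consumer surplus is exactly $\cs(\D)$); (ii) the scheme $\S_0$ of~\cref{alg:CS0_2}, for which $\cs(\S_0) \ge \cs_0/O(\log n)$ by~\cref{lem: noniid0}; and (iii) the scheme $\S_1$, for which $\cs(\S_1) \ge \cs_{i^*}/O(\log n)$ by~\cref{lem:csilog}. Each of these runs in time polynomial in $n$ and $\K$, and all three consumer-surplus values are computable, so we may let $\Z$ be whichever of the three has the largest consumer surplus. Since the maximum of three quantities is at least their average,
\[
\cs(\Z) \;\ge\; \tfrac{1}{3}\Big(\cs(\D) + \cs(\S_0) + \cs(\S_1)\Big) \;\ge\; \tfrac{1}{3}\Big(\cs(\D) + \tfrac{\cs_0}{O(\log n)} + \tfrac{\cs_{i^*}}{O(\log n)}\Big).
\]

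Finally I would compare this termwise against the upper bound on $\opt$: trivially $\cs(\D) \ge \cs(\D)/(n\log n)$, also $\cs_0/O(\log n) \ge \cs_0/O(n\log n)$, and $\cs_{i^*}/O(\log n) = n\cdot\cs_{i^*}/O(n\log n)$; summing these three and using $\opt \le \cs(\D)+\cs_0+n\cdot\cs_{i^*}$ yields $\cs(\Z) \ge \opt/O(n\log n)$, which is the claim. There is no real obstacle specific to this corollary — the substance is entirely in~\cref{lem:csilog} (and, beneath it, in~\cref{lem:nonIIDBBMk} and the truncation reduction of~\cref{alg:cut}). The only points worth flagging are that the extra factor $n$ enters solely through the crude inequality $\sum_i\cs_i \le n\cdot\cs_{i^*}$, and that the no-signal scheme $\Z_{\mathrm{triv}}$ is precisely what accounts for the $\cs(\D)$ term of the decomposition, so no cleverness is needed there.
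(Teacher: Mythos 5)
Your argument is correct and is exactly the paper's intended (one-line) justification: the extra factor $n$ enters via $\sum_{i}\cs_i \le n\cdot\cs_{i^*}$, and taking the best of the no-signal scheme, $\S_0$, and $\S_1$ (or a uniform mixture of the three, which is itself Bayes-plausible) yields the $O(n\log n)$ bound against the decomposition $\opt = \cs(\D)+\cs_0+\sum_i\cs_i$. Nothing further is needed.
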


\subsubsection{$O(\K^2)$-Approximation to $\opt$.}
Similar to~\cref{app:cs0}, consider $k^* \in \{1,2,\ldots, \K\}$ that maximizes $\widehat{\cs}_k$. The scheme, denoted by $\S_2$ executes $\cut(\vec{v},\D, x_{k^*})$. Note that there may be no buyer with valuation $x_{k^*}$,  but~\cref{alg:cut} handles this case as well. 

\begin{theorem}
\label{lem:csklog}
The consumer surplus of $\S_2$ is an $O(\K)$-approximation to $\widehat{\cs}_{k^*}$. 
\end{theorem}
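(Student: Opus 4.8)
The plan is to transcribe the proof of \cref{lem:csilog}, making exactly two substitutions: the cutoff handed to $\cut$ becomes the \emph{fixed} support value $x_{k^*}$ rather than the realized value $v_{i^*}$, and the $O(\log n)$ guarantee of \cref{lem: noniid0} is replaced by the $O(\K)$ guarantee of \cref{cor2}. Recall that $\S_2$ runs $\cut(\vec v,\D,x_{k^*})$: it sets $a=\max_{v_l\le x_{k^*}}v_l$, forms $T=\{j:v_j>a\}$, replaces each $\D_j$ for $j\in T$ by $\D'_j=\D_{j\mid>x_{k^*}}-a$, and invokes $\S_0$ (\cref{alg:CS0_2}) on the shifted profile $\vec v_T-\mathbf a$ over the product instance $\D_{T,a}=\prod_{j\in T}\D'_j$, whose support has size at most $\K$. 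Applying \cref{cor2} to the modified distributions $\D_{T,a}$, the consumer surplus of $\S_2$ is at least $\Omega(1/\K)\cdot\sum_{T,a}\Pr(T,a)\,\cs_0(\D_{T,a})$, where $\Pr(T,a)$ is the probability that the largest realized value at most $x_{k^*}$ equals $a$ and the set of buyers above $a$ is exactly $T$. It therefore suffices to prove
\[
\widehat{\cs}_{k^*}\ \le\ \sum_{T,a}\Pr(T,a)\,\cs_0(\D_{T,a}).
\]

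To prove this inequality I will charge every positive contribution to $\widehat{\cs}_{k^*}$ in \cref{eq:csk} to a non-allocation event of the corresponding modified instance. Fix a profile $\vec v$ contributing positively: the Myerson winner $j^*=\argmax_j\varphi_{\D_j}(v_j)$ has realized value $v_{j^*}=x_{k^*}$, and some buyer $j$ attains $v_j=\max_i v_i>x_{k^*}$. Since $v_{j^*}=x_{k^*}$ is itself realized we get $a=x_{k^*}$, and then $j^*\notin T$ while $j\in T$ with $v'_j:=v_j-x_{k^*}>0$. For every $l\in T$ we have $v_l>x_{k^*}=v_{j^*}$, so — as $j^*$ beats $l$ in Myerson's auction and ties break toward the higher value — $\varphi_{\D_l}(v_l)<\varphi_{\D_{j^*}}(v_{j^*})\le v_{j^*}=x_{k^*}$; by \cref{lem:TRres} (with shift $x_{k^*}$), $\varphi_{\D'_l}(v_l-x_{k^*})=\varphi_{\D_l}(v_l)-x_{k^*}<0$, hence $v'_l<r_{\D'_l}$ for all $l\in T$, i.e. $\vec v_T-\mathbf a$ is a non-allocation profile of $\D_{T,a}$. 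Using $\Pr[\D=\vec v]=\Pr(T,a)\cdot\Pr[\D_{T,a}=\vec v_T-\mathbf a]$ and the fact that the maximum-value buyer $j$ lies in $T$ (so $\max_{l\in T}v'_l=\max_i v_i-x_{k^*}$), the contribution of $\vec v_T-\mathbf a$ to $\Pr(T,a)\,\cs_0(\D_{T,a})$ equals the contribution of $\vec v$ to $\widehat{\cs}_{k^*}$.

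The one place that needs care — and the main obstacle — is the bookkeeping: the charging map $\vec v\mapsto(T,a,\vec v_T-\mathbf a)$ is not injective, since distinct contributing profiles with the same image differ only in coordinates outside $T$, all of which are at most $x_{k^*}<\max_i v_i$. Those profiles share the same $\max_i v_i$, and their total $\D$-probability is at most $\Pr(T,a)\cdot\Pr[\D_{T,a}=\vec v_T-\mathbf a]$ (the "outside-$T$" event is contained in the event defining $\Pr(T,a)$), so summing the per-profile charging inequality over all images still yields the displayed bound, and hence $\cs(\S_2)=\Omega(\widehat{\cs}_{k^*}/\K)$. Profiles in which no buyer realizes value exactly $x_{k^*}$ (so $a<x_{k^*}$) contribute $0$ to $\widehat{\cs}_{k^*}$ and can simply be discarded using $\cs_0(\D_{T,a})\ge 0$; everything else is a verbatim adaptation of the $\cs_{i^*}$ argument, with \cref{lem:TRres} supplying the virtual-value shift and \cref{cor2} the $O(\K)$ loss. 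Combined with \cref{eq:opt} and $\widehat{\cs}_{k^*}\ge\tfrac{1}{\K}\sum_k\widehat{\cs}_k$, this is exactly what drives the $O(\K^2)$ bound of \cref{thm:main}.
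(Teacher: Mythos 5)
Your proposal is correct and follows essentially the same route as the paper, which itself just says the argument is that of \cref{lem:csilog} with $v_{i^*}$ replaced by $x_{k^*}$ and \cref{lem: noniid0} replaced by \cref{cor2}. In fact you supply more detail than the paper does — in particular the observation that only profiles with $a=x_{k^*}$ contribute, and the explicit handling of the non-injective charging map over coordinates outside $T$ — and those details check out.
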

\begin{proof}
The proof is similar to the proof of Theorem~\ref{lem:csilog} and we omit details. By applying Corollary~\ref{cor2}  to the modified distribution $\D_{T,x_{k^*}}$, the consumer surplus of $\S_2$ is at least: 
\[
\Omega\left(\frac{1}{\K} \right) \sum_{T} \Pr(T) \cs_0(\D_{T,x_{k^*}})
\]
where $\Pr(T)$ is the probability that the set of all the buyers with value strictly larger than $x_{k^*}$ is $T$.  Furthermore using Eq.~(\ref{eq:csk}) and by replacing $v_{i^*}$ by $x_{k^*}$ in the proof in Theorem~\ref{lem:csilog}, we have 
\[
\widehat{\cs}_{k^*} \le \sum_{T} \Pr(T) \cs_0(\D_{T,x_{k^*}}).
\]
Combining the above two bounds proves the theorem.
\end{proof}

Since there are $\K$ possible values of $k^*$, the previous theorem directly implies:
\begin{corollary} 
\label{cor4}
If the values of the buyers are independent and have a common support of size $\K$, there is a signaling scheme that is an $O(\K^2)$-approximation to $\opt$.
\end{corollary}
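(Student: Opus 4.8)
The plan is to build the required scheme by combining three ``primitive'' signaling schemes, one targeting each additive term of the second decomposition in \cref{eq:opt}, namely $\opt=\cs(\D)+\cs_0+\sum_{k=1}^{\K}\widehat{\cs}_k$. Setting $k^*=\argmax_k\widehat{\cs}_k$ gives $\sum_k\widehat{\cs}_k\le\K\cdot\widehat{\cs}_{k^*}$, hence $\opt\le\cs(\D)+\cs_0+\K\cdot\widehat{\cs}_{k^*}$. So it suffices to produce, for each of $\cs(\D)$, $\cs_0$, and $\widehat{\cs}_{k^*}$, a polynomial-time scheme whose consumer surplus is an $O(\K)$-approximation to that quantity, and then merge them.

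The three primitives are: \emph{no signaling}, under which the seller runs Myerson's auction on $\D$ and obtains consumer surplus exactly $\cs(\D)$; the scheme $\S_0$ of \cref{alg:CS0_2}, for which \cref{cor2} gives $\cs(\S_0)=\Omega(\cs_0/\K)$; and the scheme $\S_2=\cut(\vec v,\D,x_{k^*})$ of \cref{alg:cut}, for which \cref{lem:csklog} gives $\cs(\S_2)=\Omega(\widehat{\cs}_{k^*}/\K)$. To merge, I would take the uniform mixture $\Z=\tfrac13(\text{no signaling})+\tfrac13\S_0+\tfrac13\S_2$; this is Bayes-plausible (the posterior averaged over all its signals is $\tfrac13\D+\tfrac13\D+\tfrac13\D=\D$), and since consumer surplus is linear over mixtures, $\cs(\Z)=\tfrac13\big(\cs(\D)+\cs(\S_0)+\cs(\S_2)\big)=\Omega\big(\cs(\D)+\cs_0/\K+\widehat{\cs}_{k^*}/\K\big)$. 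A short calculation using $\K\ge1$ and $\opt\le\cs(\D)+\cs_0+\K\widehat{\cs}_{k^*}$ then gives $\cs(\Z)=\Omega(\opt/\K^2)$, which is the claimed $O(\K^2)$-approximation. (Alternatively one can compute the consumer surplus of each primitive and run whichever is largest.)

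It remains to verify polynomial running time. The subroutines $\S_0$ and $\cut$ run in time $\mathrm{poly}(n,\K)$, and within $\cut$ the scheme $\S_0$ is only ever invoked on a product distribution over a subset of buyers whose supports have total size at most $\K$; Myerson's auction on $\D$ is likewise polynomial. The only delicate point is computing $k^*$, because $\widehat{\cs}_k$ in \cref{eq:csk} is written as a sum over exponentially many valuation profiles; I would show this sum factorizes by the identity of the winning buyer: for each buyer $i$ and value $x_k$ in its support, $\Pr[i\text{ wins with value }x_k]=f_{\D_i}(x_k)\prod_{j\ne i}\Pr[\varphi_{\D_j}(V_j)<\varphi_{\D_i}(x_k)]$, and conditioned on that event $\max_j V_j-x_k$ is the maximum of independent distributions each truncated by a virtual-value threshold, whose expectation is computable over the $O(n\K)$ relevant value choices. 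Summing over $i$ recovers $\widehat{\cs}_k$ in polynomial time, so $k^*$, and hence the whole scheme, is computable in $\mathrm{poly}(n,\K)$.

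I expect the only real obstacle to be conceptual rather than computational: one must exhibit a \emph{single} Bayes-plausible scheme that simultaneously recovers all three additive pieces of $\opt$, and the resolution — mixing the three primitives and exploiting linearity of consumer surplus — is clean but has to be spelled out. Finally, the full bound $O(\min(n\log n,\K^2))$ of \cref{thm:main} follows by the symmetric argument applied to the first decomposition $\opt=\cs(\D)+\cs_0+\sum_i\cs_i$ (using $\S_1=\cut(\vec v,\D,v_{i^*})$ with $i^*=\argmax_i\cs_i$, together with the $O(\log n)$ bounds of \cref{lem: noniid0} and \cref{lem:csilog} and $\sum_i\cs_i\le n\,\cs_{i^*}$), and then running the better of the two resulting schemes.
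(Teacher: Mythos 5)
Your proposal is correct and follows essentially the same route as the paper: the same decomposition $\opt=\cs(\D)+\cs_0+\sum_{k}\widehat{\cs}_k$ from \cref{eq:opt}, the same three primitives (no signaling, $\S_0$, and $\S_2=\cut(\vec{v},\D,x_{k^*})$), and the same $O(\K)$ guarantees from \cref{cor2} and \cref{lem:csklog}. The only cosmetic difference is that you combine the primitives by a uniform mixture and invoke linearity of $\cs$, whereas the paper implicitly runs the best of the three; both yield the same $O(\K^2)$ bound.
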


Combining~\cref{cor3,cor4} completes the proof of~\cref{thm:main}.

\section{Conclusion and Open Questions}
Note that our $\TB_t$ mechanism can be viewed as a screening procedure -- the intermediary only allows a fixed number of high-value bidders to bid. When the intermediary is an independent (typically, governmental) agency, such screening would map to ``pre-certifying'' bidders entering into private auctions. Similarly, when real-estate agencies have agents representing both sellers and buyers, they could (and often do) recommend a particular listing only to a chosen set of buyers based on better knowing their utilities. Therefore, as a side-effect, our procedures yield realistic mechanisms for an intermediary to increase surplus for both buyers and the seller. 

In terms of open questions, beyond improving the lower and upper bounds in our specific setting (both existence and computational), it would be interesting to explore the equilibria in optimal auctions when the intermediary can send different signals to the seller and to the buyers, much like in~\citep{bbm2017first,shen2019buyer}. At an even higher level, our work can be considered a special case of a larger problem of information intermediaries for multi-agent mechanisms. As mentioned before, in our case, the optimal auction is the mechanism, and the intermediary can change the information to this mechanism in order to achieve ``fairness'' between producer and consumer surplus. It would be interesting to explore the question of achieving fairness by selectively regulating information to a black-box optimizer or mechanism in more general settings.

\paragraph{Acknowledgment.} This work is supported by NSF grants ECCS-1847393, DMS-1839346, CNS-1955997, and CCF-2113798, and and ONR award N00014-19-1-2268.  Reza Alijani is now at Google. 


\newpage

\bibliographystyle{plainnat}
\bibliography{ref}

\begin{thebibliography}{22}
\providecommand{\natexlab}[1]{#1}
\providecommand{\url}[1]{\texttt{#1}}
\expandafter\ifx\csname urlstyle\endcsname\relax
  \providecommand{\doi}[1]{doi: #1}\else
  \providecommand{\doi}{doi: \begingroup \urlstyle{rm}\Url}\fi

\bibitem[Bergemann and Morris(2019)]{bergemann2019information}
Dirk Bergemann and Stephen Morris.
\newblock Information design: {A} unified perspective.
\newblock \emph{Journal of Economic Literature}, 57\penalty0 (1):\penalty0
  44--95, 2019.

\bibitem[Bergemann and Pesendorfer(2007)]{bergemann2007information}
Dirk Bergemann and Martin Pesendorfer.
\newblock Information structures in optimal auctions.
\newblock \emph{Journal of Economic Theory}, 137\penalty0 (1):\penalty0
  580--609, 2007.

\bibitem[Bergemann et~al.(2015)Bergemann, Brooks, and
  Morris]{bergemann2015limits}
Dirk Bergemann, Benjamin Brooks, and Stephen Morris.
\newblock The limits of price discrimination.
\newblock \emph{American Economic Review}, 105\penalty0 (3):\penalty0 921--57,
  2015.

\bibitem[Bergemann et~al.(2017)Bergemann, Brooks, and Morris]{bbm2017first}
Dirk Bergemann, Benjamin Brooks, and Stephen Morris.
\newblock First-price auctions with general information structures:
  Implications for bidding and revenue.
\newblock \emph{Econometrica}, 85\penalty0 (1):\penalty0 107--143, 2017.

\bibitem[Cai et~al.(2020)Cai, Echenique, Fu, Ligett, Wierman, and
  Ziani]{cai2020third}
Yang Cai, Federico Echenique, Hu~Fu, Katrina Ligett, Adam Wierman, and Juba
  Ziani.
\newblock Third-party data providers ruin simple mechanisms.
\newblock \emph{Proceedings of the ACM on Measurement and Analysis of Computing
  Systems}, 4\penalty0 (1):\penalty0 1--31, 2020.

\bibitem[Chakraborty and Harbaugh(2014)]{chakraborty2014persuasive}
Archishman Chakraborty and Rick Harbaugh.
\newblock Persuasive puffery.
\newblock \emph{Marketing Science}, 33\penalty0 (3):\penalty0 382--400, 2014.

\bibitem[Cr\'emer and McLean(1988)]{CM}
Jacques Cr\'emer and Richard~P. McLean.
\newblock Full extraction of the surplus in {B}ayesian and dominant strategy
  auctions.
\newblock \emph{Econometrica}, 56\penalty0 (6):\penalty0 1247--1257, 1988.

\bibitem[Cummings et~al.(2020)Cummings, Devanur, Huang, and
  Wang]{cummings2020algorithmic}
Rachel Cummings, Nikhil~R Devanur, Zhiyi Huang, and Xiangning Wang.
\newblock Algorithmic price discrimination.
\newblock In \emph{Proceedings of the 31$^\text{st}$ Annual ACM-SIAM Symposium
  on Discrete Algorithms, {SODA}}, pages 2432--2451. SIAM, 2020.

\bibitem[Dughmi(2017)]{dughmi2017algorithmic}
Shaddin Dughmi.
\newblock Algorithmic information structure design: {A} survey.
\newblock \emph{ACM SIGecom Exchanges}, 15\penalty0 (2):\penalty0 2--24, 2017.

\bibitem[Dughmi and Xu(2019)]{dughmi2019algorithmic}
Shaddin Dughmi and Haifeng Xu.
\newblock Algorithmic {b}ayesian persuasion.
\newblock \emph{SIAM Journal on Computing}, pages 16--68, 2019.

\bibitem[Dughmi et~al.(2016)Dughmi, Kempe, and Qiang]{dughmi2016persuasion}
Shaddin Dughmi, David Kempe, and Ruixin Qiang.
\newblock Persuasion with limited communication.
\newblock In \emph{Proceedings of the 17$^\text{th}$ ACM Conference on
  Economics and Computation, {EC}}, pages 663--680, 2016.

\bibitem[Elkind(2007)]{Elkind}
Edith Elkind.
\newblock Designing and learning optimal finite support auctions.
\newblock In \emph{Proceedings of the 18$^\text{th}$ Annual {ACM-SIAM}
  Symposium on Discrete Algorithms, {SODA}}, pages 736--745. {SIAM}, 2007.

\bibitem[Fu et~al.(2018)Fu, Liaw, Lu, and Tang]{hufu2017conceal}
Hu~Fu, Christopher Liaw, Pinyan Lu, and Zhihao~Gavin Tang.
\newblock The value of information concealment.
\newblock In \emph{Proceedings of the 29$^\text{th}$ Annual ACM-SIAM Symposium
  on Discrete Algorithms, {SODA}}, page 2533–2544, 2018.

\bibitem[Haghpanah and Siegel(2022)]{ParetoIS}
Nima Haghpanah and R.~Siegel.
\newblock The limits of multi-product price discrimination.
\newblock \emph{American Economic Review}, 2022.

\bibitem[Kamenica and Gentzkow(2011)]{kamenica2011bayesian}
Emir Kamenica and Matthew Gentzkow.
\newblock Bayesian persuasion.
\newblock \emph{American Economic Review}, 101\penalty0 (6):\penalty0
  2590--2615, 2011.

\bibitem[Ko and Munagala(2022)]{KoM22}
Shao-Heng Ko and Kamesh Munagala.
\newblock Optimal price discrimination for randomized mechanisms.
\newblock In \emph{Proceedings of the 23$^\text{rd}$ ACM Conference on
  Economics and Computation, {EC}}, 2022.

\bibitem[Mao et~al.(2022)Mao, Paes~Leme, and Wang]{MaoPW22}
Jieming Mao, Renato Paes~Leme, and Kangning Wang.
\newblock Interactive communication in bilateral trade.
\newblock In \emph{13th Innovations in Theoretical Computer Science Conference,
  {ITCS}}, pages 105:1--105:21, 2022.

\bibitem[Mitzenmacher and Upfal(2005)]{Mitzenmacher}
Michael Mitzenmacher and Eli Upfal.
\newblock \emph{Probability and Computing: Randomized Algorithms and
  Probabilistic Analysis.}
\newblock Cambridge University Press, 2005.

\bibitem[Myerson(1981)]{myerson}
Roger~B Myerson.
\newblock Optimal auction design.
\newblock \emph{Mathematics of Operations Research}, 6\penalty0 (1):\penalty0
  58--73, 1981.

\bibitem[Shen et~al.(2018)Shen, Tang, and Zeng]{shen2018closed}
Weiran Shen, Pingzhong Tang, and Yulong Zeng.
\newblock A closed-form characterization of buyer signaling schemes in monopoly
  pricing.
\newblock In \emph{Proceedings of the 17$^\text{th}$ International Conference
  on Autonomous Agents and MultiAgent Systems, {AAMAS}}, pages 1531--1539,
  2018.

\bibitem[Shen et~al.(2019)Shen, Tang, and Zeng]{shen2019buyer}
Weiran Shen, Pingzhong Tang, and Yulong Zeng.
\newblock Buyer signaling games in auctions.
\newblock In \emph{Proceedings of the 18$^\text{th}$ International Conference
  on Autonomous Agents and MultiAgent Systems, {AAMAS}}, pages 1591--1599,
  2019.

\bibitem[Xu et~al.(2015)Xu, Rabinovich, Dughmi, and Tambe]{xu2015exploring}
Haifeng Xu, Zinovi Rabinovich, Shaddin Dughmi, and Milind Tambe.
\newblock Exploring information asymmetry in two-stage security games.
\newblock In \emph{29$^\text{th}$ AAAI Conference on Artificial Intelligence,
  {AAAI}}, 2015.

\end{thebibliography}

\end{document}